\pgfplotsset{compat=newest}
\pgfplotsset{plot coordinates/math parser=false} 
\newlength\figureheight 
\newlength\figurewidth 
\newtheorem{lemma}{Lemma}
\newtheorem{remark}{Remark}
\newcommand{\el}{(\mathrm{el})}
\newcommand{\az}{(\mathrm{az})}
\newcommand{\BS}{\mathrm{BS}}
\newcommand{\UE}{\mathrm{UE}}
\newcommand{\A}{\mathrm{A}}
\newcommand{\D}{\mathrm{D}}
\newcommand{\R}{\mathbf{R}}
\newcommand{\vecR}{\mathbf{r}}
\newcommand{\p}{\mathbf{p}}
\newcommand{\Transpose}{^\mathsf{T}}
\newcommand{\Hermitian}{^\mathsf{H}}
\newcommand{\tr}{\mathrm{tr}}
\newcommand{\vect}{\mathrm{vec}}
\newcommand{\CCRB}{^{(\mathrm{CCRB})}}
\newcommand{\acos}{\mathrm{acos}}
\newcommand{\atan}{\mathrm{atan}}
\newcommand{\gettikzxy}[3]{%
  \tikz@scan@one@point\pgfutil@firstofone#1\relax
  \edef#2{\the\pgf@x}%
  \edef#3{\the\pgf@y}%
}
\newcommand{\cmark}{\ding{51}}%
\newcommand{\xmark}{\ding{55}}%
\newcommand{\showfigure}[1]{#1} % use this for revealing the solution 
\begin{document}
\bstctlcite{IEEEexample:BSTcontrol}
\title{MmWave 6D Radio Localization with a Snapshot Observation from a Single BS}
% \author{
% Mohammad A. Nazari\orcidlink{0000-0002-8786-1510},~\IEEEmembership{Student Member,~IEEE,} 
% Gonzalo Seco-Granados\orcidlink{0000-0003-2494-6872},~\IEEEmembership{Senior Member,~IEEE,} 
% Pontus Johannisson\orcidlink{0000-0003-1793-4451}, 
% and Henk Wymeersch\orcidlink{0000-0002-1298-6159},~\IEEEmembership{Senior Member,~IEEE}
\author{
Mohammad A. Nazari,~\IEEEmembership{Student Member,~IEEE,} 
Gonzalo Seco-Granados,~\IEEEmembership{Senior Member,~IEEE,} 
Pontus Johannisson, 
and Henk Wymeersch,~\IEEEmembership{Senior Member,~IEEE}
\thanks{
This work was supported under the Wallenberg AI, Autonomous Systems and Software Program (WASP), the Swedish Research Council under grant 2018-03701, the European Commission through the H2020 project Hexa-X (grant agreement no. 101015956), the Spanish Research Agency project PID2020-118984GB-I00, and by the ICREA Academia Programme.
\\
Mohammad A. Nazari and Henk Wymeersch are with the Department
of Electrical Engineering, Chalmers University of Technology, 41258 G\"oteborg, Sweden (e-mails: \{mohammad.nazari,henkw\}@chalmers.se).}
\thanks{Gonzalo Seco-Granados is with the Department of Telecommunications and Systems Engineering, Universitat Autonoma de Barcelona, 08193 Bellaterra, Barcelona, Spain (e-mail: gonzalo.seco@uab.cat).}
\thanks{Pontus Johannisson was with Chalmers University of Technology, and is now with Saab AB, 41276 G\"oteborg, Sweden (e-mail: pontus.johannisson@saabgroup.com).}}
% \markboth{IEEE TRANSACTIONS ON WIRELESS COMMUNICATIONS}%
% {Nazari \MakeLowercase{\textit{et al.}}: MmWave 6D Radio Localization with a Snapshot Observation from a Single BS}
\newcommand{\ColorBlue}[1]{\textcolor{black}{#1}}
\maketitle
\begin{abstract}
Accurate and ubiquitous localization is crucial for a variety of applications such as logistics, navigation, intelligent transport, monitoring, control, {and also for the benefit of communications}. Exploiting {\ac{mmWave}} signals in 5G and Beyond 5G systems can provide accurate localization with limited infrastructure. We consider the single %base station 
{\ac{BS}} localization problem and extend it to 3D position and 3D orientation estimation of an unsynchronized multi-antenna %user, 
{\ac{UE}}, using downlink {\ac{MIMO-OFDM}} signals. Through a Fisher information analysis, we show that the problem is often identifiable, provided that there is at least one %additional 
multipath component in addition to the {\ac{LoS}}, even if the position of corresponding {\ac{IP}} is a priori unknown. Subsequently, we pose a {\ac{ML}} estimation problem, to jointly estimate the 3D position and 3D orientation of the \ac{UE} as well as several nuisance parameters (the \ac{UE} clock offset and the positions of \acp{IP} corresponding to the multipath). The \ac{ML} problem is a high-dimensional non-convex optimization problem over a product of Euclidean and non-Euclidean manifolds. To avoid complex exhaustive search procedures, we propose a geometric initial estimate of all parameters, which reduces the problem to a 1-dimensional search over a finite interval. Numerical results show the efficiency of the proposed ad-hoc estimation, whose gap to the \ac{CRB} is tightened using the \ac{ML} estimation.
\end{abstract}
\begin{IEEEkeywords}
Localization, Orientation estimation, Mapping, Synchronization, Single anchor localization.
\end{IEEEkeywords} 
\acresetall
\section{Introduction} 
\Ac{mmWave}
is the key enabling component of the fifth generation (5G) and beyond 5G (B5G) communication systems, which empowers the implementation of large antenna arrays for spatial multiplexing and provides massive bandwidths for high data rates \cite{rappaport2013millimeter}. Despite the favorable properties of \ac{mmWave}, undesired effects, such as severe path loss and limited channel rank, challenge the technology to come up with advanced beamforming and resource allocation schemes \cite{roh2014millimeter}. To achieve this, information on the 3D location of mobile users can provide important side information, so that the \ac{BS} can adjust its precoders to beam towards the \ac{UE} \cite{di2014location}. Similarly, the \ac{UE} can adjust its combiners, based on its 3D orientation, to maximize the received \ac{SNR} \cite{alammouri2019hand}. 
\ColorBlue{While \ac{UE} position estimation has been the main driver in 5G mmWave \cite{shastri2022review}, there are many applications that need 6D information (3D position and 3D orientation, also known as the pose in robotics \cite{thrun2002probabilistic}):  the position and heading of vehicles is needed in intelligent transport systems for driving assistance applications and platooning \cite{bartoletti2021positioning}; in assisted living facilities, the pose of residents is informative about their health status \cite{HighAccuracyLocalizationforAssistedLiving--Witrisal_others}; search-and-rescue operations involving UAVs require accurate and timely pose information for control, self-localization, and victim recovery \cite{Albanese2021responders}. Moreover, 6D localization is expected to be of importance in 6G, with applications such as augmented reality, robot interaction, and digital twins \cite{behravan2022positioning}.}

The source of this 6D information, whose estimation is referred to as \emph{6D localization} in this paper, can be either external or internal to the communication systems. The external 6D localization systems can build on %wide variety of mixed-technology solutions \cite{IndoorTracking--D.Dardari_others}, 
a mixed-technology solution, such as the combination of the \ac{GNSS} (for 3D position) and \ac{IMU} (for 3D orientation) \cite{wahlstrom2017smartphone}. However, such solutions can be inefficient in cost, complexity, or coverage. For example, \ac{GNSS} might fail in indoor environments or urban canyons, while \acp{IMU} suffer from drifts and accumulative errors \cite{IndoorTracking--D.Dardari_others}. The alternative is to exploit the already deployed cellular communication infrastructure for 6D localization, and feed the 6D information to the communication system internally.

Prior to 5G, the majority of localization schemes in cellular networks relied on multiple synchronized base stations and \ac{TDoA} measurements \cite{SurveyCellularRadioLocalization--Rosado_others_G.Seco-Granados}. With the introduction of new dedicated \acp{PRS} and measurements in 3GPP release 16\cite{tr:38855-3gpp19}, a combination of angle and delay measurements has become possible \cite{Positioning5GNetworks--Dwivedi_others}. Because of the high resolution in both temporal and angular domains, thanks to $400$ MHz bandwidth at \ac{mmWave} and large antenna arrays, respectively, multipath components can be better resolved \cite{ErrorBoundsfor3DLocalization--Z.Abu-Shaban_others_G.Seco-Granados_H.Wymeersch}, leading to new positioning architectures. A great deal of research effort has been devoted to \emph{multi-\ac{BS}\footnote{Recent developments relying on \acp{RIS} are considered as multi-BS solutions, since a \ac{RIS} acts as an additional multi-antenna \ac{BS} in localization \cite{bjornson2021reconfigurable,RISforLocalization--A.Elzanaty_others_M.S.Alouini}. For similar reasons, approaches with a single moving \ac{BS} are also equivalent to multi-\ac{BS} localization \cite{Albanese2021responders}.} localization} approaches, which exploit these novel features \cite{DirectLocforMassiveMIMO--N.Garcia_H.Wymeersch_R.Larsson_others,PositionLocforFuturisticCommunications--O.Kanhere_T.Rappaport,JointCommAndLocMmWave--G.Kwon_others_M.Win,3DOrientationEstimation--M.Nazari_G.Seco-Granados_P.Johannisson_H.Wymeersch}. Simultaneously, there has been a paradigm shift towards \emph{single-\ac{BS} localization} solutions  \cite{SingleAnchorLocAndSynchFullDuplexAgents--Y.Liu_Y.Shen_M.Z.Win}, which are attractive because they require only minimal infrastructure and remove the need for inter-\ac{BS} synchronization. 
The enabler of single-\ac{BS} localization is the ability to turn multipath from foe to friend \cite{HighAccuracyLocalizationforAssistedLiving--Witrisal_others}: in contrast to previous beliefs that \ac{NLoS} components have unfavorable effects on positioning, they contribute to the identifiability and accuracy of localization in \ac{mmWave} {\ac{MIMO}} systems, provided there is sufficient temporal and spatial resolution. 
%This concept has been exploited both in tracking and in snapshot-based localization. In tracking, the \ac{UE} location and orientation are inferred over time, combining measurements with knowledge of the \ac{UE} dynamics in Bayesian filtering \cite{HighAccuracyLocalizationforAssistedLiving--Witrisal_others,leitinger2019belief,kim20205g}. In snapshot-based localization, the entire 6D state vector is inferred from the signals received during a short time interval, thereby avoiding the need of modeling the time evolution of the parameters. This renders the corresponding positioning methods similar to all 5G snapshot-based positioning techniques and thus easy to integrate into standards.
This concept has been exploited in recent advances in localization. In particular, \cite{PositionOrientationEstimation--A.Shahmansoori_others_G.Seco-Granados_H.Wymeersch} derived both performance bounds and a method for 2D position and 1D orientation estimation, with synchronized \ac{UE} and \ac{BS}, based on a compressed sensing algorithm. The estimation method was refined via an atomic norm minimization approach in \cite{li2022joint}, where the performance is not limited by quantization error and grid resolution. For the same scenario of 2D position and 1D orientation, \cite{NLoSforPositionOrientationEstimation--R.Mendrzik_H.Wymeersch_G.Bauch} showed that each \ac{NLoS} path gives rise to a rank-1 \ac{FIM} so that the \ac{UE} can be localized with the \ac{LoS} and a single \ac{NLoS} path, or with 3 \ac{NLoS} paths when the \ac{LoS} is obstructed. The case of obstructed \ac{LoS} was also treated in \cite{wen20205g}, without the requirement for synchronization, {but still with 2D position and accordingly a single orientation angle}. This concept was further extended to 3D position and 2D orientation estimation under perfect synchronization in \cite{ErrorBoundsfor3DLocalization--Z.Abu-Shaban_others_G.Seco-Granados_H.Wymeersch}, where the asymptotic case with orthogonal multipath components was studied. A further generalization was considered in \cite{SingleAnchorLocalizationLimits--A.Guerra_others_D.Dardari}, focusing on a direct localization approach for the massive array regime, considering a 3D position and 2D orientation estimation and a synchronized user. The more practical case considering the synchronization error as well as the Doppler shift-when the transmitter or receiver is moving-was addressed in \cite{PerformanceLimitsSingleAnchorMillimeterWavePositioning--A.Kakkavas_others_J.A.Nossek}, where the authors performed the \ac{FIM} analysis for 2D position and 1D orientation estimation of a mobile user. 
\ColorBlue{In \cite{s21041178}, the \ac{LoS} 3D positioning problem using the 5G uplink channel sounding reference signals is considered, but the \ac{UE} is synchronized and single-antenna, for which no orientation is defined. The closest work  we identified is \cite{Nuria}, where a hybrid model/data-driven approach for the 3D position estimation of a \ac{UE} is proposed. The model-based approach is based on the underlying geometry and the angle between each two arrival directions being independent of \ac{UE} orientation. However, no algorithm is proposed for orientation estimation or estimating the \ac{IP} locations.}

%The more practical case of an unsynchronized user was addressed in \cite{PerformanceLimitsSingleAnchorMillimeterWavePositioning--A.Kakkavas_others_J.A.Nossek}, which performed the \ac{FIM} analysis for 2D position and 1D orientation estimation of a mobile user, considering both the synchronization error as well as the Doppler shift, when the transmitter or receiver is moving.

\ColorBlue{The extension of already existing methods to 3D position and 3D orientation case, however, is not trivial,  since the positions (both \ac{UE} and \acp{IP}) are not constrained to lie in a plane. The orientation also introduces more degrees of freedom in the geometrical equations of channel parameters, and hence, increases the complexity of search-based estimation algorithms. In addition, the analysis of fundamental lower bounds for such a general case has not yet been conducted.}
The 6D localization problem was introduced in the \ac{mmWave} context in \cite{MassiveMIMOIsReality}, but has not yet been further developed. However, it has been studied in other settings, e.g., pose estimation in robotics \cite{thrun2002probabilistic} and visible light positioning \cite{SPOforVLC--Shen_others_Steendam}. In \cite{SPOforVLC--Shen_others_Steendam}, a simultaneous 3D position and 3D orientation estimation using the \ac{RSS} for a visible light system containing multiple light emitting diodes and photodiodes is considered, and an approximate solution using direct linear transformation method is proposed. This solution is further refined using iterative algorithms for ML estimation. On a parallel track, the problem is addressed under the label of rigid body localization in \cite{RBL_using_ArrivalTime_and_Doppler--J.Jiang_others_K.C.Ho,RBL_using_SensorNetworks--S.Chepuri_others_A.Veen,AngleBasedRBL--Y.Wang_others_K.C.Ho_L.Huang}, where the approach is to mount sensors with a known topology on the body. The positions of the sensors in the global coordinate frame are related to the position of the rigid body and its orientation. The sensors then form a wireless sensor network, and the position, as well as the orientation of the body, is estimated using time and/or angle measurements from sensors.
%However, it has been studied in other settings, e.g., visible light positioning \cite{SPOforVLC--Shen_others_Steendam} and rigid body localization \cite{RBL_using_ArrivalTime_and_Doppler--J.Jiang_others_K.C.Ho,RBL_using_SensorNetworks--S.Chepuri_others_A.Veen,AngleBasedRBL--Y.Wang_others_K.C.Ho_L.Huang}, and of course in robotics \cite{thrun2002probabilistic}. In \cite{SPOforVLC--Shen_others_Steendam}, a simultaneous 3D position and 3D orientation estimation using the \ac{RSS} for a visible light system containing multiple LEDs and multiple photodiodes is considered, and an approximate solution using direct linear transformation method is proposed. This solution is further refined using iterative algorithms for ML estimation. On a parallel track, the problem is addressed  under the label of \ac{RBL} in \cite{RBL_using_ArrivalTime_and_Doppler--J.Jiang_others_K.C.Ho,RBL_using_SensorNetworks--S.Chepuri_others_A.Veen,AngleBasedRBL--Y.Wang_others_K.C.Ho_L.Huang}. Their approach is to mount sensors with a known topology on the body. The positions of the sensors in the global coordinate frame are related to the position of the rigid body and its orientation. The sensors then form a wireless sensor network, and the position as well as the orientation of the body is estimated using time and/or angle measurements from sensors.

In this paper, we consider a single-\ac{BS} localization scenario, where the downlink \ac{mmWave} signal from a multi-antenna \ac{mmWave} base station is used to estimate the 3D-position and 3D-orientation of a \ac{UE} in \ac{LoS} to the \ac{BS}. We evaluate the lower bound on estimation error variance of position and orientation of the \ac{UE}, positions of the incidence points, and the clock offset, by deriving the constrained \acf{CRB} of all unknowns. This reveals that the problem is generally identifiable under a single \ac{NLoS} path. To solve the corresponding high-dimensional \ac{ML} problem, we propose an efficient solution, combining a geometric ad-hoc estimator to initialize a gradient descent over a product of manifolds. This solution is shown to attain the corresponding \acp{CRB}. The proposed approach is related to the literature in Table \ref{tab:related-work}.
The main contributions of this work are the following:
\ColorBlue{
\begin{itemize}
\item \textbf{6D localization algorithm:} We pose a high-dimensional \ac{ML} estimation problem over a product of Euclidean and non-Euclidean manifolds, given the conditional probability distributions of \ac{AoA}, \ac{AoD}, and \ac{ToA} measurements. The parameters of such distributions are obtained, and the \ac{RMSE} of the \ac{ML} estimation is shown to attain the lower bounds. The \ac{ML} problem is solved by gradient descent, iterating between the various manifolds, starting from a good initial solution. 
\item \textbf{A low-complexity estimator} We propose and evaluate a low-complexity ad-hoc estimation algorithm to initialize the solution of the \ac{ML} problem, which reduces the high-dimensional problem of estimating all unknowns, to a 1-dimensional search over a finite interval combined with closed-form expressions. This recovers not only the 6D \ac{UE} state, but also the \ac{UE} clock bias and \ac{IP} locations. 
\item \textbf{Fisher information and numerical analysis:} We obtain the lower bound on the estimation of the 6D user state and its clock bias, as well as the map of the environment, i.e., the positions of incidence points, and then we evaluate the impact of bandwidth, number of antennas, and number of incidence points. The analysis of the bounds indicates that in most cases a single incidence point with a priori unknown location is sufficient to render the problem identifiable, though certain configurations require several incidence points. We also evaluate a low-complexity ad-hoc  and the ML 6D localization algorithms and demonstrate that it can attain the corresponding performance bounds.
\end{itemize}}

\begin{table}%[t]
    \centering
    \caption{Overview of the related work.}
    \label{tab:related-work}
    % \rowcolors{3}{gray!10}{white}
    \resizebox{\columnwidth}{!} {
    \begin{tabular}{|c|c|c|c|c|c|c|}
        \hline 
        \textbf{Ref.}  & \textbf{Pos.}  & \textbf{Ori.}  & \textbf{Clock} & \textbf{Bound}  & \textbf{Method} & \textbf{without \ac{LoS}}   \tabularnewline
        \hline 
        \cite{PositionOrientationEstimation--A.Shahmansoori_others_G.Seco-Granados_H.Wymeersch} & 2D & 1D & \xmark & \cmark & \cmark & \cmark\\
        \hline 
        \cite{li2022joint} & 2D & 1D & \xmark & \cmark & \cmark & \xmark\\
        \hline 
        \cite{NLoSforPositionOrientationEstimation--R.Mendrzik_H.Wymeersch_G.Bauch} & 2D & 1D & \xmark & \cmark & \xmark & \cmark\\
        \hline 
        \cite{wen20205g} & 2D & 1D & \cmark & \xmark & \cmark & \cmark\\
        \hline 
        \cite{ErrorBoundsfor3DLocalization--Z.Abu-Shaban_others_G.Seco-Granados_H.Wymeersch} & 3D & 2D & \xmark & \cmark & \xmark & \xmark\\
        \hline 
        \cite{SingleAnchorLocalizationLimits--A.Guerra_others_D.Dardari} & 3D & 2D & \xmark & \cmark & \xmark & \xmark\\
        \hline 
        \cite{PerformanceLimitsSingleAnchorMillimeterWavePositioning--A.Kakkavas_others_J.A.Nossek} & 2D & 1D & \cmark & \cmark & \cmark & \xmark\\
        \hline 
        \cite{s21041178} & 3D & \xmark & \xmark & \xmark & \cmark & \xmark\\
        \hline 
        \cite{Nuria} & 3D & 3D & \cmark & \xmark & only pos. & \xmark\\
        \hline
        this work & 3D & 3D & \cmark & \cmark & \cmark & \xmark\\
        \hline 
    \end{tabular}}\vspace{-5mm}
\end{table}

The rest of the paper is organized as follows. Section \ref{sec:system model} describes the system model and provides the definitions. In section \ref{sec:Methodology}, we state the ML estimation problem, followed by a low-complexity estimation algorithm for obtaining initial solutions for iterative routines in Section \ref{sec:Ad-hoc Initial Estimate}. Subsequently, in Section \ref{sec:FIM Analysis}, the Fisher information analysis is done, and error bounds are derived. Section \ref{section:result} presents numerical results, and finally, conclusion remarks are given in Section \ref{section:conclusion}.  

\subsubsection*{Notations} We denote scalars, vectors, and matrices by italic, bold lowercase, and bold uppercase letters, e.g., $x$, $\mathbf{x}$ and $\mathbf{X}$, respectively. The element of matrix $\mathbf{X}$ in the $i$-th row and $j$-th column is indicated by $[\mathbf{X}]_{i,j}$. We also use $[\mathbf{X}]_{i:k,j:l}$ to refer to the sub-matrix lying between rows $i$ to $k$ and columns $j$ to $l$ of $\mathbf{X}$. The identity matrix of size $N$ is shown by $\mathbf{I}_N$, whereas $\mathbf{1}_N$ and $\mathbf{0}_N$ indicate all-ones and all-zeros vectors of size $N$. While $\mathrm{diag}(\mathbf{x})$ is a diagonal matrix whose non-zero elements are given by $\mathbf{x}$, $\mathrm{diag}(\mathbf{X})$ is a vector composed of the diagonal elements of $\mathbf{X}$. Similarly, $\mathrm{blkdiag}(\mathbf{X},\mathbf{Y})$ is a block-diagonal matrix made of $\mathbf{X},\mathbf{Y}$. In order to show the expectation, trace, and vectorization operators, we use $\mathbb{E}[\cdot]$, $\tr[\cdot]$, and $\vect[\cdot]$, respectively. The transpose, and hermitian operators are symbolized using $[\cdot]\Transpose$ and $[\cdot]\Hermitian$; and we consider $\odot$ and $\oslash$ in conjunction with pointwise product and division, all in the given order. The Euclidean and Frobenius norms are denoted by $\Vert\cdot\Vert$ and $\Vert\cdot\Vert_{\text{F}}$, respectively.%, and finally, $[\hat{.}]$ is reserved for the measured or estimated parameters.

\section{System Model} \label{sec:system model}
We consider a downlink \ac{mmWave} \ac{MIMO} scenario consisting of a BS equipped with an arbitrary array of $N_{\BS}$ antennas, and a UE equipped with an arbitrary array of $N_{\UE}$ antennas, as shown in Fig.~\ref{fig:schematic of system model with LoS and NLoS paths}. Without loss of generality, both \ac{BS} and \ac{UE} are considered to have a single \ac{RF} chain.
%{\ColorBlue{\footnote{\ColorBlue{We consider precoding and combining vectors to allow for single \ac{RF} chain and multiple antennas. In case of multiple \ac{RF} chains, precoding and combining matrices are used.}}}. 

\subsection{Geometric Model} \label{subsec:geometric model}
The BS antenna array is centered at the {known} position $\p_{\BS}=[p_{\BS,x},p_{\BS,y},p_{\BS,z}]\Transpose\in\mathbb{R}^3$ with a {known} orientation, while UE antenna array is centered at the \emph{unknown} position $\p_{\UE}=[p_{\UE,x},p_{\UE,y},p_{\UE,z}]\Transpose\in\mathbb{R}^3$ with an \emph{unknown} orientation. The paths between BS and UE include the \ac{LoS}, as well as $M \ge 1$ \emph{resolvable single-bounce} \ac{NLoS} paths,%\footnote{\ColorBlue{In practice, the  multi-bounce paths may also exist. We assume that those paths are identified and excluded from the localization scheme. This can be done using methods from the literature, such as  \cite{li2022iterative}, where an iterative method based on the \ac{GLRT} and change-point detection is proposed to distinguish single-bounce and multi-bounce paths.}}
each corresponding to a scatterer or a reflecting point, represented by an \ac{IP} at \emph{unknown} position $\p_m=[p_{m,x},p_{m,y},p_{m,z}]\Transpose\in\mathbb{R}^3$, $m=1,\ldots,M$. All positions are given in a global coordinate frame as the reference, whose axes are labeled as $x$, $y$, and $z$ (see Fig.~\ref{fig: global coordinate system and reference orientation}). %The rotations are also expressed with respect to the same reference. 
% We use the terms reference coordinate frame and global coordinate frame interchangeably.
\ColorBlue{\begin{remark}[On the \ac{NLoS} model]
While we assume single-bounce \ac{NLoS} paths, in practice   multi-bounce paths may also exist. Those paths can be identified and excluded from the localization scheme, e.g., using methods such as  \cite{li2022iterative} (where an iterative method based on the \ac{GLRT} and change-point detection is proposed to distinguish single-bounce and multi-bounce paths) or \cite{hong2022joint} (which progressively identifies the \ac{LoS} path, single bounce, double bounce, and higher bounce paths). We also assume that all paths (including the \ac{LoS} and  \ac{NLoS} paths) are resolvable. This is generally correct in the mmWave scenarios, due to the large bandwidth and antenna apertures. 
\end{remark}}

The orientation of BS and UE describe how their antenna arrays with respect to the reference orientation are arranged. 
% In particular, we define a reference orientation for \ac{BS} and \ac{UE} arrays according to the axes of the global coordinate frame, where the array is parallel to the global XY-axes, as shown in Fig.~\ref{fig: global coordinate system and reference orientation}. 
We consider rotation matrices for describing the orientations\footnote{Orientations in 3D can be represented by \emph{quaternions}, \emph{Euler angles}, and \emph{rotation matrices} \cite{TutorialOnSE(3)--Blanco}. We adopt rotation matrices.}, so that BS and UE orientations determine local coordinate frames, respectively described by $3 \times 3$ rotation matrices $\R_\BS$ and $\R_\UE$, in the special orthogonal group $\mathrm{SO}(3)$. 
In particular, we define a reference orientation where the axes are in the same direction as those of the global coordinate frame, as shown in Fig.~\ref{fig: global coordinate system and reference orientation}. 
The local coordinate frames are thus obtained by rotating the arrays in reference orientation through $\R_\BS$ and $\R_\UE$ (see Fig.~\ref{fig: Local coordinate system and definition of angles}). Therefore, the given vector $\mathbf{y}$ in the global coordinate system is corresponding to $\mathbf{y}_\BS = \R_\BS\Transpose \mathbf{y}$ and $\mathbf{y}_\UE = \R_\UE\Transpose \mathbf{y}$ in \ac{BS} and \ac{UE} coordinate frames, respectively. We note that $\R_\BS$ is known, while $\R_\UE$ is unknown. The 6D localization problem refers to estimation of $\p_{\UE}$ and $\R_\UE$. 

\begin{figure}[t!]
 \centering
    \begin{tikzpicture}
    \node(image)[anchor=south west] {\includegraphics[height=3.3cm]{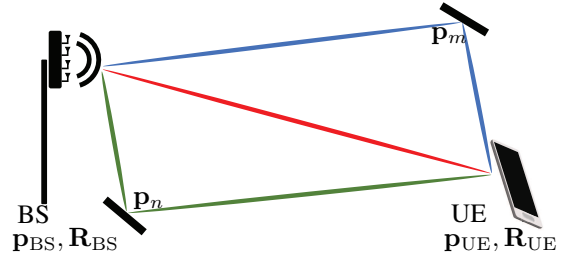}};
    \gettikzxy{(image.north east)}{\ix}{\iy};
    % \draw[help lines] (0,0) grid (\ix,\iy);
    %************************************
    \node at (6.7/8.4*\ix,3.4/4.2*\iy){$\p_m$};
    \node at (2/8.4*\ix,0.8/4.2*\iy){$\p_n$};
    \node at (0.2/8.4*\ix,0.6/4.2*\iy){\ac{BS}};
    \node at (0.7/8.4*\ix,0.2/4.2*\iy){$\p_\BS,\R_\BS$};
    \node at (7/8.4*\ix,0.6/4.2*\iy){\ac{UE}};
    \node at (7.5/8.4*\ix,0.2/4.2*\iy){$\p_\UE,\R_\UE$};
    %************************************
    \end{tikzpicture}
    \caption{Schematic of system model with a UE at unknown 3D position and unknown 3D orientation, where signals are received from \ac{LoS} and \ac{NLoS} paths. The UE and BS are not synchronized.}
    \label{fig:schematic of system model with LoS and NLoS paths}
    \vspace{-5mm}
\end{figure}
\vspace{-4mm}
\subsection{Signal and Channel Model}
%\vspace{-2mm}
We consider downlink pilot transmission of $K$ \ac{OFDM} symbols using $N_f$ subcarriers over a \ac{MIMO} channel, for the purpose of snapshot position and orientation estimation\footnote{We assume a slow fading channel, which does not vary over the duration of the pilot transmission.}. Considering $M+1$ (resolvable) paths indexed by $m=0,1,\ldots,M$ ($m=0$ is the \ac{LoS} path between the \ac{BS} and the \ac{UE}, which is assumed to be present, and $m\neq 0$ correspond to \ac{NLoS} paths), the $k$-th received \ac{OFDM} symbol, $k=1,\ldots,K$, over subcarrier $n=1,\ldots,N_f$, is given by \ColorBlue{\cite{huang2018iterative}}
\begin{align} \label{eq: signal model}
    y_{k,n} = \bm{w}_k\Hermitian \sum_{m=0}^{M} \mathbf{H}_{m} \bm{f}_k x_{k,n}\mathrm{e}^{-j2\pi (n-1) \Delta_f \tau_m} + \bm{w}_k\Hermitian\bm{n}_{k,n},
\end{align}
in which $x_{k,n}$ is a unit-modulus pilot symbol with energy $E_\mathrm{s}=P_{\mathrm{TX}}/B$, where $P_{\mathrm{TX}}$ is the transmit power and $B=N_f \Delta_f$ is the total bandwidth for the subcarrier spacing $\Delta_f$; $\bm{f}_k$ is the precoding vector at \ac{BS}, $\bm{w}_k$ is the combining vector at \ac{UE}, with $\Vert\bm{f}_k\Vert=\Vert\bm{w}_k\Vert=1$, $\forall k$; and $\bm{n}_{k,n} \in \mathbb{C}^{N_\UE}$ is the complex zero-mean \ac{AWGN} with covariance matrix $n_0 N_0 \mathbf{I}_{N_\UE}$, where $N_0$ is the noise \ac{PSD}, and $n_0$ is the \ac{UE} noise figure. 
Furthermore, each path $m$ is characterized by a \ac{ToA} $\tau_m$ and a channel matrix $ \mathbf{H}_m \triangleq h_{m}~ \mathbf{a}_{\UE}(\bm{\theta}_{\A,m}) \mathbf{a}_{\BS}\Transpose(\bm{\theta}_{\D,m})$, where $h_{m}=h_{\mathrm{R},m}+jh_{\mathrm{I},m}$ is the complex channel gain; $\mathbf{a}_{\UE}(\bm{\theta}_{\A,m})$ and $\mathbf{a}_{\BS}(\bm{\theta}_{\D,m})$ are the array response vectors, where \smash{$\bm{\theta}_{\A,m} = [\theta_{\A,m}^{\az},\theta_{\A,m}^{\el}]\Transpose$} and \smash{$\bm{\theta}_{\D,m} = [\theta_{\D,m}^{\az},\theta_{\D,m}^{\el}]\Transpose$} show the \ac{AoA} and \ac{AoD} in azimuth and elevation (as defined in Fig.
\ref{fig: Local coordinate system and definition of angles} and Section \ref{sec: definitions}), respectively. The array response vectors are given by 
\begin{subequations}
\label{eq:responses}
\begin{align}
    [\mathbf{a}_{\UE}(\bm{\theta}_{\A,m})]_{n} &=\exp \Big( j\frac{2\pi}{\lambda}[\bm{\Delta}_\UE]_{1:3,n}\Transpose \mathbf{d}(\bm{\theta}_{\A,m})\Big), \label{eq:UE array response}\\
    [\mathbf{a}_{\BS}(\bm{\theta}_{\D,m})]_{n} &=\exp\Big( j\frac{2\pi}{\lambda}[\bm{\Delta}_\BS]_{1:3,n}\Transpose \mathbf{d}(\bm{\theta}_{\D,m})\Big), \label{eq:BS array response}
\end{align}
\end{subequations}
in which $\lambda$ is the wavelength at the carrier frequency $f_c$; and $\bm{\Delta}_\UE \triangleq  [\mathbf{x}_{\UE,1},\ldots,\mathbf{x}_{\UE,N_\UE}] \in \mathbb{R}^{3\times N_\UE}$ and $\bm{\Delta}_\BS \triangleq [\mathbf{x}_{\BS,1},\ldots,\mathbf{x}_{\BS,N_\BS}] \in \mathbb{R}^{3\times N_\BS}$ contain coordinates of antenna elements in reference orientation, with respect to the \ac{UE} and the \ac{BS} coordinate frames (i.e., 3D displacements from the phase center), respectively. In \eqref{eq:responses}, we have introduced 
\begin{align} \label{eq: unit-norm direction}
    \mathbf{d}(\bm{\phi}) = [\sin\phi^{\el} \cos\phi^{\az},\sin\phi^{\el} \sin\phi^{\az},\cos\phi^{\el}]\Transpose,
\end{align}
which describes the unit-norm direction of arrival (for $\bm{\phi}=\bm{\theta}_{\A,m}$) and unit-norm direction of departure (for $\bm{\phi}=\bm{\theta}_{\D,m}$) for path $m$, in \ac{UE} and \ac{BS} coordinate frames, respectively.
\vspace{-6mm}
\subsection{Relation Between Geometric Model and Channel Model} \label{sec: definitions}
\allowdisplaybreaks
Corresponding to the \acp{AoA}, the unit-norm arrival directions $\mathbf{d}_{\A,m}=\mathbf{d}(\bm{\theta}_{\A,m})$ are given, in the 
\ac{UE} coordinate frame, by
\begin{align}
    \mathbf{d}_{\A,m} = \begin{cases}
        \R_\UE\Transpose(\p_\BS - \p_\UE)/\Vert\p_\BS - \p_\UE\Vert & m = 0 \\
        \R_\UE\Transpose(\p_m - \p_\UE)/\Vert\p_m - \p_\UE\Vert & m\neq0,
    \end{cases}
\end{align}
which define the \acp{AoA} as $\theta_{\A,m}^{\az} = \atan 2([\mathbf{d}_{\A,m}]_2,[\mathbf{d}_{\A,m}]_1)$ and \smash{$\theta_{\A,m}^{\el} = \acos ([\mathbf{d}_{\A,m}]_3)$}, where $\acos$ is the inverse cosine and $\atan2$ is the four-quadrant inverse tangent. Similarly, using the unit-norm departure directions $\mathbf{d}_{\D,m}=\mathbf{d}(\bm{\theta}_{\D,m})$ given by
\begin{align}
    \mathbf{d}_{\D,m} = \begin{cases}
        \R_\BS\Transpose(\p_\UE - \p_\BS)/\Vert\p_\UE - \p_\BS\Vert & m = 0\\
        \R_\BS\Transpose(\p_m - \p_\BS)/\Vert\p_m - \p_\BS\Vert & m\neq0,
    \end{cases}
\end{align}
in BS coordinate frame, the \acp{AoD} are determined as $\theta_{\D,m}^{\az} = \atan 2([\mathbf{d}_{\D,m}]_2,[\mathbf{d}_{\D,m}]_1)$ and \smash{$\theta_{\D,m}^{\el} = \acos ([\mathbf{d}_{\D,m}]_3)$}. Fig.~\ref{fig: Local coordinate system and definition of angles} shows how AoDs and AoAs and their corresponding directions are defined. Note that the  arrows corresponding to the arrival directions point towards the \ac{BS}/\acp{IP}, and not the \ac{UE}.

Finally, denoting the propagation speed by $c$ and the \emph{unknown} clock bias between the \ac{UE} and the \ac{BS} by $b$, one can express \acp{ToA} as
\begin{align}
    \tau_m=\begin{cases}
        \hfil \Vert\p_\UE-\p_\BS\Vert/c+b & m=0\\
        (\Vert\p_m-\p_\BS\Vert+\Vert\p_\UE-\p_m\Vert)/c+b & m \neq 0.
    \end{cases}
\end{align}
\ColorBlue{The aggregated vectors $\bm{\theta}_\A = [\bm{\theta}_{\A,0}\Transpose,\bm{\theta}_{\A,1}\Transpose,\ldots,\bm{\theta}_{\A,M}\Transpose]\Transpose$,  $\bm{\theta}_\D = [\bm{\theta}_{\D,0}\Transpose,\bm{\theta}_{\D,1}\Transpose,\ldots,\bm{\theta}_{\D,M}\Transpose]\Transpose$,  $\bm{\tau}=[\tau_0,\ldots,\tau_M]\Transpose$, and $\bm{\eta}=[\bm{\theta}_\A\Transpose,\bm{\theta}_\D\Transpose,\bm{\tau}\Transpose]\Transpose$ are defined to be used later.}

\begin{figure}[t!]
    \centering
    \begin{subfigure}[b]{0.45\linewidth}
        \centering
        \begin{tikzpicture}
            \node(image)[anchor=south west] {\includegraphics[height=3cm]{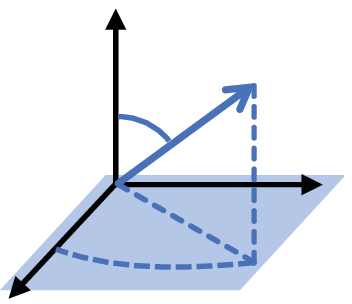}};
            \gettikzxy{(image.north east)}{\ix}{\iy};
            % \draw[help lines] (0,0) grid (\ix,\iy);
            \node at (0.3/4*\ix,0.6/3.2*\iy){$x$};
            \node at (3.9/4*\ix,1.3/3.2*\iy){$y$};
            \node at (1.7/4*\ix,3/3.2*\iy){$z$};
            \node at (1.5/4*\ix,0.8/3.2*\iy){$\theta^{\az}$};
            \node at (1.9/4*\ix,2.3/3.2*\iy){$\theta^{\el}$};
            \node at (3/4*\ix,2.5/3.2*\iy){$\mathbf{d}(\theta)$};
        \end{tikzpicture}
        \vspace{-0.4cm}
        \caption{}
        \label{fig: global coordinate system and reference orientation}
    \end{subfigure}
    \begin{subfigure}[b]{0.45\linewidth}
    \centering
        \begin{tikzpicture}
            \node(image)[anchor=south west] {\includegraphics[height=3cm]{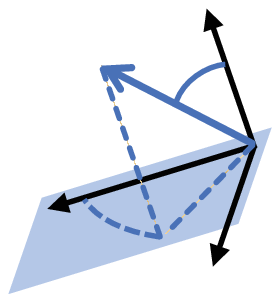}};
            \gettikzxy{(image.north east)}{\ix}{\iy};
            % \draw[help lines] (0,0) grid (\ix,\iy);
            \node at (0.4/3.2*\ix,1.1/3.2*\iy){$x'$};
            \node at (2.7/3.2*\ix,0.5/3.2*\iy){$y'$};
            \node at (2.7/3.2*\ix,2.9/3.2*\iy){$z'$};
            \node at (0.8/3.2*\ix,2.5/3.2*\iy){$\mathbf{d}(\theta)$};
            \node at (2/3.2*\ix,2.6/3.2*\iy){$\theta^{\el}$};
            \node at (1/3.2*\ix,0.7/3.2*\iy){$\theta^{\az}$};
        \end{tikzpicture}
        \vspace{-0.4cm}
        \caption{}
        \label{fig: Local coordinate system and definition of angles}
    \end{subfigure}
\caption{(a) The global coordinate system ($x\!-\!y\!-\!z$) and the reference orientation $\mathbf{R}=\mathbf{I}$, together with the definition of angles and their corresponding direction at orientation $\mathbf{R}=\mathbf{I}$. (b) The local coordinate system ($x'\!-\!y'\!-\!z'$) at a given orientation, together with the definition of angles and their corresponding direction at that orientation. Note that
$\theta^{\az} \in [0,2\pi]$ and $\theta^{\el} \in [0,\pi]$.}
\vspace{-5mm}
\end{figure}
\section{6D Localization Methodology} \label{sec:Methodology}
% \subsection{Two-stage Localization}
\vspace{-10mm}
\ColorBlue{\subsection{Two-stage Localization and Problem Decomposition}}
% We adopt a two-stage localization process, which first involves Channel parameters estimation to determine marginal posterior densities, in the form of estimates of the $M+1$ quintuples $[\bm{\theta}_{\A,m}\Transpose,\bm{\theta}_{\D,m}\Transpose,\tau_m]\Transpose$, along with associated uncertainties. Then, the user location and orientation is estimated, based on the Channel parameters estimation outputs. 
\ColorBlue{We consider a two-stage localization scheme and decompose the problem into a channel parameters estimation routine, followed by a localization routine. The channel estimation routine determines the marginal posterior densities of the channel parameters (in the form of estimates and the associated uncertainties), based on the observations \eqref{eq: signal model}. The localization routine 
uses the output of the channel parameters estimator to determine the 6D state of the \ac{UE}. We consider the contribution of this paper to provide a solution to the second sub-problem, while considering a generic channel parameter estimator. }
%and position this work \emph{not} in the channel parameter estimation domain. However, we still need to make assumptions about that as in the following.}

\subsubsection{Channel Parameters Estimation}
% There exists a variety of channel parameter estimators, including ESPRIT \cite{roemer2014analytical}, generalized approximate message passing \cite{bellili2019generalized}, orthogonal matching pursuit \cite{PositionOrientationEstimation--A.Shahmansoori_others_G.Seco-Granados_H.Wymeersch}, and RIMAX/SAGE \cite{thoma2004rimax}. To keep this work focused on the localization problem, we will assume an efficient Channel parameters estimation routine is present, operating close to the \ac{CRB} (which we derive in Section \ref{sec:FIM Analysis}). From this, we model the estimates as follows. For the \acp{ToA}, we use an independent Gaussian error model \cite{wymeersch2009cooperative}
\ColorBlue{There exists a variety of channel parameter estimators in the literature, including  ESPRIT \cite{roemer2014analytical}, generalized approximate message passing \cite{huang2018iterative,bellili2019generalized}, orthogonal matching pursuit \cite{PositionOrientationEstimation--A.Shahmansoori_others_G.Seco-Granados_H.Wymeersch}, sparse Bayesian learning \cite{cheng2019accurate}, tensor decomposition \cite{ZhouChannelEstimation}, and RIMAX/SAGE \cite{thoma2004rimax}. We assume an arbitrary estimator is applied to obtain an estimate of  $\bm{\eta}$. } 
\ColorBlue{We denote by $\hat{\bm{\eta}}=[\hat{\bm{\theta}}_\A\Transpose,\hat{\bm{\theta}}_\D\Transpose,\hat{\bm{\tau}}\Transpose]\Transpose$, the vector of estimated channel parameters, also referred to as the vector of \emph{measurements}. For each parameter, there is also an associated uncertainty, which leads to the following likelihood functions. For the \ac{ToA} measurements, we assume a multivariate Gaussian distribution \cite{ATE_directionalStatisticsForAngularPositioning,wymeersch2009cooperative}
\begin{align}
\label{eq:TOAlikelihood}
    % p(\hat{\tau}_m|\tau_m)= \exp(-\vert\hat{\tau}_m-\tau_m \vert^2/(2\sigma^2_{m}))/\sqrt{2 \pi \sigma^2_{m}},
    % \hat{\bm{\tau}} \sim \mathcal{N}(\bm{\tau},\Sigma_{\bm{\tau}})
    p(\hat{\bm{\tau}}|\bm{\tau}) & =
    \frac{1}{\sqrt{(2\pi)^{M+1}|\Sigma_{\bm{\tau}}|}}
    e^{-\frac{1}{2}(\hat{\bm{\tau}}-\bm{\tau})\Transpose \Sigma_{\bm{\tau}}^{-1} (\hat{\bm{\tau}}-\bm{\tau})},
    %{\exp\big(\nicefrac{1}{2}(\hat{\bm{\tau}}-\bm{\tau})\Transpose \Sigma_{\bm{\tau}}^{-1} (\hat{\bm{\tau}}-\bm{\tau})\big)}\nicefrac{}{\sqrt{(2\pi)^{M+1}|\Sigma_{\bm{\tau}}|}}
\end{align}
where $\Sigma_{\bm{\tau}}=\mathrm{diag}(\sigma_0^2,\ldots,\sigma_M^2)$. Note that both $\hat{\tau}_m$ and $\sigma^2_{m}$ are provided by the channel parameter estimator.} For the \acp{AoA} and \acp{AoD}, we follow \cite{badiu2017variational} and use a Von Mises distribution, which can be interpreted as a Gaussian distribution over the 1D manifold of angles \cite[Chapter 3]{mardia2009directional}. Correspondingly,\footnote{While it is shown in \cite{ErrorBoundsfor3DLocalization--Z.Abu-Shaban_others_G.Seco-Granados_H.Wymeersch} that azimuth and elevation angles are correlated in general, we neglect the correlation and assume factorized likelihoods, for the sake of tractability. We will later evaluate the impact of this independence assumption (see Section \ref{subsection: obtaining independent parameters}).} 
\begin{subequations} \label{eq: pdf of angles}
\begin{align}
    & p(\hat{\bm{\theta}}_{\A,m}|\bm{\theta}_{\A,m}) = \label{eq:pdf of m-th AoA} \\
    & \exp \left( \kappa_{\A,m}^{\az} \cos (\hat{\theta}^{\az}_{\A,m}-\theta^{\az}_{\A,m}) \right)/\big(2\pi I_0(\kappa_{\A,m}^{\az})\big) \nonumber \\
    \times & \exp \left( \kappa_{\A,m}^{\el} \cos (\hat{\theta}^{\el}_{\A,m}-\theta^{\el}_{\A,m}) \right)/\big(2\pi I_0(\kappa_{\A,m}^{\el})\big), \nonumber \\
    & p(\hat{\bm{\theta}}_{\D,m}|\bm{\theta}_{\D,m}) = \label{eq:pdf of m-th AoD} \\
    & \exp \left( \kappa_{\D,m}^{\az} \cos (\hat{\theta}^{\az}_{\D,m}-\theta^{\az}_{\D,m}) \right) /\big(2\pi I_0(\kappa_{\D,m}^{\az})\big) \nonumber \\
    \times & \exp \left( \kappa_{\D,m}^{\el} \cos (\hat{\theta}^{\el}_{\D,m}-\theta^{\el}_{\D,m}) \right)/\big(2\pi I_0(\kappa_{\D,m}^{\el})\big), \nonumber
\end{align}
\end{subequations}
where $\kappa^{\az}_{\A,m}$, $\kappa^{\el}_{\A,m}$,  $\kappa^{\az}_{\D,m}$, and $\kappa^{\el}_{\D,m}$, are the non-negative concentration parameters of $m$-th \ac{AoA} and $m$-th \ac{AoD}, in azimuth and elevation, respectively, and $I_0(\cdot)$ is the modified Bessel function of order $0$. Again, both the estimated angles and the corresponding concentration parameters are assumed to be provided by the channel parameter estimator.

\ColorBlue{\subsubsection{Location Parameters Estimation}
\label{section maximum likelihood formulation}
The localization problem is formulated as the \acl{ML} estimation
\begin{align}
\label{eq:MLproblemAbstract1}
    % \hat{\bm{\xi}} = \arg\max_{\bm{\xi}}\ln p(\hat{\bm{\eta}}|\bm{\eta}(\bm{\xi})),~\bm{\xi} \in \mathrm{SO}(3) \times \mathbb{R}^{3(M+1)+1},
    \hat{\R}_\UE,\hat{\bm{\zeta}}=\arg\max_{\R_\UE,\bm{\zeta}} \ln p(\hat{\bm{\eta}}|\bm{\eta}(\R_\UE,\bm{\zeta})),~\R_\UE \in \mathrm{SO}(3),
\end{align}
where $\bm{\zeta} = [\p_\UE\Transpose,\p_1\Transpose,\ldots,\p_M\Transpose,b]\Transpose \in \mathbb{R}^{3(M+1)+1}$  
and the likelihood $p(\hat{\bm{\eta}}|\bm{\eta}(\R_\UE,\bm{\zeta}))$ is expressed of the underlying geometry. The likelihood is of the form
\begin{align}
    p\big(\hat{\bm{\eta}}|\bm{\eta}(\R_\UE,\bm{\zeta})\big) = \ &p\big(\hat{\bm{\theta}}_\A|\bm{\theta}_\A (\R_\UE,\bm{\zeta})\big) \\ \times ~ &p\big(\hat{\bm{\theta}}_\D|\bm{\theta}_\D(\bm{\zeta})\big) \times p\big(\hat{\bm{\tau}}|\bm{\tau}(\bm{\zeta})\big),\notag 
\end{align}
in which, by overloading the notation for cosines, we have
\begin{subequations}
\begin{align}
    p\big(\hat{\bm{\theta}}_\A|\bm{\theta}_\A(\R_\UE,\bm{\zeta})\big) &\propto \exp \Big( \bm{\kappa}_{\A}\Transpose \cos\big(\hat{\bm{\theta}}_{\A}-\bm{\theta}_{\A}(\R_\UE,\bm{\zeta})\big) \Big),\\
    p\big(\hat{\bm{\theta}}_\D|\bm{\theta}_\D(\bm{\zeta})\big) &\propto \exp \Big( \bm{\kappa}_{\D}\Transpose \cos\big(\hat{\bm{\theta}}_{\D}-\bm{\theta}_{\D}(\bm{\zeta})\big) \Big),
\end{align}
\end{subequations}
where $\bm{\kappa}_{\A}$ and $\bm{\kappa}_{\D}$ are $2(M+1)\times1$ vectors aggregating the concentration parameters of Von Mises distributions corresponding to the angle estimates in $\hat{\bm{\theta}}_\A$ and $\hat{\bm{\theta}}_\D$, respectively. 
%\smash{$\bm{\kappa}_{\A} = [\kappa^{\az}_{\A,0},\kappa^{\el}_{\A,0},\kappa^{\az}_{\A,1},\kappa^{\el}_{\A,1},\ldots,\kappa^{\az}_{\A,M},\kappa^{\el}_{\A,M}]\Transpose$} and \smash{$\bm{\kappa}_{\D} = [\kappa^{\az}_{\D,0},\kappa^{\el}_{\D,0},\kappa^{\az}_{\D,1},\kappa^{\el}_{\D,1},\ldots,\kappa^{\az}_{\D,M},\kappa^{\el}_{\D,M}]\Transpose$} are the aggregated concentration parameters of Von Mises distributions. 
Considering the negative log-likelihood $\mathcal{L}(\R_\UE,\bm{\zeta}) =-\ln p(\hat{\bm{\eta}}|\bm{\eta}(\R_\UE,\bm{\zeta}))$, the \ac{ML} estimation $\eqref{eq:MLproblemAbstract1}$ can be equivalently written as the constrained minimization
\begin{subequations}
\label{eq:MLproblemAbstract2}
\begin{align}
    \min_{\R_\UE,\bm{\zeta}}~&\mathcal{L}(\R_\UE,\bm{\zeta}),\\
    \mathrm{s.t.}~~&\R_\UE\Transpose\R_\UE=\mathbf{I},~\mathrm{det}(\R_\UE)=+1.\label{unitaryConstraint}
\end{align}
\end{subequations}
where 
\begin{align}
    & \mathcal{L}(\R_\UE,\bm{\zeta})  = \nicefrac{1}{2}(\hat{\bm{\tau}}-\bm{\tau}(\bm{\zeta}))\Transpose \Sigma_{\bm{\tau}}^{-1}(\hat{\bm{\tau}}-\bm{\tau}(\bm{\zeta})) \notag\\
    & - \bm{\kappa}_{\A}\Transpose \cos(\hat{\bm{\theta}}_{\A}-\bm{\theta}_{\A}(\R_\UE,\bm{\zeta})) - \bm{\kappa}_{\D}\Transpose \cos(\hat{\bm{\theta}}_{\D}-\bm{\theta}_{\D}(\bm{\zeta})).
\end{align}
However, solving the optimization problem \eqref{eq:MLproblemAbstract2} using the classical optimization tools is difficult \cite{SPOforVLC--Shen_others_Steendam}, due to the unitary constraint \eqref{unitaryConstraint} on the rotation matrix $\R_\UE$. 
% because (i) it involves optimization over the $\mathrm{SO}(3)$ manifold; (ii) the objective function $\mathcal{L}(.)$ is highly nonlinear and non-convex; (iii) the parameter vector is high-dimensional. 
In addition, the objective function $\mathcal{L}(\R_\UE,\bm{\zeta})$ is highly nonlinear and non-convex, and  iterative algorithms for solving \eqref{eq:MLproblemAbstract2} might reach local optima, if initialized by the points far from the global solution. To address these challenges, we propose a two-step approach: first, we determine an initial ad-hoc estimate of unknowns from geometric arguments, being then refined by a gradient descent of the objective function. The iterative algorithm is explained in the following, while the process to determine an initial point is deferred until Section
\ref{sec:Ad-hoc Initial Estimate}.}

\subsection{An Iterative Algorithm for \ac{ML} Estimation}
As \eqref{eq:MLproblemAbstract2} involves optimization over non-Euclidean manifolds, a suitable optimization tool must be applied. We first present a method for such optimization before applying that to our 6D localization problem. 
\ColorBlue{\subsubsection{Overview of a Method for Optimization over non-Euclidean Manifolds}
Consider the optimization problem 
\begin{align}
     \hat{\mathsf{X}} = \arg \min_{\mathsf{X} \in \mathcal{M}} f(\mathsf{X}),
\end{align}
where $f\!:\!\mathcal{M}\!\to\!\mathbb{R}$ is a smooth function over the (possibly) non-Euclidean manifold $\mathcal{M}$, and $\mathsf{X}$ denotes the (possibly) non-Euclidean parameter}. We exploit the Riemannian gradient descent algorithm, which is a first-order technique analogous to the standard gradient descent algorithm, where the Riemannian gradient is obtained by projection of the classical gradients to the tangent spaces. Starting from the initial point $\hat{\mathsf{X}}^{(0)}$, the algorithm iterates
\begin{align}
    \hat{\mathsf{X}}^{(k+1)}=\mathcal{R}_{ \hat{\mathsf{X}}^{(k)}}\left( -\varepsilon_k \mathcal{P}_{\hat{\mathsf{X}}^{(k)}} \left[ {\partial f(\mathsf{X})}/{\partial \mathsf{X}}\right] _{\mathsf{X}=\hat{\mathsf{X}}^{(k)}}  \right), 
\end{align}
where $\mathcal{P}_{\mathsf{X}}(\cdot)$ is an orthogonal projection onto the tangent space at $\mathsf{X}$, $\mathcal{R}_{\mathsf{X}}(\cdot)$ is a retraction from the tangent space onto $\mathcal{M}$, and $\varepsilon_k>0$ is a suitable step size. Intuitively, the gradient is calculated, and projected to the tangent space (to follow the space of $\mathcal{M}$ as closely as possible), the value at step $k$ is updated, and then the updated value is normalized back into the $\mathcal{M}$.
Relevant for us is $\mathcal{M}=\mathrm{SO}(3)$. Example projection and retractions operations for the Riemannian gradient descent algorithm are given by \cite[Eqs.~(7.32) and (7.22)]{boumal2020introduction}:
\begin{subequations} \label{eq: projection and retraction operators}
\begin{align} 
    \mathcal{P}_{\mathbf{X}}(\mathbf{U})& =\mathbf{X}(\mathbf{X}\Transpose \mathbf{U} - \mathbf{U}\Transpose \mathbf{X})/2,\\
    \mathcal{R}_{\mathbf{X}}(\mathbf{U}) & = (\mathbf{X}+\mathbf{U})(\mathbf{I}_3 + \mathbf{U}\Transpose \mathbf{U})^{-1/2}.
\end{align}
\end{subequations}
On the other hand, for $\mathcal{M}=\mathbb{R}^n$, $\mathcal{P}_{\mathbf{X}}(\mathbf{U})=\mathbf{U}$, $\mathcal{R}_{\mathbf{X}}(\mathbf{U})=\mathbf{X} + \mathbf{U}$, leading to classical gradient descent.

\subsubsection{Solving the ML Estimation Problem} 
\ColorBlue{The optimization variable in our algorithm 
Inspired by the coordinate descent algorithm \cite[Sec. 9.3]{wright1999numerical}, we decompose the unknowns, and apply different optimization algorithms for the estimation of $\R_\UE$ and the rest of unknowns $\bm{\zeta}$,  which belong to the Euclidean space.} %, i.e., $\bm{\zeta} = [\p_\UE\Transpose,\p_1\Transpose,\ldots,\p_M\Transpose,b]\Transpose \in \mathbb{R}^{3(M+1)+1}$.}
We then consider the Riemannian gradient descent algorithm to optimize $\R_\UE$ on the $\mathrm{SO}(3)$ manifold, as
\begin{align}
    \hat{\R}_\UE^{(k+1)} = \mathcal{R}_{\hat{\R}_\UE^{(k)}} \Big( -\varepsilon_k \mathcal{P}_{\hat{\R}_\UE^{(k)}} \big( \left. {\partial \mathcal{L}}/{\partial \R_\UE} \big)\right|_{\hat{\R}_\UE^{(k)},\hat{\bm{\zeta}}^{(k)}} \Big),
\end{align}
with the projection and retraction operators as defined in \eqref{eq: projection and retraction operators}, and $\varepsilon_k$ being the step size obtained using a backtracking line-search \cite{boumal2020introduction}. 
We also use the trust region algorithm \cite[Sec. 11.2]{wright1999numerical}
%standard gradient descent algorithm 
to optimize $\bm{\zeta}$ in the Euclidean space, considering the updated rotation matrix. Then we iterate this algorithm until a stopping criterion is met.
%, according to  $  \hat{\bm{\zeta}}^{(k+1)} = \hat{\bm{\zeta}}^{(k)} - \mu_k {\partial \mathcal{L}}/{\partial \bm{\zeta}}$ where the last terms is evaluated in  ${\hat{\R}_\UE^{(k+1)},\hat{\bm{\zeta}}^{(k)}}$, where we dropped the arguments in $\mathcal{L}(\cdot)$ for simplicity of notation.
% a backtracking line-search \cite{boumal2020introduction}. 
The partial derivatives ${\partial  \mathcal{L}}/{\partial \R_\UE}$ and ${\partial \mathcal{L}}/{\partial \bm{\zeta}}$ are obtained using the chain rule by including the partial derivatives of channel parameters w.r.t. localization parameters given in Appendix \ref{appendix: Partial Derivatives of AoAs, AoDs, and ToAs}. The above method requires initial estimates $\hat{\R}_\UE^{(0)}$ and $\hat{\bm{\zeta}}^{(0)}$, which will be provided by the ad-hoc initial estimator.

\section{Ad-hoc Initial Estimate}
\label{sec:Ad-hoc Initial Estimate}
As stated earlier, the \ac{ML} problem is a nonlinear non-convex optimization, and the elaborated iterative algorithm might reach local optima if the initial values are not given properly. In the following, we propose a simple \ColorBlue{(with linear complexity in the number of \acp{IP} and involving only a 1D line search)} sequential scheme for obtaining initial estimates of localization unknowns based on the estimates of \acp{AoA}, \acp{AoD}, and \acp{ToA}: First, we estimate $\R_\UE$ only from the estimated \acp{AoA} and \acp{AoD}, without any knowledge of $\p_\UE$, $\p_1,\ldots,\p_M$, and $b$. Second, given the estimated $\R_\UE$, we estimate $\p_\UE$, $\p_1,\ldots,\p_M$ from \ac{TDoA} measurements. Finally, we estimate the clock bias $b$ using the estimated positions and \acp{ToA}. The algorithm, therefore, takes the measurements $\hat{\bm{\theta}}_\A$, $\hat{\bm{\theta}}_\D$, and $\hat{\bm{\tau}}$ as input, but we omit the ``hat'' (~$\widehat{.}$~) of the measurements, only in this section, for the notational convenience. Note that the values are not confused with the true \emph{unknown} channel parameters.

\subsection{Step 1: Estimation of UE Rotation Matrix} \label{section:Estimation of the Rotation Matrix}
Our approach to estimate the \ac{UE} rotation matrix from the \acp{AoA} $\bm{\theta}_{\A,m}$ and \acp{AoD} $\bm{\theta}_{\D,m}$ (and corresponding unit vectors $\mathbf{d}({\bm{\theta}}_{\A,m})$ and $\mathbf{d}({\bm{\theta}}_{\D,m})$)
is based on an axis-angle representation of its orientation.
Consider the \ac{LoS} arrival and departure directions $\mathbf{d}_{\A,0}$ and $\mathbf{d}_{\D,0}$, which are along \ac{LoS} path, while in opposite directions in the global coordinate frame. Hence,
\begin{align} \label{eq:AoD-AoA LoS condition}
    \R_\BS \mathbf{d}_{\D,0} = - \R_\UE \mathbf{d}_{\A,0}.
\end{align}
This equation has infinitely many solutions for $\R_\UE$, satisfying both \eqref{eq:AoD-AoA LoS condition} and the orthogonality constraint, among which, one is the true $\R_\UE$. 

\subsubsection{Characterizing the Solutions to  \eqref{eq:AoD-AoA LoS condition}} 
We first find a solution for $\R_\UE$ in \eqref{eq:AoD-AoA LoS condition}, denoted by $\widetilde{\R} \in \text{SO}(3)$. Multiplying $\widetilde{\R}$ with a rotation $\psi \in [0,2\pi)$ around $\mathbf{d}_{\A,0}$ yields all rotation matrices that satisfy \eqref{eq:AoD-AoA LoS condition}.
\begin{lemma} \label{lem:lemma1}
One solution for \eqref{eq:AoD-AoA LoS condition} is given by 
\begin{align} \label{eq:solution R_tilde}
    \widetilde{\R} = \mathbf{I} + [\mathbf{d}]_{\times} + \frac{1}{1-\mathbf{d}_{\A,0}\Transpose \R_\BS \mathbf{d}_{\D,0}} [\mathbf{d}]_{\times}^2,
\end{align}
where $\mathbf{d}=[d_1,d_2,d_3]\Transpose \triangleq -\mathbf{d}_{\A,0} \times \R_\BS \mathbf{d}_{\D,0}$, and 
\begin{align}\label{eq:skew-symmetric matrix}
    [\mathbf{d}]_{\times} \triangleq
    \begin{bmatrix}
        \phantom{-}0 & -d_3 & \phantom{-}d_2 \\
        \phantom{-}d_3 & \phantom{-}0 & -d_1 \\
        -d_2 & \phantom{-}d_1 & \phantom{-}0
    \end{bmatrix}.
\end{align}
\end{lemma}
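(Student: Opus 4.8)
The plan is to recognize \eqref{eq:solution R_tilde} as the classical Rodrigues construction of the ``minimal'' rotation carrying one unit vector onto another, and then simply verify that it does what is claimed. First I would relabel the two relevant directions to declutter the algebra: set $\mathbf{a}\triangleq\mathbf{d}_{\A,0}$ and $\mathbf{b}\triangleq-\R_\BS\mathbf{d}_{\D,0}$. Both are unit vectors, since $\mathbf{d}(\cdot)$ is unit-norm by \eqref{eq: unit-norm direction} and $\R_\BS\in\mathrm{SO}(3)$ preserves norms. With this notation, \eqref{eq:AoD-AoA LoS condition} reads $\R_\UE\mathbf{a}=\mathbf{b}$, so it suffices to exhibit a single $\widetilde{\R}\in\mathrm{SO}(3)$ with $\widetilde{\R}\mathbf{a}=\mathbf{b}$. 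I would also rewrite the ingredients of \eqref{eq:solution R_tilde} in these terms: the axis vector is $\mathbf{d}=-\mathbf{d}_{\A,0}\times\R_\BS\mathbf{d}_{\D,0}=\mathbf{a}\times\mathbf{b}$, and the denominator is $1-\mathbf{d}_{\A,0}\Transpose\R_\BS\mathbf{d}_{\D,0}=1+\mathbf{a}\Transpose\mathbf{b}$.

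The membership $\widetilde{\R}\in\mathrm{SO}(3)$ I would obtain ``for free'' by matching \eqref{eq:solution R_tilde} to the Rodrigues rotation formula, rather than checking orthogonality by brute force. Writing $c\triangleq\mathbf{a}\Transpose\mathbf{b}=\cos\theta$, where $\theta$ is the angle between $\mathbf{a}$ and $\mathbf{b}$, one has $\Vert\mathbf{d}\Vert=\sin\theta$, so $[\mathbf{d}]_{\times}=\sin\theta\,[\hat{\mathbf{n}}]_{\times}$ and $[\mathbf{d}]_{\times}^2=\sin^2\theta\,[\hat{\mathbf{n}}]_{\times}^2$ for the unit axis $\hat{\mathbf{n}}=\mathbf{d}/\Vert\mathbf{d}\Vert$. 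Since $\sin^2\theta/(1+\cos\theta)=1-\cos\theta$, the coefficient of $[\mathbf{d}]_{\times}^2$ collapses and \eqref{eq:solution R_tilde} becomes exactly $\mathbf{I}+\sin\theta\,[\hat{\mathbf{n}}]_{\times}+(1-\cos\theta)\,[\hat{\mathbf{n}}]_{\times}^2$, i.e.\ $\widetilde{\R}=\exp(\theta\,[\hat{\mathbf{n}}]_{\times})$. Being the exponential of a skew-symmetric matrix, $\widetilde{\R}$ is automatically orthogonal with determinant $+1$, which discharges the $\mathrm{SO}(3)$ requirement with no separate computation.

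It then remains to verify $\widetilde{\R}\mathbf{a}=\mathbf{b}$, which is the mechanical heart of the argument and I would push through with the vector triple-product (BAC--CAB) identity. Using $\Vert\mathbf{a}\Vert=1$ gives $[\mathbf{d}]_{\times}\mathbf{a}=(\mathbf{a}\times\mathbf{b})\times\mathbf{a}=\mathbf{b}-c\,\mathbf{a}$, and then $(\mathbf{a}\times\mathbf{b})\times\mathbf{b}=c\,\mathbf{b}-\mathbf{a}$ yields $[\mathbf{d}]_{\times}^2\mathbf{a}=\mathbf{d}\times(\mathbf{b}-c\,\mathbf{a})=-(1-c^2)\,\mathbf{a}$. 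Substituting and using $-(1-c^2)/(1+c)=c-1$, the $\mathbf{a}$-terms cancel:
\begin{align}
\widetilde{\R}\mathbf{a}=\mathbf{a}+(\mathbf{b}-c\,\mathbf{a})+(c-1)\,\mathbf{a}=\mathbf{b},
\end{align}
establishing that $\widetilde{\R}$ solves \eqref{eq:AoD-AoA LoS condition}.

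The one genuine obstacle, which I would flag explicitly rather than hide, is the denominator $1+\mathbf{a}\Transpose\mathbf{b}$: it vanishes precisely when $\mathbf{a}=-\mathbf{b}$, that is when $\mathbf{d}_{\A,0}$ and $\R_\BS\mathbf{d}_{\D,0}$ coincide. Geometrically this is the antipodal configuration requiring a rotation by $\pi$, whose axis is not determined by $\mathbf{a}\times\mathbf{b}=\mathbf{0}$, and there \eqref{eq:solution R_tilde} is undefined. I expect this to be the only delicate point: the formula is valid away from this non-generic case, which can be treated separately (any rotation by $\pi$ about an axis orthogonal to $\mathbf{a}$ satisfies the equation). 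Everything else is norm-preservation and the two cross-product identities.
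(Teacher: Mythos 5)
Your proof is correct and takes essentially the same route as the paper, whose entire argument is a one-line appeal to Rodrigues' formula applied to rotating $\mathbf{d}_{\A,0}$ onto $\tilde{\mathbf{d}}=-\R_\BS \mathbf{d}_{\D,0}$ about the axis $\mathbf{d}_{\A,0}\times\tilde{\mathbf{d}}$; your reduction of \eqref{eq:solution R_tilde} to $\mathbf{I}+\sin\theta\,[\hat{\mathbf{n}}]_{\times}+(1-\cos\theta)\,[\hat{\mathbf{n}}]_{\times}^2$ and the triple-product verification of $\widetilde{\R}\mathbf{a}=\mathbf{b}$ simply make that appeal explicit. Your flagging of the antipodal configuration $\mathbf{d}_{\A,0}=\R_\BS\mathbf{d}_{\D,0}$, where the denominator $1+\mathbf{a}\Transpose\mathbf{b}$ vanishes and the formula is undefined, is a legitimate caveat that the paper's one-line proof leaves implicit.
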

\begin{proof}
Let us define $\tilde{\mathbf{d}}=-\R_\BS \mathbf{d}_{\D,0}$.
The result follows from Rodrigues' formula, by rotating $\mathbf{d}_{\A,0}$ to $\tilde{\mathbf{d}}$ with rotation axis $\mathbf{d}_{\A,0} \times \tilde{\mathbf{d}}$ \cite[Section 9.6.2]{vince2011rotation}.
\end{proof}
\begin{lemma}
The transformation matrix describing rotations by the angle $\psi \in [0,2\pi)$ around the arbitrary unit-norm vector $\mathbf{u}$ is given by  $\mathbf{Q}_{\mathbf{u}}(\psi) = 
     [\mathbf{u}]_{\times} \sin \psi + (\mathbf{I}-\mathbf{u}\mathbf{u}\Transpose) \cos \psi + \mathbf{u}\mathbf{u}\Transpose$.
\end{lemma}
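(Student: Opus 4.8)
The plan is to prove this as a restatement of Rodrigues' rotation formula, obtained directly from the geometric decomposition of an arbitrary vector into components parallel and perpendicular to the rotation axis $\mathbf{u}$. First I would observe that any $\mathbf{v}\in\mathbb{R}^3$ splits as $\mathbf{v}=\mathbf{v}_\parallel+\mathbf{v}_\perp$, where the projection onto the axis is $\mathbf{v}_\parallel=(\mathbf{u}\mathbf{u}\Transpose)\mathbf{v}$ and the in-plane part is $\mathbf{v}_\perp=(\mathbf{I}-\mathbf{u}\mathbf{u}\Transpose)\mathbf{v}$. Because a rotation about $\mathbf{u}$ leaves the axis component fixed and acts only within the plane orthogonal to $\mathbf{u}$, it suffices to describe the action on $\mathbf{v}_\perp$.

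Next I would build a frame in the rotation plane from the pair $\mathbf{v}_\perp$ and $[\mathbf{u}]_{\times}\mathbf{v}_\perp$. The key observation is $[\mathbf{u}]_{\times}\mathbf{u}=\mathbf{0}$, so $[\mathbf{u}]_{\times}\mathbf{v}_\parallel=\mathbf{0}$ and hence $[\mathbf{u}]_{\times}\mathbf{v}_\perp=[\mathbf{u}]_{\times}\mathbf{v}$; these two vectors are mutually orthogonal and both perpendicular to $\mathbf{u}$, so rotating $\mathbf{v}_\perp$ through $\psi$ gives
\[
    \mathbf{v}_\perp'=\cos\psi\,\mathbf{v}_\perp+\sin\psi\,[\mathbf{u}]_{\times}\mathbf{v}.
\]
Recombining with the fixed axis component yields $\mathbf{v}'=\mathbf{u}\mathbf{u}\Transpose\mathbf{v}+\cos\psi\,(\mathbf{I}-\mathbf{u}\mathbf{u}\Transpose)\mathbf{v}+\sin\psi\,[\mathbf{u}]_{\times}\mathbf{v}$, and since this holds for every $\mathbf{v}$, factoring out $\mathbf{v}$ delivers exactly the claimed $\mathbf{Q}_{\mathbf{u}}(\psi)$.

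As a consistency check, and to tie back to the Rodrigues' formula already invoked in the proof of Lemma~\ref{lem:lemma1}, I would verify the algebraic identity $[\mathbf{u}]_{\times}^2=\mathbf{u}\mathbf{u}\Transpose-\mathbf{I}$, which holds precisely because $\mathbf{u}$ is unit-norm (in general $[\mathbf{u}]_{\times}^2=\mathbf{u}\mathbf{u}\Transpose-\Vert\mathbf{u}\Vert^2\mathbf{I}$). Substituting this into the familiar form $\mathbf{I}+\sin\psi\,[\mathbf{u}]_{\times}+(1-\cos\psi)[\mathbf{u}]_{\times}^2$ and collecting terms recovers the stated expression, confirming the two representations agree.

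I do not anticipate a genuine obstacle, as the result is classical; the only point requiring care is the orientation convention. The sign of the $\sin\psi\,[\mathbf{u}]_{\times}$ term encodes a right-handed rotation about $\mathbf{u}$, so I would make sure the cross-product direction $\mathbf{u}\times\mathbf{v}_\perp=[\mathbf{u}]_{\times}\mathbf{v}_\perp$ matches the sense of rotation intended when $\mathbf{Q}_{\mathbf{u}}(\psi)$ is subsequently composed with $\widetilde{\R}$ to sweep out all solutions of \eqref{eq:AoD-AoA LoS condition} in the axis-angle parametrization.
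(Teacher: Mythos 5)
Your proposal is correct, but it takes a more self-contained route than the paper: the paper's entire proof of this lemma is a one-line appeal to Rodrigues' formula (with a citation to a textbook), whereas you actually derive that formula from scratch via the decomposition $\mathbf{v}=\mathbf{u}\mathbf{u}\Transpose\mathbf{v}+(\mathbf{I}-\mathbf{u}\mathbf{u}\Transpose)\mathbf{v}$, the invariance of the axial component, and the planar rotation of the orthogonal component, and you additionally verify algebraic agreement with the familiar form $\mathbf{I}+\sin\psi\,[\mathbf{u}]_{\times}+(1-\cos\psi)[\mathbf{u}]_{\times}^2$ using $[\mathbf{u}]_{\times}^2=\mathbf{u}\mathbf{u}\Transpose-\mathbf{I}$ for unit $\mathbf{u}$. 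What the paper's approach buys is brevity, at the cost of deferring all content to an external reference; what yours buys is a derivation that makes the structure of $\mathbf{Q}_{\mathbf{u}}(\psi)$ transparent (axis projector fixed, in-plane part rotated) and makes the sign/orientation convention of the $\sin\psi\,[\mathbf{u}]_{\times}$ term explicit, which matters later when $\mathbf{Q}_{\mathbf{d}_{\A,0}}(\psi)$ is composed with $\widetilde{\R}$ to sweep the one-parameter family of solutions. One small point to tighten: when you write $\mathbf{v}_\perp'=\cos\psi\,\mathbf{v}_\perp+\sin\psi\,[\mathbf{u}]_{\times}\mathbf{v}$, you should note that $\Vert[\mathbf{u}]_{\times}\mathbf{v}_\perp\Vert=\Vert\mathbf{v}_\perp\Vert$ (since $\mathbf{u}$ is unit-norm and orthogonal to $\mathbf{v}_\perp$), so that $\{\mathbf{v}_\perp,[\mathbf{u}]_{\times}\mathbf{v}_\perp\}$ is an orthogonal pair of equal length; without this the planar rotation expression would not be length-preserving. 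This is immediate, but it is the one step your argument uses silently.
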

\begin{proof}
The result follows from Rodrigues' formula \cite[Section 9.6.2]{vince2011rotation}.
\end{proof}
The rotation matrices $\R_\UE$ satisfying \eqref{eq:AoD-AoA LoS condition} 
are thus characterized as
\begin{align} \label{eq:characterization of rotation matrix}
    \R(\psi) =  \widetilde{\R}\mathbf{Q}_{\mathbf{d}_{\A,0}}(\psi),~\forall \psi \in [0,2\pi).
\end{align}
It is easily verified that \eqref{eq:AoD-AoA LoS condition} holds for $\forall \psi \in [0,2\pi)$, since $-\widetilde{\R}\mathbf{Q}_{\mathbf{d}_{\A,0}}(\psi)\mathbf{d}_{\A,0}  = -\widetilde{\R}\mathbf{d}_{\A,0} = \R_\BS \mathbf{d}_{\D,0}$, 
where the first transition is due to a rotation around an axis leaving that axis invariant, and the second transition due to Lemma \ref{lem:lemma1}. What remains is now to determine $\psi$ based on the \ac{NLoS} paths. 

\begin{figure}
%[ht!]
%     \begin{subfigure}[b]{0.5\textwidth}
%     \centering
%     \begin{tikzpicture}
%     \node(image)[anchor=south west] {\includegraphics[height=4cm]{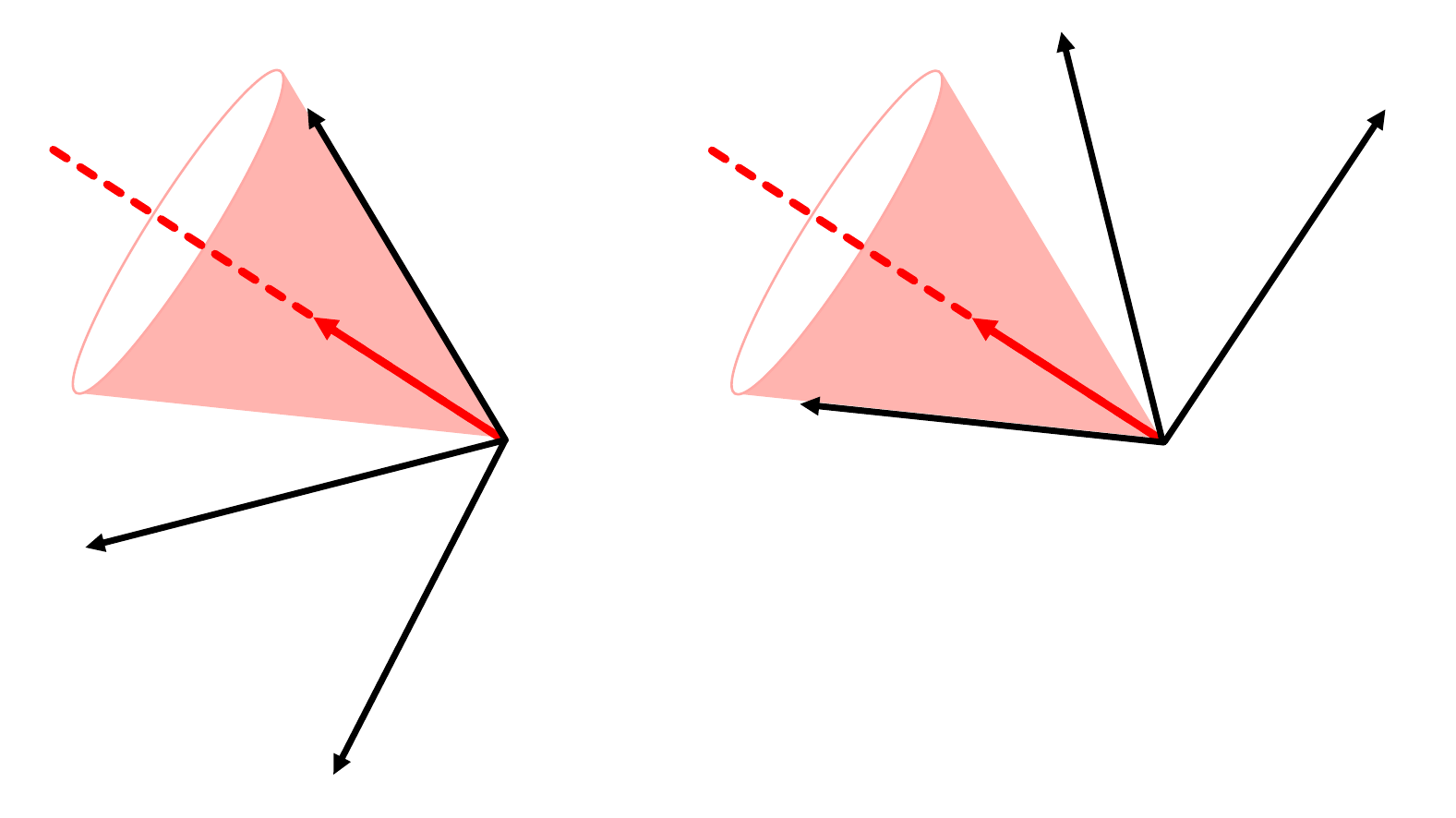}};
%     \gettikzxy{(image.north east)}{\ix}{\iy};
%     % \draw[help lines] (0,0) grid (\ix,\iy);
%     \node at (1.8/7*\ix,0.2/4*\iy){$y_\UE$};
%     \node at (1.8/7*\ix,3.7/4*\iy){$z_\UE$};
%     \node at (0.2/7*\ix,1.4/4*\iy){$x_\UE$};
%     \node at (1.5/7*\ix,2.2/4*\iy){$\mathbf{d}_{\A,0}$};
%     \node at (0.4/7*\ix,3.4/4*\iy){LoS};
%     %************************************
%     \node at (5.4/7*\ix,3.8/4*\iy){$y_\UE$};
%     \node at (4/7*\ix,1.7/4*\iy){$z_\UE$};
%     \node at (7/7*\ix,3.4/4*\iy){$x_\UE$};
%     \node at (4.5/7*\ix,2.2/4*\iy){$\mathbf{d}_{\A,0}$};
%     \node at (3.4/7*\ix,3.4/4*\iy){LoS};
%     \end{tikzpicture}
%     \caption{}
%     \label{fig: rotation around LoS}
% \end{subfigure}
    % \hfill 
    %\begin{subfigure}[b]{0.5\textwidth}
    \centering
    \begin{tikzpicture}
    \node(image)[anchor=south west] {\includegraphics[height=4cm]{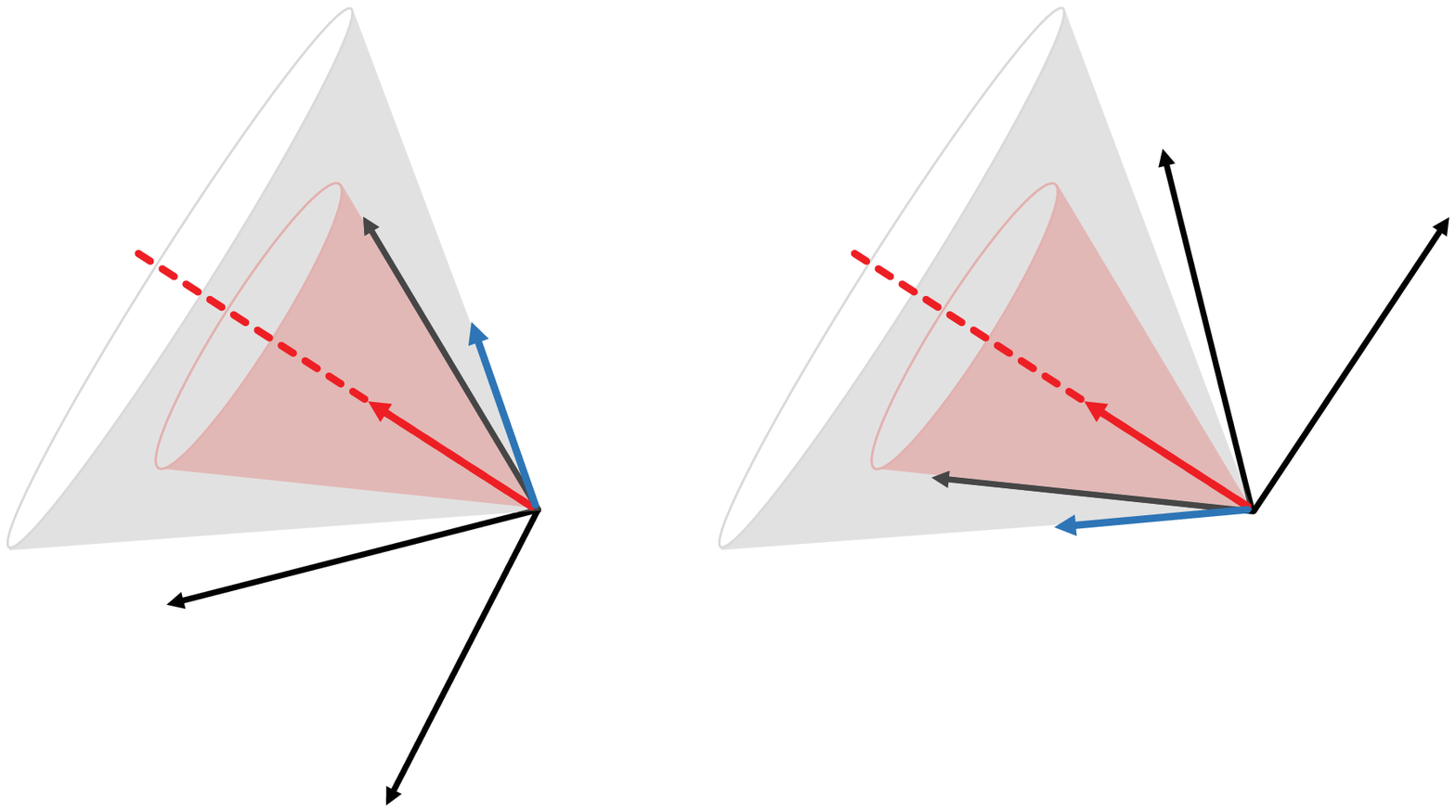}};
    \gettikzxy{(image.north east)}{\ix}{\iy};
    % \draw[help lines] (0,0) grid (\ix,\iy);
    \node at (2.1/7.5*\ix,0.2/4.2*\iy){$y_\UE$};
    \node at (2.2/7.5*\ix,3.1/4.2*\iy){$z_\UE$};
    \node at (0.7/7.5*\ix,1.2/4.2*\iy){$x_\UE$};
    \node at (2.8/7.5*\ix,2.7/4.2*\iy){$\mathbf{d}_{\A,m}$};
    \node at (1.7/7.5*\ix,2/4.2*\iy){$\mathbf{d}_{\A,0}$};
    \node at (0.6/7.5*\ix,3/4.2*\iy){LoS};
    %************************************
    \node at (6.1/7.5*\ix,3.3/4.2*\iy){$y_\UE$};
    \node at (4.5/7.5*\ix,1.6/4.2*\iy){$z_\UE$};
    \node at (7.5/7.5*\ix,2.9/4.2*\iy){$x_\UE$};
    \node at (5.5/7.5*\ix,1.3/4.2*\iy){$\mathbf{d}_{\A,m}$};
    \node at (5/7.5*\ix,2/4.2*\iy){$\mathbf{d}_{\A,0}$};
    \node at (4/7.5*\ix,3/4.2*\iy){LoS};
    \end{tikzpicture}
    %\caption{}
%\end{subfigure}
\caption{%\setstretch{1.2}
%(a) The \ac{LoS} arrival direction $\mathbf{d}_{\A,0}$ is left invariant, and the axis $z_\UE$ rotates, in the global coordinate frame, on the lateral surface of a cone (the red cone in this figure) with axis $\mathbf{d}_{\A,0}$ and apex angle $\theta_{\A,0}^{\el}$, when the rotation around axis $\mathbf{d}_{\A,0}$ is applied. The figure illustrates two snapshots of such rotation. (b) 
The example \ac{NLoS} arrival direction $\mathbf{d}_{\A,m}$ rotates, in the global coordinate frame, while maintaining the same angle $\bm{\theta}_{\A,m}$ in \ac{UE} coordinate frame, on the lateral surface of a cone (the gray cone in this figure) with axis $\mathbf{d}_{\A,0}$, and apex angle equal to the angle between $\mathbf{d}_{\A,0}$ and $\mathbf{d}_{\A,m}$, when the rotation around axis $\mathbf{d}_{\A,0}$ is applied. The figure illustrates two snapshots of such rotation.} \vspace{-4mm}
\label{fig:findingAngles}
\end{figure}
\subsubsection{Rotation Estimation Based on \ac{NLoS} \acp{AoA} and \acp{AoD}} We now determine the angle $\psi\in[0,2\pi)$, so that the combined rotation resulting from $\widetilde{\R}$ and $\mathbf{Q}_{\mathbf{d}_{\A,0}}(\psi)$, i.e., $\R(\psi)=\widetilde{\R}\mathbf{Q}_{\mathbf{d}_{\A,0}}(\psi)$, leads to the arrival directions $\mathbf{d}_{\A,m},~m=1,\ldots,M$.
To determine $\psi$, we note the following (see Fig.~\ref{fig:schematic of system model with LoS and NLoS paths}):
\begin{itemize}
    \item The departure directions $\mathbf{d}_{\D,m},~m=1,\ldots,M$, determine the half-lines $\ell_{\D,m}(\p_\BS,\R_\BS,\mathbf{d}_{\D,m})=\{ \p\in \mathbb{R}^3: \p=\p_\BS + t_{\D,m} \R_\BS \mathbf{d}_{\D,m}$, $t_{\D,m} \ge 0\}$, in the global coordinate frame. 
    \item Under UE rotation represented by $\R({\psi})$, the arrival directions $\mathbf{d}_{\A,m},~m=1,\ldots,M$, determine the half-lines $\ell_{\A,m} (\p_\UE,\R(\psi),\mathbf{d}_{\A,m})=\{ \p\in \mathbb{R}^3: \p=\p_\UE + t_{\A,m} \R(\psi)\mathbf{d}_{\A,m}$, $t_{\A,m} \ge 0\}$, for any given $\psi \in [0,2\pi)$, in the global coordinate frame. See Fig.~\ref{fig:findingAngles}.
    \item With the correct $\psi$, the half-lines $\ell_{\D,m}(\p_\BS,\R_\BS,\mathbf{d}_{\D,m})$ and $\ell_{\A,m}(\p_\UE,\R(\psi),\mathbf{d}_{\A,m})$ intersect at the incidence point $\p_m$ (in the absence of noise). 
\end{itemize}
However, (i) neither $\p_\UE$ nor $\p_m$ are known; (ii)
the half-lines might not necessarily intersect due to the noisy measurements.\footnote{They form a pair of so-called \emph{skew lines}, i.e., lines in 3D that do not intersect, while not being parallel.} To tackle the first challenge, we note that the argument that the half-lines intersect under the correct value of $\psi$ is true for any scaling of the global coordinate system. 
We express $\p_\UE = \p_\BS + \rho_0 \R_\BS \mathbf{d}_{\D,0}$, where $\rho_0=\Vert\p_\BS-\p_\UE\Vert$. Any point $\mathbf{p}\in \mathbb{R}^3$ can be expressed as $\mathbf{p} = \p_\BS + \rho \R_\BS \mathbf{d}_{\D}$, with suitable $\rho$ and $\mathbf{d}_{\D}$, where $\mathbf{d}_{\D}$ is a unit-norm vector. Given any $r >0$, we can define a scaled system (with scaling $r/\rho_0$) with the BS location as an invariant point:
\begin{align}
    \p(r) & = \p_\BS + \frac{r}{\rho_0} \rho \R_\BS \mathbf{d}_{\D}, \label{eq:p_UE in scaled coordinate system2}
\end{align}
and in particular
\begin{align}
    \p_\UE(r) &  = \p_\BS + \frac{r}{\rho_0} \rho_0 \R_\BS \mathbf{d}_{\D,0}.  \label{eq:p_UE in scaled coordinate system} %\\
\end{align}
Without loss of generality, we set $r=1$ so that $\p_\UE(r=1)$ is known.  
Hence, %any point $\mathbf{p}'\in\p_\BS + \rho' \R_\BS \mathbf{d}'_{\D}$ in the original coordinate system becomes $\mathbf{p}'\in\p_\BS + \rho' \R_\BS \mathbf{d}'_{\D}$
%Having expressed the \ac{UE} position as
%\begin{align} \label{eq: expression of p_UE in %terms of the distance of BS and UE}
%    \p_\UE = \p_\BS + \rho_0 \R_\BS \mathbf{d}_{\D,0}, 
%\end{align}
%where $\rho_0=\Vert\p_\BS-\p_\UE\Vert$, facilitates definition of a scaled coordinate system as
%\begin{align} \label{eq:p_UE in scaled coordinate system}
 %   \p_\UE^\prime(\rho_0^\prime) = {\p_\UE}_{|\rho_0 \rightarrow \rho_0^\prime} = \p_\BS + \frac{\rho_0^\prime}{\rho_0} \rho_0\R_\BS \mathbf{d}_{\D,0}, 
%\end{align}
%for any $\rho_0^\prime>0$. This scaled coordinate system  characterizes auxiliary half-lines 
%$\ell_{\A,m}(\p_\UE^\prime(r),\R(\psi),\mathbf{d}_{\A,m})$ is parallel to $\ell_{\A,m}(\p_\UE,\R(\psi),\mathbf{d}_{\A,m})$, for $m=1,\ldots,M$, and thus 
in the scaled coordinate system with $r=1$, 
$\ell_{\A,m}(\p_\UE(r),\R(\psi),\mathbf{d}_{\A,m})$ and $\ell_{\D,m}(\p_\UE(r),\R(\psi),\mathbf{d}_{\D,m})$ intersect for the correct value of $\psi$, $\forall m$, in the absence of noise. 
%The choice of $\rho_0^\prime$ is arbitrary so that, without LoSs of generality, we set $\rho_0^\prime=1$, which can be interpreted as scaling the geometric  system model from Section \ref{subsec:geometric model} by $1/\rho_0$, maintaining the same \acp{AoA} and \acp{AoD} (see Fig.~\ref{fig: scaling system geometry}). 
To cope with the second challenge (measurement noise), we use the distance between skew lines in the least squares objective. In particular, the shortest distance $\delta_m(\psi)$ between the half-lines $\ell_{\D,m}(\p_\BS,\R_\BS,\mathbf{d}_{\D,m})$ and $\ell_{\A,m}(\p_\UE^\prime,\R(\psi),\mathbf{d}_{\A,m})$ is obtained from the solution of parametric optimization 
\begin{align}\label{opt:shortest distance}
    \delta_m^2(\psi) = \min_{\mathbf{t}_m\ge \mathbf{0}} D_m^2(\mathbf{t}_m,\R(\psi)),~m>0,
\end{align}
where $\mathbf{t}_m=[t_{\D,m},t_{\A,m}]\Transpose$ and $ D_m^2(\mathbf{t}_m,\R(\psi)) = \Vert (\p_\BS + t_{\D,m} \R_\BS \mathbf{d}_{\D,m}) - (\p_\UE(1) + t_{\A,m} \R(\psi) \mathbf{d}_{\A,m}) \Vert^2$. The optimization \eqref{opt:shortest distance} is a quadratic convex problem, and the solution of that is provided in Appendix \ref{appendix: solving opt for shortest distance}. Combining the minimum distances, we estimate $\psi$ as 
\begin{align} 
\label{eq:estimation of psi}
    \hat{\psi} = \arg \min_{\psi \in [0,2\pi)} \Vert\bm{\delta}(\psi)\Vert,
\end{align}
where $\bm{\delta}(\psi)=[\delta_1(\psi),\ldots,\delta_M(\psi)]\Transpose$, and accordingly, the estimate of $\R_\UE$ is given by $\hat{\R}_\UE = \R(\hat{\psi})$, 
% \begin{align} \label{eq: estimate of UE orientation}
%     \hat{\R}_\UE = \R(\hat{\psi}),
% \end{align}
with $\R(\psi)$ characterized as in \eqref{eq:characterization of rotation matrix}. %Algorithm \ref{algorithm: Initial Estimate of Rotation Matrix} summarizes the proposed method.

\subsection{\ColorBlue{Step 2: Estimation of Positions}}
We obtain the auxiliary points $\p_m(r=1)$ in the scaled coordinate system induced by \eqref{eq:p_UE in scaled coordinate system2}, as the nearest points to the half-lines $\ell_{\D,m}(\p_\BS,\R_\BS,\mathbf{d}_{\D,m})$ and $\ell_{\A,m}(\p_\UE(1),\hat{\R}_\UE,\mathbf{d}_{\A,m})$ in a least-squares sense (considering the lines from UE and BS as full lines, rather than half-lines, for computational complexity considerations), using 
\begin{subequations} \label{eq: estimate of nearest points}
\begin{align}
    \p_m(r=1) &= \mathbf{A}_m^{-1} \mathbf{b}_m, \\
    \mathbf{A}_m &= \mathbf{P}_{\perp}(\R_\BS\mathbf{d}_{\D,m})+\mathbf{P}_{\perp}(\hat{\R}_\UE\mathbf{d}_{\A,m}),\\
    \mathbf{b}_m &= \mathbf{P}_{\perp}(\R_\BS\mathbf{d}_{\D,m})\p_\BS %\\ &
    +\mathbf{P}_{\perp}(\hat{\R}_\UE\mathbf{d}_{\A,m})\p_\UE(1) %\nonumber.
\end{align}
\end{subequations}
where $\mathbf{P}_{\perp}(\mathbf{d}) \triangleq \mathbf{I} - \mathbf{d}\mathbf{d}\Transpose$
is the projector onto the subspace orthogonal to the one spanned by a vector $\mathbf{d}$ (see Appendix \ref{proof: nearest point to skew lines}). Then, we define the auxiliary departure direction $\mathbf{d}_{\D,m}^\prime$ towards $\p_m(1)$ (in \ac{BS} coordinate frame), and the corresponding distance $\rho_m^\prime$ as (see Fig.~\ref{fig: visulization of direction from BS to the estimates of incidence points positions})
\begin{subequations} \label{eq: direction and distance of nearest points}
\begin{align}
    \mathbf{d}_{\D,m}^\prime &= \R_\BS\Transpose \frac{\p_m(1)-\p_\BS}{\Vert\p_m(1)-\p_\BS\Vert},\\
    \rho_m^\prime &= \Vert\p_m(1)-\p_\BS\Vert,
\end{align}
\end{subequations}
allowing us to express $\p_m(1) = \p_\BS + \rho_m^\prime \R_\BS \mathbf{d}_{\D,m}^\prime$.
% \begin{align} \label{eq:p_m in scaled coordinate system}
%     \p_m^\prime(\rho_0^\prime) = \p_\BS + \rho_m^\prime \R_\BS \mathbf{d}_{\D,m}^\prime,
% \end{align}
%(see Fig.~\ref{fig: visulization of direction from BS to the estimates of incidence points positions}). 
What remains is to find $\rho_0$, since, with knowledge of $\rho_0$, we can scale the UE and IP positions to their correct place, keeping the BS fixed. To see this, consider triangles with the vertices $\{\p_\BS,\p_\UE(1),\p_m(1)\}$, $m>1$, and scale them by $\rho_0$, yielding
\begin{align}
    \p_m(\rho_0) & = \p_\BS + {\rho_0}\rho_m^\prime \R_\BS \mathbf{d}_{\D,m}^\prime \\
    \p_\UE(\rho_0) & = \p_\BS + \frac{\rho_0}{\rho_0} \rho_0 \R_\BS \mathbf{d}_{\D,0}.
\end{align}
%where the approximation is due to the measurement noise. 
%
% Now, let us consider triangles with the vertices $\{\p_\BS,\p_\UE(1),\p_m(1)\}$, $m=1,\ldots,M$, and scale them by $\rho_0$, effectively scaling the entire coordinate system again to its original size, immediately reco. For the second scale, we obtain
% \begin{align} \label{eq: expression of p_m in terms of the distance of BS and UE}
%     \p_m^{\prime\prime}(\rho_0) = {\p_m^\prime(\rho_0^\prime)}_{|\rho_0^\prime \rightarrow \rho_0} = \p_\BS + \frac{\rho_0}{\rho_0^\prime}\rho_m^\prime \R_\BS \mathbf{d}_{\D,m}^\prime.
% \end{align}
% This double-scaling gives the original coordinate system, where clearly
% \begin{align} 
%     \p_\UE^{\prime\prime}(\rho_0) = {\p_\UE^\prime(\rho_0^\prime)}_{|\rho_0^\prime \rightarrow \rho_0} = \p_\UE.
% \end{align}
% Hence $\p_m^{\prime\prime}(\rho_0)$ can provide estimates of incidence points $\p_m$. One should note that $\p_\UE = \p_\UE^\prime(\rho_0)$ and $\p_m=\p_m^{\prime\prime}(\rho_0)$ are true in noiseless case, and give estimates of positions, as the measurements are noisy in general.
% \begin{remark}
%     The auxiliary departure direction $\mathbf{d}_{\D,m}^\prime$ is not a real signal departure direction, and is introduced for analogy between the expressions for $\p_\UE^\prime(\rho_0^\prime)$ and $\p_m^\prime(\rho_0^\prime)$.
% \end{remark}
\begin{figure}[t!]
    \centering
    \begin{tikzpicture}
    \node(image)[anchor=south west] {\includegraphics[height=3.2cm]{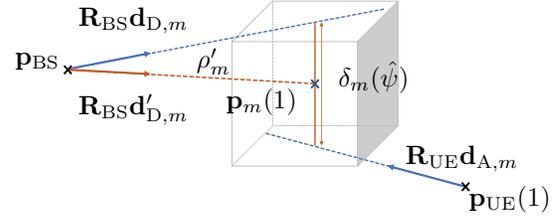}};
    \gettikzxy{(image.north east)}{\ix}{\iy};
    % \draw[help lines] (0,0) grid (\ix,\iy);
    \node at (0.3/6.5*\ix,2/3.2*\iy){$\p_\BS$};
    \node at (6.2/6.5*\ix,0.3/3.2*\iy){$\p_\UE(1)$};
    \node at (3.1/6.5*\ix,1.5/3.2*\iy){$\p_m(1)$};
    \node at (1.5/6.5*\ix,2.5/3.2*\iy){$\R_\BS\mathbf{d}_{\D,m}$};
    \node at (1.5/6.5*\ix,1.4/3.2*\iy){$\R_\BS\mathbf{d}^\prime_{\D,m}$};
    \node at (5.6/6.5*\ix,0.8/3.2*\iy){$\R_\UE\mathbf{d}_{\A,m}$};
    \node at (2.5/6.5*\ix,2/3.2*\iy){$\rho^\prime_m$};
    \node at (4.5/6.5*\ix,1.8/3.2*\iy){$\delta_m(\hat{\psi})$};
    \end{tikzpicture}
    \caption{Visualization of auxiliary direction $\mathbf{d}_{\D,m}^\prime$ and auxiliary points  $\p_\UE(1)$ and $\p_m(1)$ in the scaled geometric model.}
    \label{fig: visulization of direction from BS to the estimates of incidence points positions}
\end{figure}
To recover the value of  ${\rho}_0$, we rely on  the \ac{TDoA} measurements. Introducing $\bm{\Delta}=[\Delta_1,\ldots,\Delta_M]\Transpose$ with $\Delta_m=c(\tau_m - \tau_0)$, and $\tilde{\bm{\Delta}}(\rho_0) =[\tilde{\Delta}_1(\rho_0),\ldots,\tilde{\Delta}_M(\rho_0)]\Transpose$, where%, considering $\rho_0^\prime=1$,
\begin{align}
    \tilde{\Delta}_m(\rho_0)  & =\Vert\p_m(\rho_0)-\p_\BS\Vert+\Vert \p_\UE(\rho_0)-\p_m(\rho_0)\Vert-\rho_0 \notag \\
    & =\rho_0 (\rho_m^\prime + \Vert\mathbf{d}_{\D,0}-\rho_m^\prime\mathbf{d}_{\D,m}^\prime\Vert-1),
\end{align}
allows us to formulate 
\begin{align}\label{eq:least squares for the distance between UE and BS}
    \hat{\rho}_0 = \arg \min_{\rho_0} \Vert \bm{\Delta} - \tilde{\bm{\Delta}}(\rho_0) \Vert^2.
\end{align}
Since $\tilde{\bm{\Delta}}(\rho_0)=\rho_0\bm{\beta}$ where $\beta_m=\rho_m^\prime + \Vert\mathbf{d}_{\D,0}-\rho_m^\prime\mathbf{d}_{\D,m}^\prime\Vert-1$, the optimization \eqref{eq:least squares for the distance between UE and BS} is quadratic in $\rho_0$ and gives the closed-form solution $\hat{\rho}_0 = {\bm{\beta}\Transpose \bm{\Delta}}/{\bm{\beta}\Transpose \bm{\beta}}$. Note that $\beta_m\ge 0$ since $\tilde{\Delta}_m (\rho_0)\ge 0,~\forall\rho_0$, according to the Triangle inequality. Hence,  $\hat{\rho}_0\ge 0$, and the solution is  meaningful. The estimates of positions are then given by $\hat{\p}_\UE = \p_\UE(\hat{\rho}_0)$ and $ \hat{\p}_m = \p_m(\hat{\rho}_0)$, $m=1,\ldots,M$.
% \begin{subequations}\label{eq: estimate of positions}
% \begin{align} 
%     \hat{\p}_\UE &= \p_\UE(\hat{\rho}_0),\\
%     \hat{\p}_m &= \p_m(\hat{\rho}_0),~m=1,\ldots,M.
% \end{align}
% \end{subequations}
%Algorithm \ref{algorithm: Initial Estimate of Locations} summarizes the proposed method.

% \begin{algorithm}[t]
% \caption{Initial Estimate of Positions}
% \label{algorithm: Initial Estimate of Locations}
% \hspace*{\algorithmicindent} \textbf{Input:} $\hat{\R}_\UE,~ \hat{\tau}_m,~m=0,1,\ldots,M$.  
% \begin{algorithmic}[1]
%     \State Obtain $\p_m^\prime(\rho_0^\prime)$ using \eqref{eq: estimate of nearest points}.
%     \State Obtain $\mathbf{d}_{\D,m}^\prime$ and $\rho_m^\prime$ according to \eqref{eq: direction and distance of nearest points}. 
%     \State Find $\hat{\rho}_0$ according to \eqref{eq:estimate of the distance between UE and BS}.
%     \State Estimate $\hat{\p}_\UE$ and $\hat{\p}_m,~m=1,\ldots,M$, using \eqref{eq: estimate of positions}. 
% \end{algorithmic}
% \hspace*{\algorithmicindent} \textbf{Output:} $\hat{\p}_\UE,~ \hat{\p}_m,~m=1,\ldots,M$.
% \end{algorithm}

\subsection{Step 3: Estimation of Clock Bias}
After estimating the positions for \ac{UE} and incidence points, we estimate the clock bias as
\begin{align} \label{opt: estimation of clock bias}
    \hat{b} = \arg \min_b \Vert \bm{\tau}-\hat{\bm{\tau}}(\hat{\rho}_0)+ b\mathbf{1}_{M+1}\Vert^2,
\end{align}
with $\bm{\tau}=[\tau_0,\ldots,\tau_M]\Transpose$, $\hat{\bm{\tau}}(\hat{\rho}_0)=[\hat{\tau}_0(\hat{\rho}_0),\ldots,\hat{\tau}_M(\hat{\rho}_0)]\Transpose$, and
\begin{align}
    \hat{{\tau}}_m(\hat{\rho}_0)= \begin{cases}
        \hfil \Vert \hat{\p}_\UE - \p_\BS\Vert/c = \hat{\rho}_0/c & m=0 \\
        (\Vert\hat{\p}_m-\p_\BS\Vert+\Vert\hat{\p}_\UE-\hat{\p}_m\Vert)/c & m\neq 0,
    \end{cases}\notag
\end{align}
giving the closed-form solution  
$\hat{b}={\mathbf{1}\Transpose(\bm{\tau}-\hat{\bm{\tau}}(\hat{\rho}_0))}/{(M+1)}$.

\section{Fisher Information Analysis}% and Performance Error Bounds}
\label{sec:FIM Analysis}
In this section, we derive the \ac{FIM} of the channel parameters and the localization parameters, and obtain the error bounds for 6D localization, mapping, as well as \ac{UE} synchronization.

\subsection{\ac{FIM} of Channel Parameters}
We define the vector of channel parameters as 
\begin{align}
    \bm{\eta}_{\mathrm{ch}} \triangleq [\underbrace{\bm{\theta}_\A\Transpose,\bm{\theta}_\D\Transpose,\bm{\tau}\Transpose}_{\bm{\eta}\in \mathbb{R}^{5(M+1)}},\mathbf{h}_{\mathrm{R}}\Transpose,\mathbf{h}_{\mathrm{I}}\Transpose]\Transpose \in \mathbb{R}^{7(M+1)}, 
\end{align} 
where $\mathbf{h}_{\mathrm{R}} = [h_{\mathrm{R},0},\ldots,h_{\mathrm{R},M}]\Transpose$, and $\mathbf{h}_{\mathrm{I}} = [h_{\mathrm{I},0},\ldots,h_{\mathrm{I},M}]\Transpose$. The \ac{FIM} of $\bm{\eta}_{\mathrm{ch}}$, considering the signal model \eqref{eq: signal model}, is given by the Slepian-Bangs formula \cite[Section 3.9]{kay1993fundamentals} as
\begin{align} \label{eq: expression for FIm of channel}
    [\mathbf{J}_{\bm{\eta}_{\mathrm{ch}}}]_{i,j}=\frac{2E_\mathrm{s}}{N_0}\sum_{k=1}^K\sum_{n=1}^N \Re \left\{ \frac{\partial\widetilde{y}_{k,n}\Hermitian}{\partial [{\bm{\eta}_{\mathrm{ch}}}]_{i}}\frac{\partial\widetilde{y}_{k,n}}{\partial [{\bm{\eta}_{\mathrm{ch}}}]_{j}} \right\},
\end{align}
where $\widetilde{y}_{k,n}$ is the noise-free part of the observation $y_{k,n}$, and the gradients can be found in \cite[Appendix.~\MakeUppercase{\romannumeral 1}]{ErrorBoundsfor3DLocalization--Z.Abu-Shaban_others_G.Seco-Granados_H.Wymeersch}. Then we obtain the \ac{EFIM} of \acp{AoA}, \acp{AoD}, and \acp{ToA} as in the following:
\begin{align} \label{eq: EFIM of angles and delays}
\mathbf{J}_{\bm{\eta}} = \left[[\mathbf{J}_{\bm{\eta}_{\mathrm{ch}}}^{-1}]_{1:5(M+1),1:5(M+1)}\right]^{-1}.
\end{align}

\subsection{FIM of Localization Parameters}
Expressing $\R_\UE=[\vecR_{\UE,1},\vecR_{\UE,2},\vecR_{\UE,3}]$ with $\vecR_{\UE,1}$, $\vecR_{\UE,2}$, and $\vecR_{\UE,3}$ as columns, we recall the vector of localization unknowns as
\begin{align} \label{eq:vector of localization unknowns}
    \bm{\xi} = [\vecR\Transpose,\p_\UE\Transpose,\p_1\Transpose,\ldots,\p_M\Transpose,b]\Transpose \in \mathbb{R}^{3(M+1)+10}.
\end{align} 
where $\vecR=\mathrm{vec}(\R_\UE)=[\vecR_{\UE,1}\Transpose,\vecR_{\UE,2}\Transpose,\vecR_{\UE,3}\Transpose]\Transpose$.
Then we obtain the \ac{FIM} $\mathbf{J}_{\bm{\xi}}$ by transforming the channel parameters to localization parameters through the Jacobian matrix $\bm{\Upsilon}$ as follows \cite[Eq.~(3.30)]{kay1993fundamentals}:
\begin{align} \label{eq:unconstrained FIM}
    \mathbf{J}_{\bm{\xi}}= \bm{\Upsilon}\Transpose \mathbf{J}_{\bm{\eta}} \bm{\Upsilon},
\end{align}
where $[\bm{\Upsilon}]_{i,j} = \partial [\mathbf{J}_{\bm{\eta}}]_i / \partial [\mathbf{J}_{\bm{\xi}}]_j$, and $\mathbf{J}_{\bm{\eta}}$ is given in \eqref{eq: EFIM of angles and delays}. To obtain the elements of the transformation matrix $\bm{\Upsilon}$, we need the derivatives of channel parameters w.r.t. localization parameters, which are obtained in Appendix \ref{appendix: Partial Derivatives of AoAs, AoDs, and ToAs}. We note that ${\partial [{\bm{\eta}}]_i}/{\partial \vecR} = \vect\left({\partial [{\bm{\eta}}]_i}/{\partial \R_\UE}\right)$. 

To obtain the error bounds, we need to account for the constraint that $\R_\UE \in \text{SO}(3)$. We obtain the \ac{CCRB} \cite{CCRB_Stoica} giving the lower bound on the estimation error covariance, for any unbiased estimator subject to the required constraint on $\R_\UE$. The set of constraints to be satisfied due to orthogonality of the rotation matrix (i.e., $\R_\UE\Transpose \R_\UE=\mathbf{I}_3$) is given by
\begin{align}
    \mathbf{h}(\mathbf{\bm{\xi}}) = [& \Vert\vecR_1\Vert^2  - 1 ,  \vecR_2\Transpose \vecR_1 , \vecR_3\Transpose \vecR_1, \nonumber \\ &
    \Vert\vecR_2\Vert^2 - 1 , \vecR_2\Transpose \vecR_3 , \Vert\vecR_3\Vert^2-1]\Transpose = \mathbf{0}_{6}.
\end{align}
Considering $\mathbf{M} = \mathrm{blkdiag}(\frac{1}{\sqrt{2}}\mathbf{M}_0,\mathbf{I}_{3(M+1)+1})$ with
\begin{align} 
\mathbf{M}_0 = 
\begin{bmatrix}
\begin{array}{rrrr}
        -\vecR_3 & \mathbf{0}_3 & \vecR_2\\
        \mathbf{0}_3 & -\vecR_3 & -\vecR_1\\
        \vecR_1 & \vecR_2 & \mathbf{0}_3\\
\end{array}
\end{bmatrix} \in \mathbb{R}^{9\times3},
\end{align} 
meets $\mathbf{G}(\bm{\xi})\mathbf{M}=\mathbf{0}$ where $[\mathbf{G}(\bm{\xi})]_{i,j} = {\partial [\mathbf{h}(\bm{\xi})}]_i/{\partial [\bm{\xi}]_j}$, and gives \cite{CCRB_Stoica}
\begin{align} \label{eq:CCRBfinal}
    \mathbf{C}\CCRB_{\bm{\xi}} = \mathbf{M} (\mathbf{M}\Transpose \mathbf{J}_{\bm{\xi}} \mathbf{M})^{-1}\mathbf{M}\Transpose,
\end{align}
in which $\mathbf{J}_{\bm{\xi}}$ is given in \eqref{eq:unconstrained FIM}. Then any unbiased estimate $\hat{\bm{\xi}}$ subject to $\hat{\R}_\UE \in \text{SO}(3)$ satisfies $\mathbb{E} \big\{ (\hat{\bm{\xi}}-\bm{\xi})(\hat{\bm{\xi}}-\bm{\xi})\Transpose\big\} \succeq \mathbf{C}\CCRB_{\bm{\xi}}$, where the expectation is with respect to the noise.

Finally, we define \ac{OEB}, \ac{PEB}, \ac{IPEB}, and \ac{SEB}, which show the lower bound on the \ac{RMSE} of estimation as
\begin{alignat}{4}
    \mathrm{OEB} &= \left[\tr\left(\mathbf{C}_{\R_\UE}\right)\right]^{\nicefrac{1}{2}} &, ~
    \mathrm{PEB} &= \left[\tr\left(\mathbf{C}_{\p_\UE}\right)\right]^{\nicefrac{1}{2}} &, \notag \\
    \mathrm{IPEB} &= \ColorBlue{\left[\nicefrac{\textstyle\sum_{m=1}^M \tr\left(\mathbf{C}_{\p_m}\right)}{M}\right]^{\nicefrac{1}{2}}} &, ~
    \mathrm{SEB} &= \left[\tr\left(\mathbf{C}_{b}\right)\right]^{\nicefrac{1}{2}} &, \notag
\end{alignat}
where $\mathbf{C}_{\R_\UE}$, $\mathbf{C}_{\p_\UE}$, $\mathbf{C}_{\p_m}$, and $\mathbf{C}_{b}$ are diagonal sub-matrices in $\mathbf{C}\CCRB_{\bm{\xi}}$ corresponding to $\mathbf{r}$, $\p_\UE$, $\p_m$, and $b$, respectively. We note that $\mathbf{C}_{b}$ is a scalar equal to the variance of clock bias estimation, and that the \ac{RMSE} of estimation of $\mathbf{r}$ is equal to \smash{$\mathbb{E} \left\{ \Vert \mathbf{r} - \hat{\mathbf{r}} \Vert^2 \right\} = \mathbb{E}\big\{\Vert \R_\UE-\hat{\R}_\UE\Vert^2_{\text{F}}\big\}$}. \ColorBlue{In addition, the \ac{IPEB} (in meters) represents the \ac{RMSE} of all the \ac{IP} location estimates, and is a simplified form of the widely used GOSPA metric, from radar sensing and target tracking  \cite{rahmathullah2017generalized}.}

\section{Numerical Results}\label{section:result}
\begin{figure}[t!]
    \centering
    \begin{tikzpicture}
    \node(image)[anchor=south west] {\includegraphics[width=0.8\linewidth]{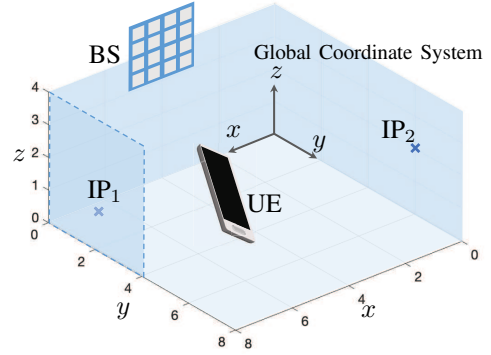}};
    \gettikzxy{(image.north east)}{\ix}{\iy};
    % \draw[help lines] (0,0) grid (\ix,\iy);
    \node at (2.2/9*\ix,5.5/7*\iy){\ac{BS}};
    \node at (4.8/9*\ix,3/7*\iy){\ac{UE}};
    \node at (2.2/9*\ix,3.2/7*\iy){\ac{IP}$_1$};
    \node at (7/9*\ix,4.2/7*\iy){\ac{IP}$_2$};
    \node at (2.5/9*\ix,1.2/7*\iy){$y$};
    \node at (6.5/9*\ix,1.2/7*\iy){$x$};
    \node at (0.8/9*\ix,3.8/7*\iy){$z$};
    \node at (4.3/9*\ix,4.2/7*\iy){$x$};
    \node at (5.7/9*\ix,4/7*\iy){$y$};
    \node at (5/9*\ix,5.2/7*\iy){$z$};
    \node (a) [align=left] at (6.5/9*\ix,5.5/7*\iy){ \footnotesize{Global Coordinate System}};
    \end{tikzpicture}\vspace{-5mm}
    \caption{The indoor scenario considered in simulations, with default parameters provided in Table \ref{table: simulation parameters}.}
    \label{fig: The Indoor Scenario}\vspace{-5mm}
\end{figure}
\subsection{Simulation Setup}
Our simulation scenario consists of an indoor environment shown in Fig.~\ref{fig: The Indoor Scenario}, where the \ac{BS} is mounted vertically. %, at the position given in Table \ref{table: simulation parameters}, and the position as well as the orientation of \ac{UE} is to be estimated. 
We employ \acp{UPA} in both BS and UE, consisting of rectangular configurations of $\sqrt{N_{\BS}} \times \sqrt{N_{\BS}}$ and $\sqrt{N_{\UE}} \times \sqrt{N_{\UE}}$ antennas, with half-wavelength inter-element spacing. Assuming the configurations in the reference orientation where the planar arrays are parallel to the global XY-axes, facilitates expressing $\bm{\Delta}_\BS$ and $\bm{\Delta}_\UE$, i.e., the matrices containing the positions of antenna elements in local coordinate frames\footnote{The antenna element located at row $i$ and column $j$ of such configuration for \ac{BS}, $1 \le i,j \le \sqrt{N_\BS}$, is $(i-1)\sqrt{N_\BS}+j$-th antenna, which is located at $\mathbf{x}_{\BS,(i-1)\sqrt{N_{\BS}}+j}= [j-(\sqrt{N_{\BS}}+1)/2,-i+(\sqrt{N_{\BS}}+1)/2,0]\Transpose d_\BS$, with $d_\BS = \lambda/2$. Similarly, we express the positions of antenna elements in \ac{UE} array.}.
For the channel model, we correspond incidence points to reflecting surfaces with reflection coefficient $\Gamma_{\mathrm{ref}}$ \footnote{For the specific \ac{IP}$_1$ and \ac{IP}$_2$, at the given positions $\p_1$ and $\p_2$, we assume $\Gamma_{\mathrm{ref},1}=0.2$ and $\Gamma_{\mathrm{ref},2}=0.8$, respectively. However, for the general \acp{IP} at random positions, $\Gamma_{\mathrm{ref}}=0.7$.}. We consider that the channel gains are proportional to the free-space path loss, with a random phase uniformly distributed in $[0,2\pi)$, and account for the radiation pattern of antennas \cite[Chapter 4]{balanis2015antenna} as follows:
\begin{align}
|h_{m}|^2 = 
    \begin{cases}
        \hfil \dfrac{\lambda^2\cos
    ^2\theta^{\el}_{\A,0}\cos^2 \theta^{\el}_{\D,0}}{(4\pi)^2 \Vert\p_\BS-\p_\UE\Vert^2} & m=0\\
        \dfrac{\lambda^2\Gamma_{\mathrm{ref}}\cos^2 \theta^{\el}_{\A,m}\cos^2 \theta^{\el}_{\D,m}}{(4\pi)^2 (\Vert\p_\BS-\p_m\Vert+\Vert\p_\UE-\p_m\Vert)^2} & m\neq0.
    \end{cases}\notag
\end{align}

The pilots are set to $x_{k,n}=\sqrt{E_s}$ and the components in precoding and combining vectors $\bm{f}_k$ and $\bm{w}_k$ are assumed to comprise unit-modulus elements with random phase, different for each \ac{OFDM} symbol.

The rotation matrices are also generated with help of Euler angles $\alpha \in [0,2\pi),~\beta \in [0,\pi),~\gamma \in [0,2\pi)$ \cite{TutorialOnSE(3)--Blanco}, using $\R = \R_z(\alpha)\R_y(\beta)\R_x(\gamma)$, where $\R_z(\alpha)$, $\R_y(\beta)$, and $\R_x(\gamma)$ are transformation matrices for counter-clockwise rotations around $z$, $y$, and $x$ axes through $\alpha$, $\beta$, and $\gamma$, respectively \cite{TutorialOnSE(3)--Blanco}. 

\begin{table}%[t!]
\scriptsize
    \centering
    \caption{Default simulation parameters. Parameters that vary are marked with $*$.}
    \label{table: simulation parameters}
    \begin{tabular}{l|c|c}
        \hline\hline 
        Parameter  & Symbol  & Value \\
        \hline
        Propagation Speed & $c$ & $3 \times 10^8$ m/s\\
        Carrier Frequency & $f_c$ & $28$ GHz \\
       % Wavelength & $\lambda$ & $c/f_c$\\
        Subcarrier Spacing & $\Delta_f$ & $120$ kHz \\
        %Bandwidth$^*$ & $B$ & $400$ MHz \\
        $\#$ Subcarriers$^*$ & $N_f$ & $3333$ \\
        $\#$ \ac{OFDM} Symbols & $K$ & $10$\\
        Transmit Power$^*$ & $P_{\mathrm{TX}}$ & $10$ dBm\\
        Noise PSD & $N_0$ & $-174$ dBm/Hz \\
        \ac{UE} Noise Figure & $n_0$ & $13$ dB\\
        \ac{BS} $\#$ Antennas$^*$ & $N_{\BS}$ & $64~(8 \times 8)$\\
        \ac{UE} $\#$ Antennas$^*$ & $N_{\UE}$ & $4 ~(2 \times 2)$\\
        %\ac{BS} Inter-element Spacing & $d_\BS$ & $\lambda/2$\\
        %\ac{UE} Inter-element Spacing & $d_\UE$ & $\lambda/2$\\
        \ac{BS} Position & $\p_{\BS}$ & $[4,0,4]\Transpose$\\
        \ac{BS} Orientation & $\R_{\BS}$ & $\R_x(-\pi/2)$\\
        \ac{UE} Position$^*$ & $\p_{\UE}$ & $[5,4,1]\Transpose$\\
        \ac{UE} Orientation$^*$ & $\R_{\UE}$ & given in \eqref{orientation of UE in simulations}\\
        \ac{IP} Positions$^*$ & $\p_m$ & given in \eqref{positions of IPs in simulations}\\
        Reflection Coefficient$^*$ & $\Gamma_{\mathrm{ref}}$ & $[0.2,0.8]~,~0.7$\\
        Clock Offset & $b$ & $100$ ns\\
        $\#$ Monte Carlo Simulations & $N_s$ & 1000\\
        \hline\hline 
    \end{tabular}
\end{table}

Table \ref{table: simulation parameters} lists all the default simulation parameters, where default orientation for \ac{UE} is either of $\R_1$ or $\R_2$ given by
\begin{align} \label{orientation of UE in simulations}
    \R_1 = \R_z(\nicefrac{\pi}{6})\R_y(\nicefrac{-\pi}{4})\R_x(\nicefrac{-\pi}{36}),\R_2 = \R_x(\nicefrac{\pi}{2}).
\end{align}
In some cases, we generate the orientation randomly. More precisely, $\alpha$, $\beta$, and $\gamma$ are generated randomly. Although we note that this is not equivalent to uniform sampling of $\mathrm{SO}(3)$, it is an easy way to evaluate different orientations. Also, in some simulations, the positions of \acp{IP} are generated randomly, while the positions of default \acp{IP} labeled in Fig.~\ref{fig: The Indoor Scenario} are given in 
% \begin{align} \label{positions of IPs in simulations}
%     \mathbf{P} = [\p_1,\ldots,\p_6] = 
%     \begin{bmatrix}
%       8 & 0 & 3    & 6 & 5.9 & 4    \\
%       2 & 3 & 8    & 6 & 5.9 & 0.05 \\
%       3 & 2 & 1.05 & 0 & 0   & 0
%     \end{bmatrix}.
% \end{align}
\begin{align} \label{positions of IPs in simulations}
    \mathbf{P} = [\p_1,\p_2] = \big[[8,2,1]\Transpose,[0,6,2]\Transpose\big].
\end{align}
The simulations are done in MATLAB $2021$b. For the optimization on manifolds, we utilize the Manopt toolbox \cite{manopt}.
%%%%%%%%%%%%%%%%%%%%%%
\showfigure{
 \begin{figure}%[t!]
     \centering
     \includegraphics[width=\linewidth]{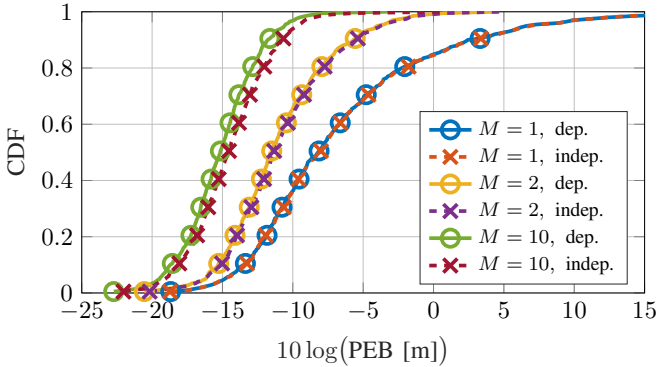}
     \caption{\ac{CDF} of \ac{PEB} with and without independence of channel parameters for a varying number of incidence points.}
     \label{fig: impact_of_dependencies} \vspace{-5mm}
 \end{figure}}
%%%%%%%%%%%%%%%%%%%%%%
\subsection{Obtaining the Likelihood Parameters}
\label{subsection: obtaining independent parameters}
To obtain the parameters $\bm{\kappa}_{\A}$, $\bm{\kappa}_{\D}$, and $\Sigma_{\bm{\tau}}$  presented in Section \ref{sec:Methodology}, considering the independence of angles and delays, we derive the covariance matrix 
\begin{align}
    \mathbf{C}_{\bm{\eta}} = \mathrm{diag}\big(\mathbf{J}_{\bm{\eta}}^{-1}\big) = \mathrm{blkdiag}(\mathbf{C}_{\bm{\theta}_\A},\mathbf{C}_{\bm{\theta}_\D},\mathbf{C}_{\bm{\tau}}),
\end{align}
where $\mathbf{C}_{\bm{\theta}_\A}$, $\mathbf{C}_{\bm{\theta}_\D}$ and $\mathbf{C}_{\bm{\tau}}$ are diagonal matrices corresponding to \acp{AoA}, \acp{AoD}, and \acp{ToA}, respectively. Furthermore, using the independence assumption, \acp{FIM} of $\bm{\theta}_\A$ and $\bm{\theta}_\D$ are given, with respect to $\bm{\kappa}_{\A}$ and $\bm{\kappa}_{\D}$, respectively, by \cite{3DOrientationEstimation--M.Nazari_G.Seco-Granados_P.Johannisson_H.Wymeersch}
\begin{subequations} \label{obtaining kappas and FIM of angles}
\begin{align}
	    \mathbf{J}_{\bm{\theta}_\A}= \mathrm{diag}\big(\bm{\kappa}_\A \odot I_1(\bm{\kappa}_\A) \oslash I_0(\bm{\kappa}_\A) \big),\\
	    \mathbf{J}_{\bm{\theta}_\D}= \mathrm{diag}\big(\bm{\kappa}_\D \odot I_1(\bm{\kappa}_\D) \oslash I_0(\bm{\kappa}_\D) \big),
\end{align}
\end{subequations}
where $I_1(\cdot)$ is the modified Bessel function of order $1$. Solving the above equations gives $\bm{\kappa}_\A$ and $\bm{\kappa}_\D$ \footnote{Specially in high \ac{SNR} regimes, the values in $\bm{\kappa}_\A$ and $\bm{\kappa}_\D$ are large, and therefore, $I_1(\bm{\kappa}_\A) \oslash I_0(\bm{\kappa}_\A) \rightarrow \mathbf{1}_{2(M+1)}$ and $I_1(\bm{\kappa}_\D) \oslash I_0(\bm{\kappa}_\D) \rightarrow \mathbf{1}_{2(M+1)}$, leading to $\bm{\kappa}_\A = \mathrm{diag}\big(\mathbf{C}_{\bm{\theta}_\A}^{-1}\big)$ and $\bm{\kappa}_\D = \mathrm{diag}\big(\mathbf{C}_{\bm{\theta}_\D}^{-1}\big)$. There are other approximations for the ratio $I_1(x) \slash I_0(x)$, for example \cite[Lemma 2]{segura2011bounds}.}. Obviously, $\Sigma_{\bm{\tau}}=\mathbf{C}_{\bm{\tau}}$. 

In order to motivate removing dependencies, we evaluate the \ac{CDF} of \ac{PEB} with and without the independence assumption, using $N_s=1000$ Monte Carlo simulations. In every simulation, we randomize $\R_\UE$ as well as $\p_\UE,~\p_1,\ldots,\p_M \in  [0,8]\times[0,8]\times[0,4]~$m$^3$. Then we obtain the \ac{PEB} in two cases: the general case using $ \mathbf{J}_{\bm{\eta}}$ in \eqref{eq:unconstrained FIM} and then \eqref{eq:CCRBfinal}; the independent case, with  $ \mathbf{J}^{\text{ind}}_{\bm{\eta}} = (\mathrm{diag}\big(\mathbf{J}_{\bm{\eta}}^{-1}\big))^{-1}$ in \eqref{eq:unconstrained FIM} and then \eqref{eq:CCRBfinal}. The \ac{CDF} curves are shown for three different numbers of \acp{IP}, i.e., $M \in \{1,2,10\}$. As seen in Fig.~\ref{fig: impact_of_dependencies}, the distribution of \ac{PEB} with the independence of channel parameters is closely following that of the general case, meaning that not only the angles and delays of different paths but also the azimuth and elevation angles of every individual path can be taken to be independent without a considerable impact. Similar observations hold for the other bounds, i.e., \ac{OEB}, \ac{IPEB}, and \ac{SEB}. Although in certain cases the performance with dependencies can differ significantly (up to $50\%$) from the independent case, the effect is limited on average.

% \begin{figure}%[t!]
% \begin{subfigure}[t]{0.45\textwidth}
%     \centering
%     \includegraphics[width=\linewidth]{results/tikz/performance_adhoc_OEB.tikz}
%   % \caption{}
%     %\label{fig: RMSE of UE orientation}
% \end{subfigure}
% \hfill
% \vspace{4mm}
% \begin{subfigure}[t]{0.45\textwidth}
%     \centering
%     \includegraphics[width=\linewidth]{results/tikz/performance_adhoc_PEB.tikz}
%   % \caption{}
%     %\label{fig: RMSE of UE position}
% \end{subfigure}
% \caption{\ac{RMSE} of \ac{UE} orientation estimation (top) and  \ac{UE} position estimation (bottom) for the ad-hoc estimator, for $M\in \{1,2\}$ and different values of search granularity $\Delta\psi\in \{\pi/10,\pi/50,\pi/100\}$, with the default $\p_\UE$, $\p_1$ (and $\p_\UE$), as well as $\R_\UE=\R_1$. The figures also include the \ac{OEB} and \ac{PEB}.} \label{fig:RMSE}
% \vspace{-4mm}
% \end{figure}
%%%%%%%%%%%%%%%%%%%%%%
\showfigure{
 \begin{figure}
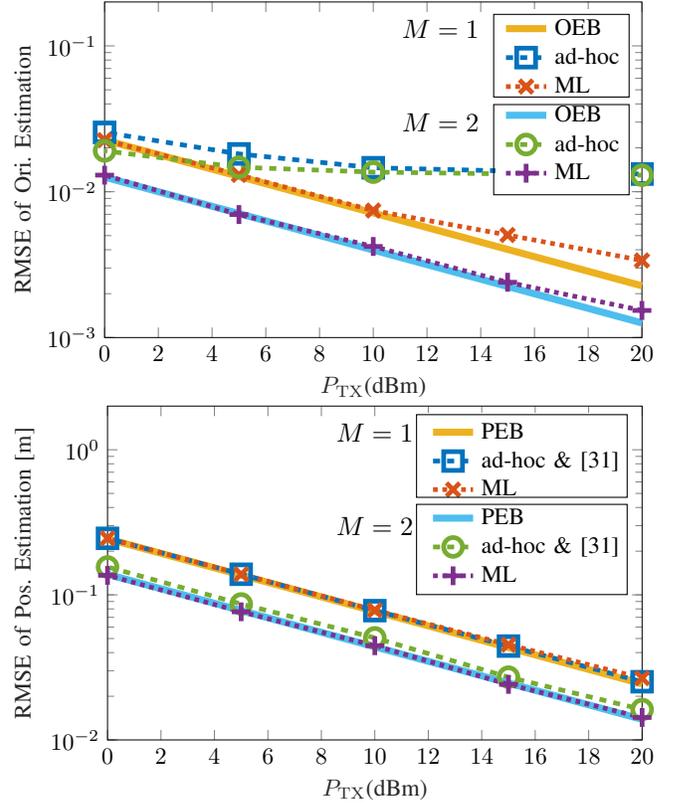
%[t!]
 \begin{subfigure}[t]{0.48\textwidth}
     \centering
     \includegraphics[width=\linewidth]{results/tikz/performance_Orientation.tikz}
   % \caption{}
     %\label{fig: RMSE of UE orientation}
 \end{subfigure}
 \hfill
 \begin{subfigure}[t]{0.48\textwidth}
     \centering
     \includegraphics[width=\linewidth]{results/tikz/performance_Positioning.tikz}
   % \caption{}
     %\label{fig: RMSE of UE position}
 \end{subfigure}
 \caption{\ColorBlue{\ac{RMSE} of \ac{UE} orientation estimation (top) and \ac{UE} position estimation (bottom) vs. $P_{\mathrm{TX}}$, for \ac{ML} and ad-hoc estimators, with search granularity $\pi/200$, for $M\in \{1,2\}$, with $\R_\UE=\R_2$ and the default $\p_\UE$, $\p_1$, and $\p_2$ (in case of $M=2$). The figures also include the \ac{OEB} and \ac{PEB}.}} \label{fig:RMSE}
 \vspace{-4mm}
 \end{figure}}
%%%%%%%%%%%%%%%%%%%%%%
\subsection{Results and Discussions}
\subsubsection{Performance Evaluation of \ac{ML} and Ad-hoc Estimators}
The performance evaluation for the proposed estimators is shown in Fig.~\ref{fig:RMSE}, where we show the \ac{RMSE} of \ac{UE} orientation and position estimation vs. the transmit power $P_{\mathrm{TX}}$, for two cases, i.e., $M=1$ with \ac{IP}$_1$, and $M=2$, with both \ac{IP}$_1$ and \ac{IP}$_2$, with the reflection coefficients $\Gamma_{\mathrm{ref}}=0.2$ for \ac{IP}$_1$, and $\Gamma_{\mathrm{ref}}=0.8$ for \ac{IP}$_2$. We observe that the performance of both estimators improves by increasing the transmit power, closely following the corresponding bounds. Specifically, the \ac{RMSE} of \ac{UE} position estimation using the proposed ad-hoc routine sees a negligible gap compared to the \ac{CRB}, for a large range of transmit powers. This of course depends on the geometry as well as the granularity of the 1-dimensional search for obtaining $\psi$. Not surprisingly, in the low \ac{SNR} regimes, the performance deviates from the bound, but in the case of $M=2$, the \ac{ML} estimator is able to reduce the gap. This gap is due to the ignorance of the distribution of measurements in the ad-hoc estimator, and it is especially pronounced in unfavorable positions of \ac{UE}, where the quality of different paths arriving at \ac{UE} are remarkably different. In moderate \ac{SNR}, the \ac{ML} estimation %tightens the possible gaps, and gives the
yields an \acp{RMSE} close to the performance bounds. In very high \ac{SNR} regimes, the performance of the ad-hoc estimator saturates due to the granularity of angle search. It is then refined using the \ac{ML}, and the gap to the bounds is substantially reduced. %, while it might not exactly touch the bounds.  
\ColorBlue{As a benchmark, we compare the positioning performance of the ad-hoc estimator to that of the only other available estimator, i.e., \cite[eq.~(15)--(16)]{Nuria}. We observe that the  proposed ad-hoc estimator achieves identical performance. The strength of \cite{Nuria} is that it does not require an estimate of the rotation matrix for estimating the \ac{UE} position, but at the same time, the one-dimensional search of our method for estimation of rotation matrix should be done anyway, to initialize the \ac{ML} algorithm.} 
The tightness of \ac{ML} estimator to the \ac{CRB}, and the negligible gap between the performance of the ad-hoc estimator and the lower bounds for a practical range, shows the efficiency of our proposed estimation algorithms. %Obviously, when two of the incidence points are present, a better performance is achieved.

%%%%%%%%%%%%%%%%%%%%%%
\showfigure{
 \begin{figure*}
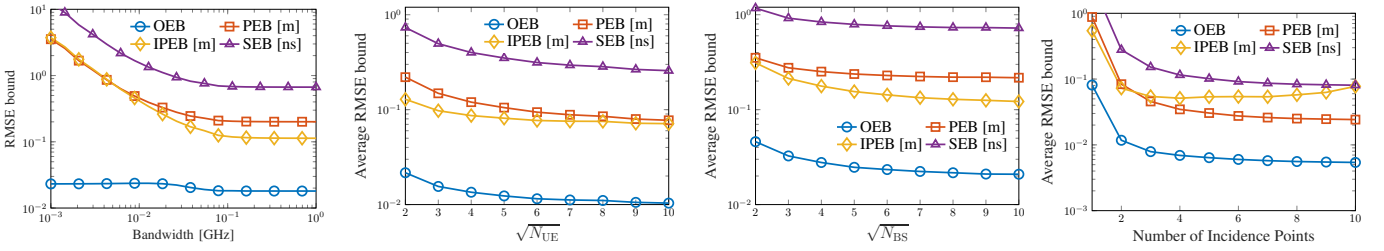
%[t!]
 \begin{subfigure}[t]{0.24\textwidth}
     \centering
     \includegraphics[width=\linewidth]{results/tikz/CRB_vs_BW.tikz}
 %    \caption{}
  %   \label{fig: impact_of_BW}
 \end{subfigure}
 \hfill
 \begin{subfigure}[t]{0.24\textwidth}
     \centering
     \includegraphics[width=\linewidth]{results/tikz/CRB_vs_N_UE.tikz}
   % \caption{}
  %   \label{fig: impact_of_N_UE}
 \end{subfigure}
 \hfill
 \begin{subfigure}[t]{0.24\textwidth}
     \centering
     \includegraphics[width=\linewidth]{results/tikz/CRB_vs_N_BS.tikz}
   %  \caption{}
 %    \label{fig: impact_of_N_BS}
 \end{subfigure}
 \begin{subfigure}[b]{0.24\textwidth}
     \centering
     \includegraphics[width=\linewidth]{results/tikz/CRB_vs_num_NL.tikz}
   %\caption{}
   % \label{fig: number of NLoS paths}
 \end{subfigure}
 \caption{\ColorBlue{Impact of bandwidth (left),  number of \ac{UE} antennas (middle-left),  number of \ac{BS} antennas (middle-right), and number of \acp{IP} (right) on (average) performance error bounds.}} \label{fig:impactOfParameters} \vspace{-5mm}
 \end{figure*}}
%%%%%%%%%%%%%%%%%%%%%%
\subsubsection{Impact of System Parameters}
In Fig.~\ref{fig:impactOfParameters}, we evaluate the impact of bandwidth, the number of antennas, and the number of \acp{IP}, using $\R_\UE=\R_1$ and the default $\p_\UE$. For evaluation of the impact of bandwidth and number of antennas, we consider one \ac{IP} at the default position $\p_1$, while we evaluate the impact of the number of \acp{IP}, in an average sense, i.e., the positions of \acp{IP} are randomized, and the average error bounds are obtained. 

As it is observed in the left plot in Fig.~\ref{fig:impactOfParameters}, increasing the bandwidth, which leads to higher \ac{ToA} accuracy and improved delay resolution, decreases the error bounds. This trend, however, saturates at some point ($\approx 100$ MHz), because further improvement is limited by the accuracy of angle measurements. The \ac{OEB} is the least benefited from the enhancement of \ac{ToA} accuracy, which makes sense since the orientation is determined mainly by angle measurements and not the delays. 

The two middle plots of  Fig.~\ref{fig:impactOfParameters},  show the performance gains achieved when the 
%On the other hand, gains when the 
angular resolution and accuracy improve %, are seen in the , showing the impact of 
thanks to an increase in the number of \ac{UE} and \ac{BS} antennas. \ac{PEB} and \ac{SEB} benefit most from additional \ac{UE} antennas, while \ac{IPEB} benefits most from additional \ac{BS} antennas. 
Since analog combining is used with a fixed number of precoders and combiners, there is no array gain, which leads again to saturation at a larger number of antennas, when there are no further resolution gains, and the performance is limited by the bandwidth.  
%It is worth mentioning that the impact of number of antennas is also evaluated on average, i.e., we carry out Monte Carlo simulations with the codebook as random parameter, and then we obtain the \ac{CRB}. This means that the error bounds improve by increasing number of antennas, on average. The reason why this is not guaranteed for each realization is the fact that precoding and combining vectors are random, leading to beamed signals in random directions with different power. 

Finally, in the right plot of Fig.~\ref{fig:impactOfParameters} we show the impact of the number of \acp{IP}. We see that 
increasing number of \acp{IP} leads to improvements in the \ac{OEB}, \ac{PEB}, and \ac{SEB}. The reduction of error bounds is especially considerable when the number of \acp{IP} changes from $1$ to $2$. The reason is that, when $M=1$, the quality of estimating \ac{AoA} or \ac{AoD} degrades in certain  positions, and this causes larger error bounds, on average. However, when another \ac{IP} is added, the probability of having both \acp{IP} at unfavorable positions reduces significantly, and the average error bounds decrease. For the \ac{IPEB}, while it decreases at the beginning with the number of incidence points, it may experience small fluctuations or even increase at larger $M$. This is due to the increase in the number of unknowns, i.e., positions of \acp{IP}. 

\showfigure{
 \begin{figure*}[t!]
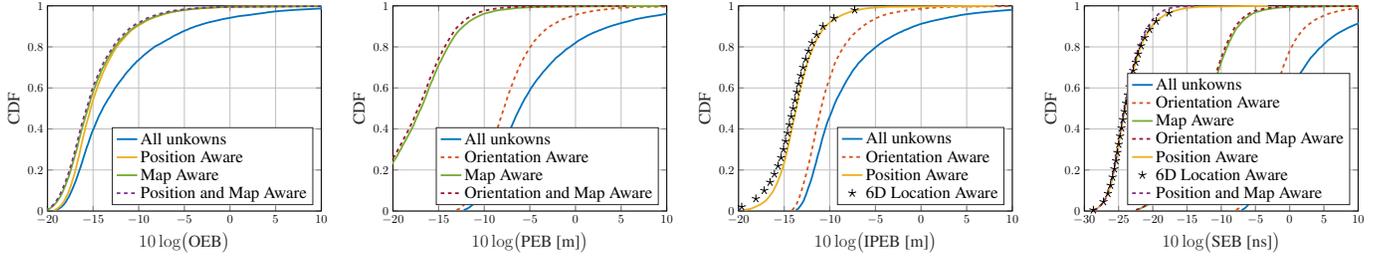

 \vspace{5mm}
 \begin{subfigure}[t]{0.24\textwidth}
     \centering
     \includegraphics[width=\linewidth]{results/tikz/OEB_sideInfo.tikz}
     %\caption{}
   % \label{fig: CDF of OEB - impact of side info}
 \end{subfigure}
 \hfill
 \begin{subfigure}[t]{0.24\textwidth}
     \centering
     \includegraphics[width=\linewidth]{results/tikz/PEB_sideInfo.tikz}
     %\caption{}
   % \label{fig: CDF of PEB - impact of side info}
 \end{subfigure}
\hfill
 \begin{subfigure}[t]{0.24\textwidth}
     \centering
     \includegraphics[width=\linewidth]{results/tikz/MEB_sideInfo.tikz}
    %  \caption{}    
    \label{fig: CDF of IPEB - impact of side info}
 \end{subfigure}
 \hfill
 \begin{subfigure}[t]{0.24\textwidth}
     \centering
     \includegraphics[width=\linewidth]{results/tikz/SEB_sideInfo.tikz}
     %\caption{}
     %\label{fig: CDF of SEB - impact of side info}
 \end{subfigure}
 \caption{\ac{CDF} of \ac{OEB} (left), \ac{PEB} (middle-left), \ac{IPEB} (middle-right), and  \ac{SEB} (right), under different levels of side-information. The position and orientation of \ac{UE} are randomized, and the default \ac{IP}$_1$ is included.} \label{fig:impactOfKnownParameters} \vspace{-5mm}
 \end{figure*}}
%%%%%%%%%%%%%%%%%%%%%%
\subsubsection{Impact of Known Parameters}
In this part, we assess the impact of known parameters, i.e., we evaluate the best achievable estimation accuracy, if some of the parameters are known. For that, we depict the \ac{CDF} of error bounds in Fig.~\ref{fig:impactOfKnownParameters}. To set up the Monte Carlo simulations, we consider only one \ac{IP} at the default position $\p_1$, while we randomize $[p_{\UE,x}, p_{\UE,y}] \in [0,8] \times [0,8]$ in the $p_{\UE,z}=1$ plane (though $p_{\UE,z}=1$  is considered unknown), as well as  the \ac{UE} orientation. The \ac{CDF} curves are shown for $N_s=10,000$ realizations. 
In terms of the \ac{OEB}, 
%As observed in Fig.~\ref{fig: CDF of OEB - impact of side info},
position knowledge of either the \ac{UE} or \ac{IP} improves the orientation accuracy, congruent with the findings from \cite{3DOrientationEstimation--M.Nazari_G.Seco-Granados_P.Johannisson_H.Wymeersch} with $2$ \acp{BS}. In terms of the \ac{PEB}, 
%
%Evaluation of the \ac{CDF} of \ac{PEB}, however, indicates that the impact of 
orientation awareness is less important than the knowledge from the mapping, i.e., the position of the incidence point. %, which escalates the performance by $\sim 7$ dB (see Fig.~\ref{fig: CDF of PEB - impact of side info}). 
Similarly, the knowledge of \ac{UE} position can help mapping, and certainly, if both $\p_\UE$ are $\R_\UE$ are known, lower \ac{IPEB} is achieved. %It seems that the impact of \ac{UE} orientation knowledge on mapping is slightly more than that of positioning, which is expected, because of connection of orientation and the reflected path through $\p_1$.
Finally, in terms of the \ac{SEB}, a variety of cases exist,
%More interesting is the observation of the \ac{CDF} of \ac{SEB}, 
with and without side information from either or some of $\R_\UE$, $\p_\UE$, and $\p_1$. As we have seen on the other bounds, knowledge of orientation is the least informative, and the UE and IP location awareness provide a large amount of information on the clock bias. 
%In contrast, \ac{UE} and \ac{IP} location information provide independently a large amount of information from the clock bias. %and knowledge of position is the most valuable information, if required to choose one side information, with the effectiveness of map-awareness in second place. 
%Including orientation as the known parameter, improves the aforementioned performances slightly, while fusing the mapping information with the knowledge of position, has the greatest impact. 
\showfigure{
\begin{figure}[t!]
\begin{subfigure}[t]{0.48\linewidth}
    \centering
    \begin{tikzpicture}
    \node(image)[anchor=south west] {\includegraphics[trim={0.5cm 0.25cm 0.5cm 0.25cm},clip,width=\linewidth]{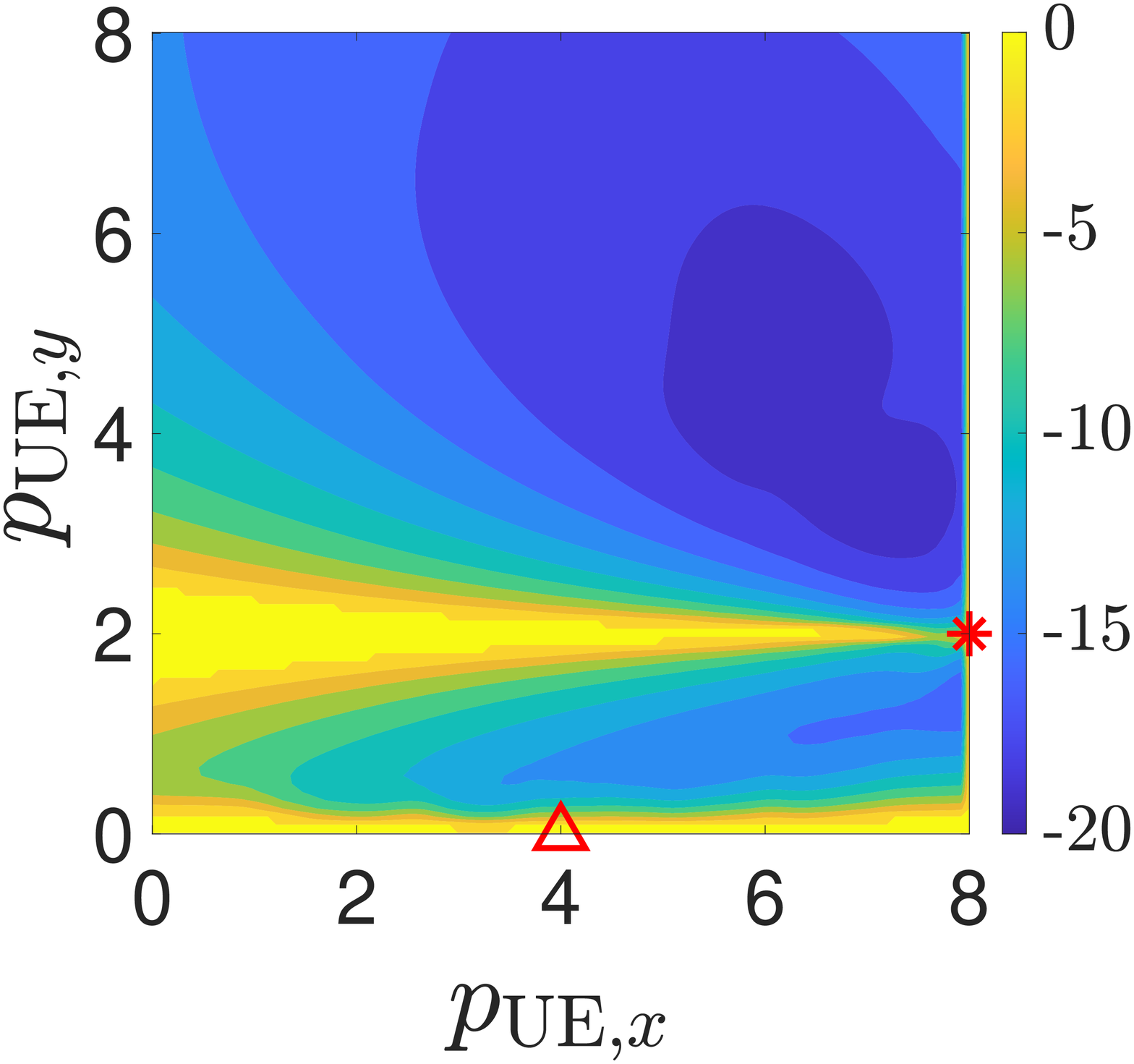}};
    \gettikzxy{(image.north east)}{\ix}{\iy};
   %  \draw[help lines] (0,0) grid (\ix,\iy);
    %\node at (2/4.5*\ix,4.5/4.5*\iy){$10 \log (\text{OEB})$};
    \node at (2.3/4.5*\ix,1.3/4.5*\iy){\color{red}\ac{BS}};
    \node at (3.5/4.5*\ix,1.7/4.5*\iy){\color{red}\ac{IP}$_1$};
    \end{tikzpicture}
    \caption{$10 \log (\text{OEB})$}
    \label{a}
\end{subfigure}
\hfill
\begin{subfigure}[t]{0.48\linewidth}
    \centering
    \begin{tikzpicture}
    \node(image)[anchor=south west] {\includegraphics[trim={0.5cm 0.25cm 0.5cm 0.25cm},clip,width=\linewidth]{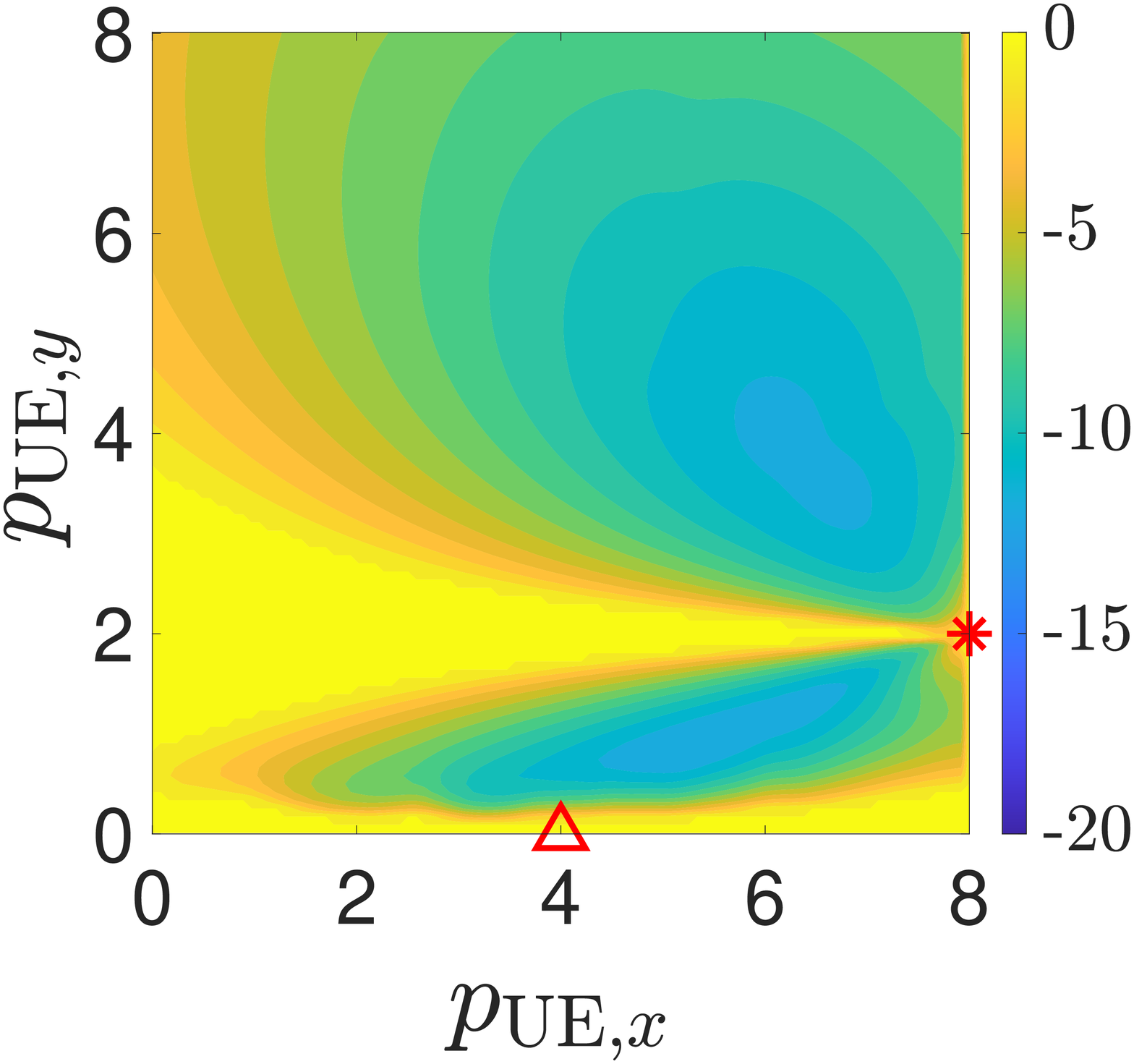}};
    \gettikzxy{(image.north east)}{\ix}{\iy};
    % \draw[help lines] (0,0) grid (\ix,\iy);
    \node at (2.3/4.5*\ix,1.3/4.5*\iy){\color{red}\ac{BS}};
    \node at (3.5/4.5*\ix,1.7/4.5*\iy){\color{red}\ac{IP}$_1$};
    \end{tikzpicture}
    \caption{$10 \log (\text{PEB})$}
    \label{b}
\end{subfigure}
\hfill
\begin{subfigure}[t]{0.48\linewidth}
    \centering
    \begin{tikzpicture}
    \node(image)[anchor=south west] {\includegraphics[trim={0.5cm 0.25cm 0.5cm 0.25cm},clip,width=\linewidth]{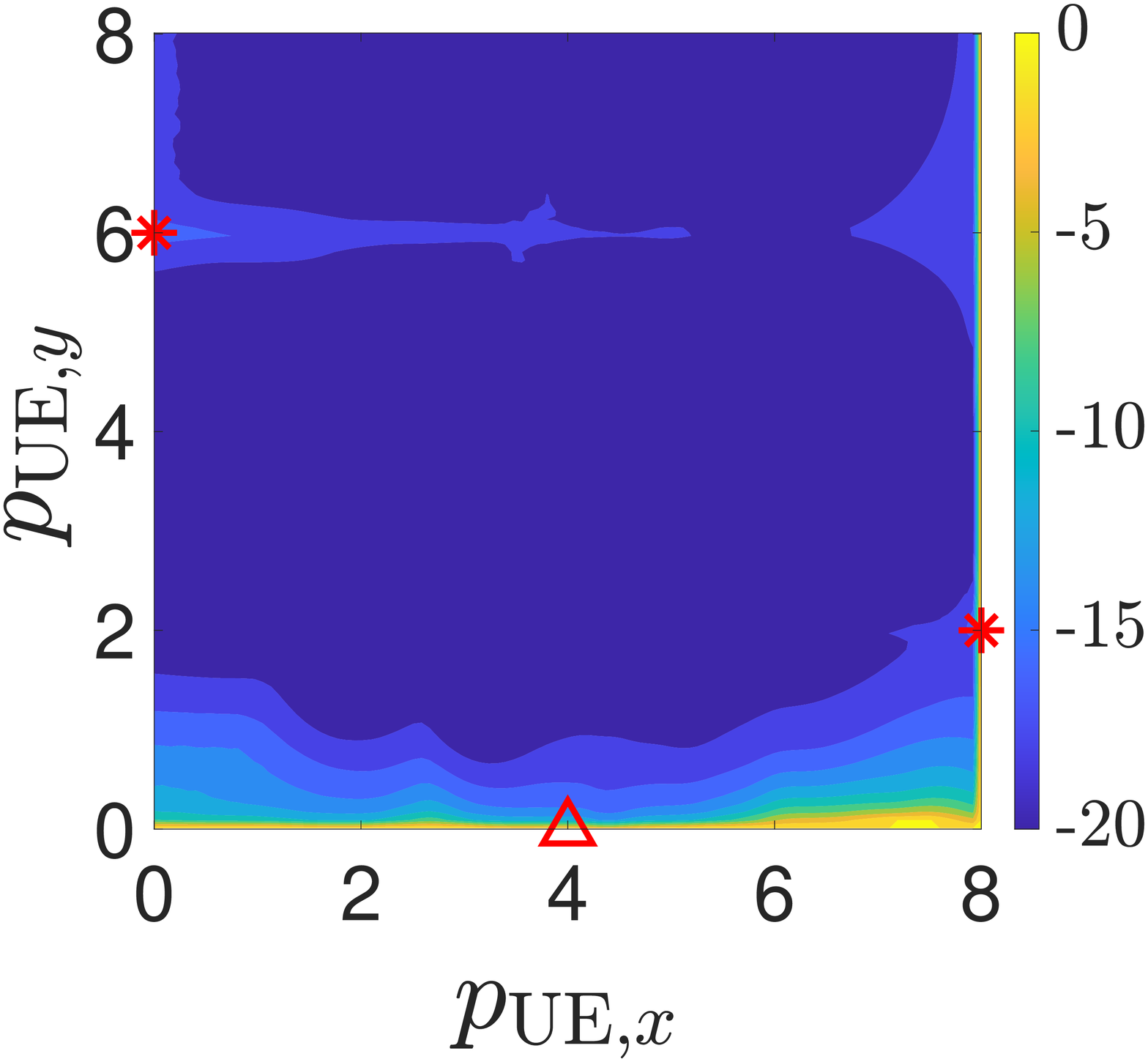}};
    \gettikzxy{(image.north east)}{\ix}{\iy};
    % \draw[help lines] (0,0) grid (\ix,\iy);
    \node at (2.3/4.5*\ix,1.3/4.5*\iy){\color{red}\ac{BS}};
    \node at (3.5/4.5*\ix,1.7/4.5*\iy){\color{red}\ac{IP}$_1$};
    \node at (1.1/4.5*\ix,3.4/4.5*\iy){\color{red}\ac{IP}$_2$};
    \end{tikzpicture}
    \caption{$10 \log (\text{OEB})$}
    \label{aa}
\end{subfigure}
\hfill
\begin{subfigure}[t]{0.48\linewidth}
    \centering
    \begin{tikzpicture}
    \node(image)[anchor=south west] {\includegraphics[trim={0.5cm 0.25cm 0.5cm 0.25cm},clip,width=\linewidth]{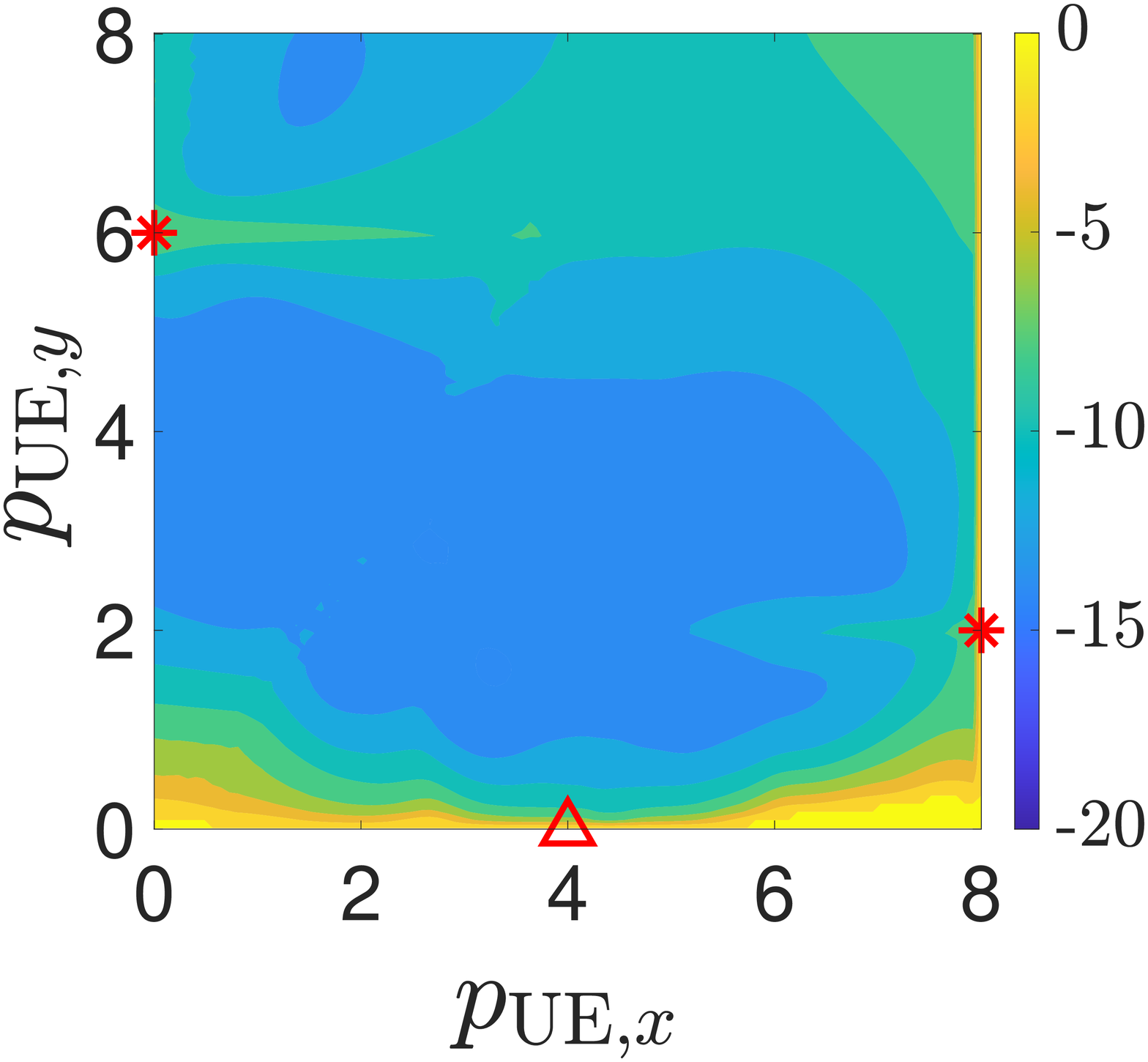}};
    \gettikzxy{(image.north east)}{\ix}{\iy};
    % \draw[help lines] (0,0) grid (\ix,\iy);
    \node at (2.3/4.5*\ix,1.3/4.5*\iy){\color{red}\ac{BS}};
    \node at (3.5/4.5*\ix,1.7/4.5*\iy){\color{red}\ac{IP}$_1$};
    \node at (1.1/4.5*\ix,3.4/4.5*\iy){\color{red}\ac{IP}$_2$};
    \end{tikzpicture}
    \caption{$10 \log (\text{PEB})$}
    \label{bb}
\end{subfigure}
% \begin{subfigure}[t]{0.48\linewidth}
%     \centering
%     \begin{tikzpicture}
%     \node(image)[anchor=south west] {\includegraphics[trim={0.5cm 0.25cm 0.5cm 0.25cm},clip,width=\linewidth]{results/eps/contour_p_UE_IP_1_IPEB.eps}};
%     \gettikzxy{(image.north east)}{\ix}{\iy};
%     % \draw[help lines] (0,0) grid (\ix,\iy);
%     \node at (2.3/4.5*\ix,1.3/4.5*\iy){\color{red}\ac{BS}};
%     \node at (3.5/4.5*\ix,1.7/4.5*\iy){\color{red}\ac{IP}$_1$};
%     \end{tikzpicture}
%     \caption{}
%     \label{c}
% \end{subfigure}
% \hfill
% \begin{subfigure}[t]{0.48\linewidth}
%     \centering
%     \begin{tikzpicture}
%     \node(image)[anchor=south west] {\includegraphics[trim={0.5cm 0.25cm 0.5cm 0.25cm},clip,width=\linewidth]{results/eps/contour_p_UE_IP_1_SEB.eps}};
%     \gettikzxy{(image.north east)}{\ix}{\iy};
%     % \draw[help lines] (0,0) grid (\ix,\iy);
%     \node at (2.3/4.5*\ix,1.3/4.5*\iy){\color{red}\ac{BS}};
%     \node at (3.5/4.5*\ix,1.7/4.5*\iy){\color{red}\ac{IP}$_1$};
%     \end{tikzpicture}
%     \caption{}
%     \label{d}
% \end{subfigure}
\caption{Contour plots of (a) \ac{OEB}, (b) \ac{PEB} [m] with \ac{IP}$_1$; (c) \ac{OEB}, (d) \ac{PEB} [m] with \ac{IP}$_1$ and  \ac{IP}$_2$, for $p_{\UE,z}=1$, $\R_\UE=\R_2$.
%
%, (c) \ac{IPEB} [m], and (d) \ac{SEB} [ns], with values larger than $1$ truncated, for $p_{\UE,z}=1$, $\R_\UE=\R_2$, as well as \ac{IP}$_1$ at default position.
}
\label{fig: contourplot for p_UE with 1 IP}\vspace{-5mm}
\end{figure}}

\showfigure{
\begin{figure}[t!]
\begin{subfigure}[t]{0.48\linewidth}
    \centering
    \begin{tikzpicture}
    \node(image)[anchor=south west] {
    \includegraphics[trim={0.5cm 0.25cm 0.5cm 0.25cm},clip,width=\linewidth]{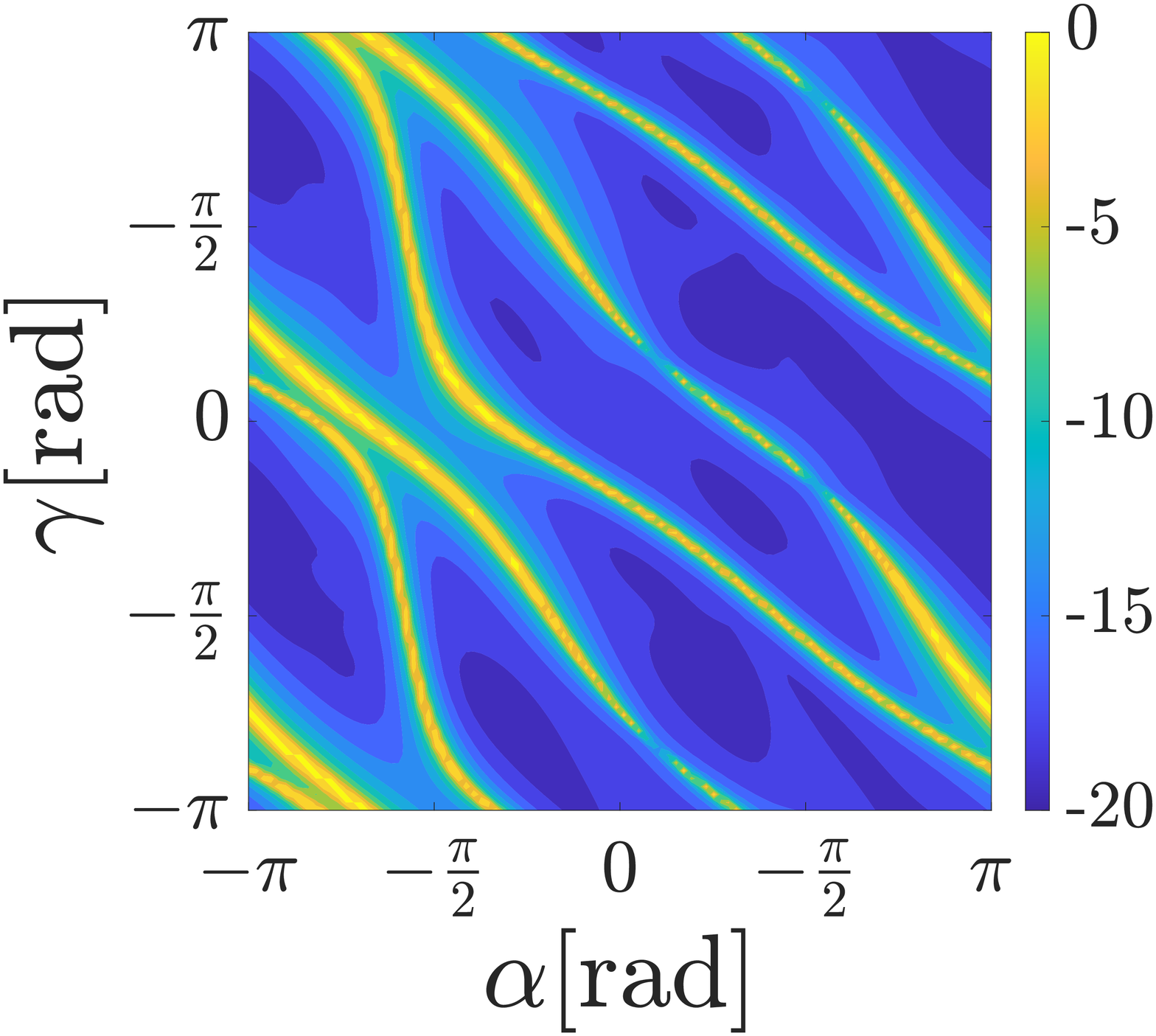}};
    \end{tikzpicture}
    \caption{$10 \log (\text{OEB})$}
    \label{aaa}
\end{subfigure}
\hfill
\begin{subfigure}[t]{0.48\linewidth}
    \centering
    \begin{tikzpicture}
    \node(image)[anchor=south west] {
    \includegraphics[trim={0.5cm 0.25cm 0.5cm 0.25cm},clip,width=\linewidth]{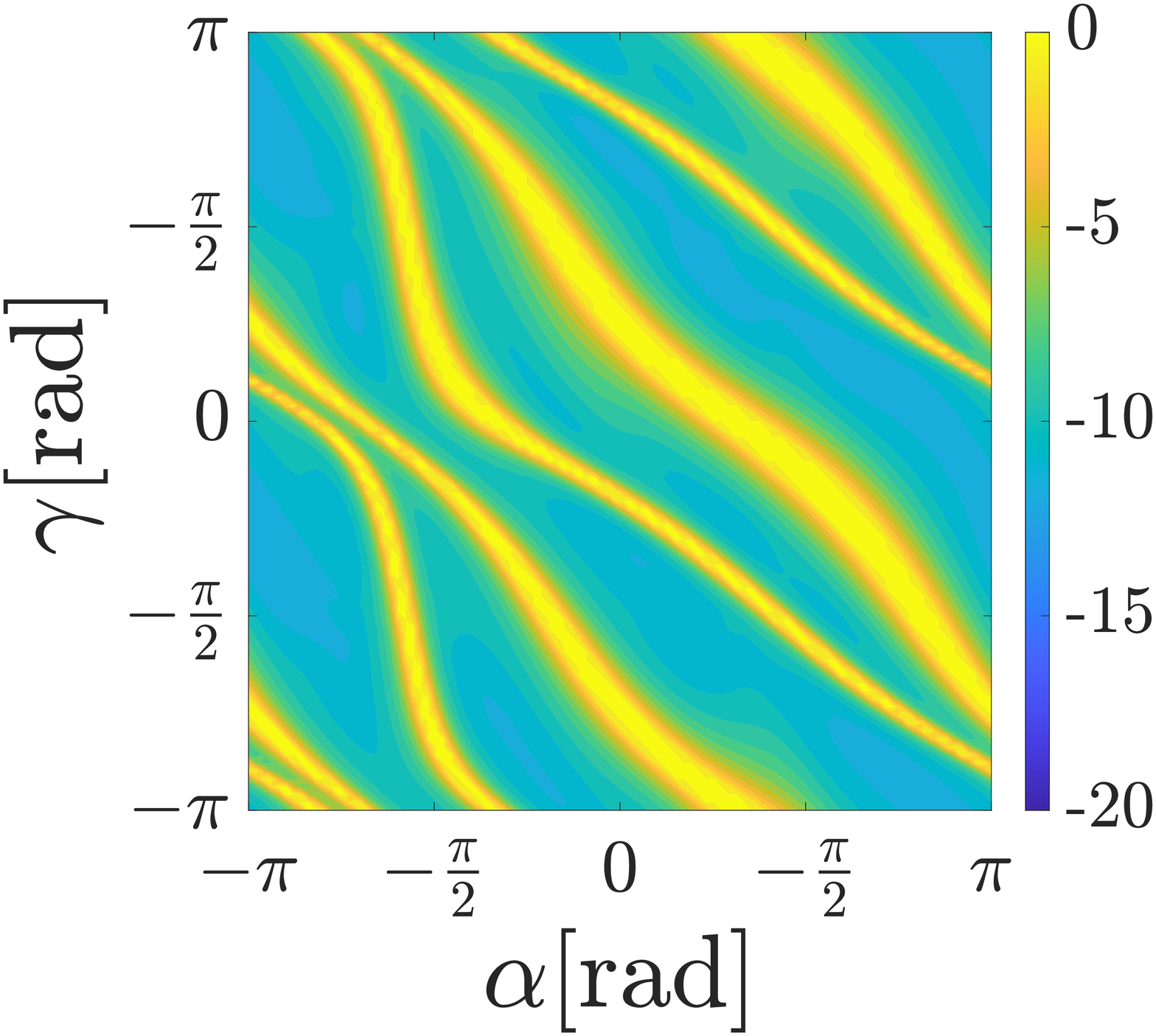}};
    \end{tikzpicture}
    \caption{$10 \log (\text{PEB})$}
    \label{bbb}
\end{subfigure}
\hfill
\begin{subfigure}[t]{0.48\linewidth}
    \centering
    \begin{tikzpicture}
    \node(image)[anchor=south west] {
    \includegraphics[trim={0.5cm 0.25cm 0.5cm 0.25cm},clip,width=\linewidth]{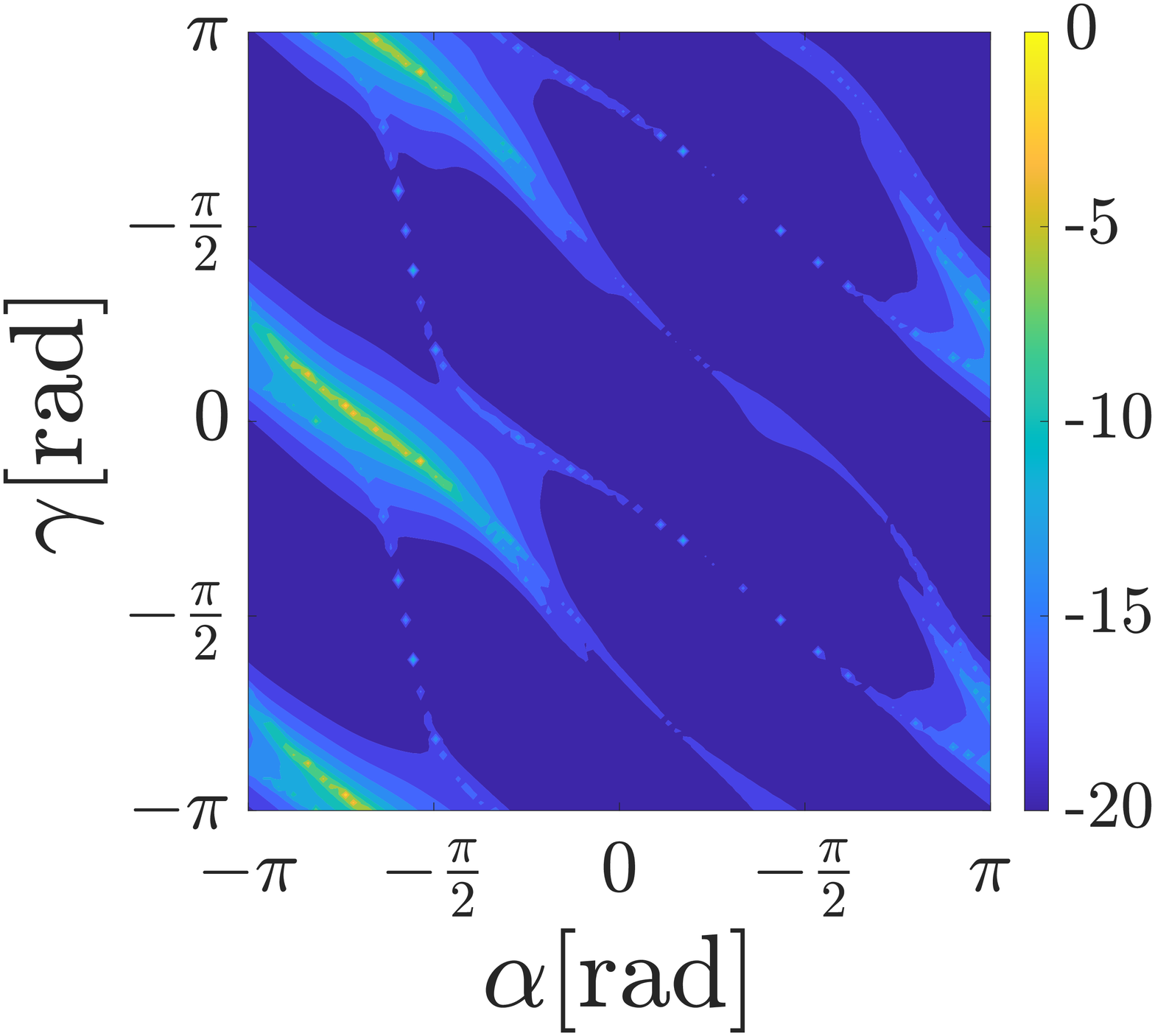}};
    \end{tikzpicture}
    \caption{$10 \log (\text{OEB})$}
    \label{aaaa}
\end{subfigure}
\hfill
\begin{subfigure}[t]{0.48\linewidth}
    \centering
    \begin{tikzpicture}
    \node(image)[anchor=south west] {
    \includegraphics[trim={0.5cm 0.25cm 0.5cm 0.25cm},clip,width=\linewidth]{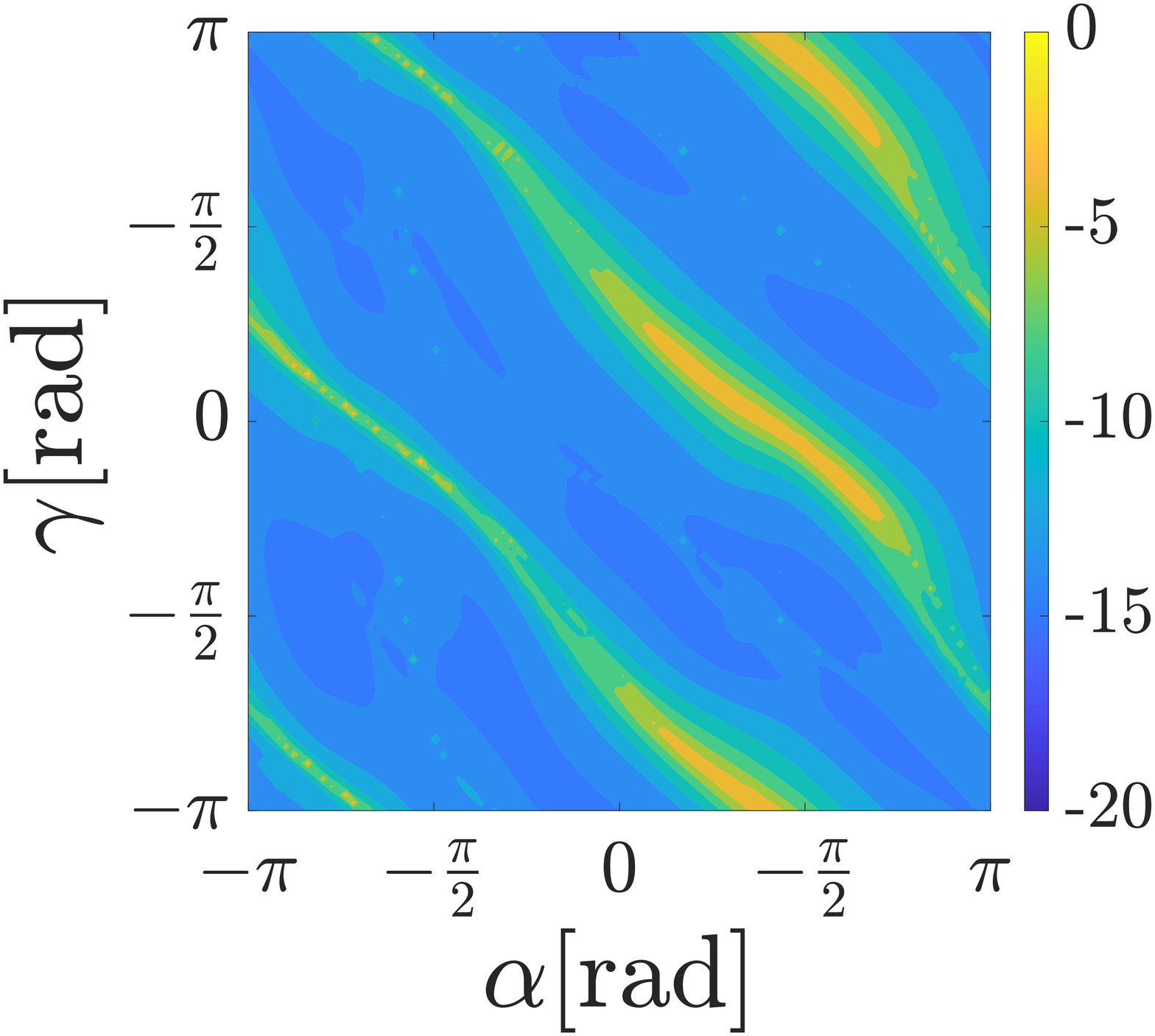}};
    \end{tikzpicture}
    \caption{$10 \log (\text{PEB})$}
    \label{bbbb}
\end{subfigure}
% \begin{subfigure}[t]{0.48\linewidth}
%     \centering
%     \begin{tikzpicture}
%     \node(image)[anchor=south west] {
%     \includegraphics[trim={0.5cm 0.25cm 0.5cm 0.25cm},clip,width=\linewidth]{results/eps/contour_R_UE_IP_1_IPEB.eps}};
%     \end{tikzpicture}
%     \caption{}
%     \label{ccc}
% \end{subfigure}
% \hfill
% \begin{subfigure}[t]{0.48\linewidth}
%     \centering
%     \begin{tikzpicture}
%     \node(image)[anchor=south west] {
%     \includegraphics[trim={0.5cm 0.25cm 0.5cm 0.25cm},clip,width=\linewidth]{results/eps/contour_R_UE_IP_1_SEB.eps}};
%     \end{tikzpicture}
%     \caption{}
%     \label{ddd}
% \end{subfigure}
\caption{Contour plots of of (a) \ac{OEB}, (b) \ac{PEB} [m] with \ac{IP}$_1$; (c)  \ac{OEB}, (d) \ac{PEB} [m] with \ac{IP}$_1$ and \ac{IP}$_2$, 
%
%, (c) \ac{IPEB} [m], and (d) \ac{SEB} [ns], with values larger than $1$ truncated, 
for $\beta=-\pi/4$.}
\label{fig: contourplot for R_UE with 1 IP} \vspace{-5mm}
\end{figure}}

\subsubsection{6D Localization Coverage}
As the last step in our simulations, we evaluate the localization coverage and performance robustness, via contour plots of \ac{OEB} and \ac{PEB}, in the region around the BS and IPs. %and position of \ac{IP}. 
Fig.~\ref{fig: contourplot for p_UE with 1 IP} shows the contour plots of error bounds, when $x$ and $y$ coordinates of \ac{UE} position are varied, while the $z$ coordinate is fixed to $1$, considering  $1$ \ac{NLoS} path (top row) or $2$ \ac{NLoS} paths (bottom row). As the \ac{UE} gets closer to the \ac{IP} (which has the lower channel gain compared to the direct path from the \ac{BS}), the quality of estimation of all parameters improves, unless if \ac{UE} approaches the $y=0$ plane resulting in $\theta_{\D,0}^{\el}$ close to $\pi/2$, which in turn strongly attenuates the \ac{LoS} path, degrades the estimation of $\bm{\theta}_{\D,0}$, and accordingly increases the error. %As the \ac{UE} is at height $z=1$ with a rotation around $z$ axis, the \ac{AoA} estimations are desirable. 
Including \ac{IP}$_2$ at the default position provides another signal source and improves the coverage. In summary, good performance is achieved close to \acp{IP}, with a graceful degradation further away. However, if the \ac{UE} should not be so close to the \ac{IP} that the \ac{NLoS} paths is no longer resolvable from the \ac{LoS} path. 
%range with more satisfactory estimation accuracy, as seen in Fig.~\ref{fig: contourplot for p_UE with 2 IP}.   

In a similar fashion, we depict the contour plots of error bounds, for a range of rotation angles of \ac{UE} in Fig.~\ref{fig: contourplot for R_UE with 1 IP}, by fixing one of the Euler angles $\beta=-\pi/4$. 
If only \ac{IP}$_1$ is present, we observe a continuous set of orientations for which the bounds are infinite. 
%some peaks in the performance error bounds, where they tend to infinity (see Fig.~\ref{fig: contourplot for R_UE with 1 IP}). 
Specifically, the orientation caused by the composition of rotations around $z$ and $x$ axes through $\alpha$ and $\gamma$ respectively, results in the received ray from either \ac{LoS} or \ac{NLoS} paths to hit the \ac{UE} antenna array on the endfire, and does not provide a high-quality estimation of either of \acp{AoA}. Subsequently, both \ac{OEB} and \ac{PEB} are affected. However, once these specific orientations change, the signal arrives in more appropriate directions, and the bounds improve.
Once a second \ac{IP} is added, the problem is non-identifiable for only a reduced set of configurations. Hence, with $M>1$ \ac{NLoS} paths, a more uniform 6D localization coverage can be achieved. 
 %It is also expected that the scenario with inclusion of both \ac{IP}$_1$ and \ac{IP}$_2$ eliminates the peaks and greatly reduces the unbounded error bounds (see Fig.~\ref{fig: contourplot for R_UE with 2 IP}).

While not shown, the performance as a function of the \ac{IP} location can also be evaluated. With a single \ac{IP}, the bounds become infinite when the \ac{IP} is on the 3D line between \ac{BS} and \ac{UE}, which is a very unlikely configuration. 
%Visualization of the achievable performance for different positions of incidence point is also advantageous. As seen in Fig.~\ref{fig: contourplot for p_IP}, where we fix the \ac{UE} at the default position with rotation around $z$ axis by $\pi/6$, and change the $x$ and $y$ coordinates of a single \ac{IP} locating in $z=3$ plane, the problem is often identifiable, unless $p_{\IP,y} \approx 0$ or $\p_{\IP} \approx [4.3,1.3,3]\Transpose$, where in the former, the estimation of $\bm{\theta}_{\D,1}$ is inferior, while in the latter, the \ac{IP} locates between \ac{BS} and \ac{UE} and provides a \ac{NLoS} path non-resolvable from the \ac{LoS}. These two cases and close locations to them aggravate the performance, while it is mostly of a high quality. 

\section{Conclusions}\label{section:conclusion}

In this paper, we considered a single \ac{BS} transmitting a \ac{mmWave} \ac{OFDM} signal and a multi-antenna \ac{UE} receiving the \ac{LoS} path and at least one resolvable \ac{NLoS} path. The objective was to solve the snapshot 6D localization problem whereby the 3D position, the 3D orientation, and the clock bias of the \ac{UE} have to be estimated, as well as the positions of the incidence points. Two estimation routines, namely \ac{ML} estimation (which is a high-dimensional non-convex optimization problem over a product of manifolds) and an ad-hoc routine, were applied and their performance was evaluated. Although the performance of the former attains the \ac{CRB}, the latter provides estimates based on geometrical arguments. These estimates closely follow the \ac{CRB} for a large transmit power range, and serve as initialization to the recursive algorithms for solving the \ac{ML} complex estimation problem. In contrast, the proposed ad-hoc solution reduces the complexity to a single 1-dimensional search over a finite interval, combined with closed-form expressions. 
After obtaining an efficient \ac{ML}-based estimator, %in assistance with the \ac{ML} able to achieve the lower bounds, clarified the path forward, to
we evaluated the impact of different parameters, such as bandwidth, number of antennas, number of \ac{NLoS} paths, etc., through evaluation of \ac{CRB}. These results indicate that at least 2 \ac{NLoS} paths are needed to render the problem identifiable for most geometric configurations.  

% \ColorBlue{Nevertheless, there are several  limitations to the current work, which should be addressed in future studies. First of all, we did not provide a solution to the obstructed \ac{LoS} scenarios. Secondly, we only considered single-bounce \ac{NLoS} paths in this work, while paths might be reflected from more than one incidence point, which adversely affects the localization performance. }

\ColorBlue{There are several possible extensions of the current work, which may be addressed in future studies. First of all, a solution to the obstructed \ac{LoS} scenarios should be developed. Secondly, besides single-bounce \ac{NLoS} paths, there may double- and multiple-bounce reflections, and the performance of localization in these conditions should be investigated.}

%The reason was that multi-bounce \ac{NLoS} paths adversely affect the localization accuracy and should be excluded from the procedure. However, proposing a more general framework with a routine to identify multi-bounce \ac{NLoS} paths is postponed to future research.}

% Although we considered a general framework, there are still several scenarios to be explored for future research, e.g., where the \ac{LoS} is blocked, or under the presence of multi-bounce \ac{NLoS} paths. 

%, each originating from several scattering points or reflection surfaces, are considered.%One could also consider co-existence of intelligent surfaces as additional steerable signal sources.

\appendices

\section{Partial Derivatives of \acp{AoA}, \acp{AoD}, and \acp{ToA} w.r.t. Localization Parameters} \label{appendix: Partial Derivatives of AoAs, AoDs, and ToAs}
\subsection{Auxiliary Variables}
We define auxiliary variables
\begin{subequations}
\begin{align}
    \mathbf{u}_{\A,m\neq0} &= \frac{\p_m-\p_\UE}{\Vert\p_m-\p_\UE\Vert}, \quad \mathbf{u}_{\A,0} = \frac{\p_\BS-\p_\UE}{\Vert\p_\BS-\p_\UE\Vert},\\
    \mathbf{u}_{\D,m\neq0} &= \frac{\p_m-\p_\BS}{\Vert\p_m-\p_\BS\Vert}, \quad
    \mathbf{u}_{\D,0} = \frac{\p_\UE-\p_\BS}{\Vert\p_\UE-\p_\BS\Vert},
\end{align}
\end{subequations}
as well as $\mathbf{u}_1 = [1,0,0]\Transpose$, $\mathbf{u}_2 = [0,1,0]\Transpose$, and $\mathbf{u}_3 = [0,0,1]\Transpose$, for later use. Considering $\R_\UE=[\vecR_{\UE,1},\vecR_{\UE,2},\vecR_{\UE,3}]$ and $\R_\BS=[\vecR_{\BS,1},\vecR_{\BS,2},\vecR_{\BS,3}]$ gives $\R_\UE\mathbf{u}_i=\vecR_{\UE,i}$ and $\R_\BS\mathbf{u}_i=\vecR_{\BS,i}$, $i=1,2,3$.

\subsection{Mathematical Identities}
The following mathematical identities are used in obtaining the derivatives: 
\begin{subequations}
\begin{align}
    &\frac{\partial}{\partial \mathbf{X}} \mathbf{a}\Transpose \mathbf{X}\Transpose \mathbf{b}= \mathbf{b} \mathbf{a}\Transpose, \label{identitiy: derivative wrt matrix}\\
    &\frac{\partial}{\partial \mathbf{x}} \mathbf{a}\Transpose \mathbf{x}= \mathbf{a}, \label{identitiy: derivative wrt vector}\\
    &\frac{\partial}{\partial \mathbf{x}} \acos (v(\mathbf{x})) = -\frac{1}{\sqrt{1-v^2(\mathbf{x})}}  \frac{\partial v(\mathbf{x})}{\partial\mathbf{x}}, \label{identity: derivative of acos}\\
    &\frac{\partial}{\partial \mathbf{x}} \atan 2({v}(\mathbf{x}),{w}(\mathbf{x})) = \frac{ {w}(\mathbf{x})\frac{\partial v(\mathbf{x})}{\partial\mathbf{x}}-{v}(\mathbf{x})\frac{\partial w(\mathbf{x})}{\partial\mathbf{x}}}{{v}^2(\mathbf{x})+{w}^2(\mathbf{x})},\label{identity: derivative of atan}\\
    &\frac{\partial}{\partial \mathbf{x}} \frac{\mathbf{x} - \mathbf{a}}{\Vert \mathbf{x} - \mathbf{a} \Vert} = \frac{\mathbf{I}}{\Vert \mathbf{x} - \mathbf{a} \Vert} - \frac{(\mathbf{x} - \mathbf{a})(\mathbf{x} - \mathbf{a})\Transpose}{\Vert \mathbf{x} - \mathbf{a} \Vert^3}, \label{identity: derivative of unit norm vectors}\\
    &\frac{\partial}{\partial \mathbf{x}} \Vert \mathbf{x} - \mathbf{a} \Vert =  \frac{\mathbf{x} - \mathbf{a}}{\Vert \mathbf{x} - \mathbf{a} \Vert}.\label{identity: derivative of norm}
\end{align}
\end{subequations}
In \eqref{identity: derivative of acos} and \eqref{identity: derivative of atan}, if the derivatives are taken with respect to a matrix, we replace  $\mathbf{x}$ by $\mathbf{X}$, and the equations still hold. 
\subsection{Reformulation of AoAs and AoDs}
Using the defined auxiliary variables, one can express 
\begin{subequations} 
\begin{alignat}{12}
    &\bm{\theta}_{\A,m} &= [&\atan 2(&\vecR_{\UE,2}\Transpose &\mathbf{u}_{\A,m} &, &\vecR_{\UE,1}\Transpose &\mathbf{u}_{\A,m}) &, &\acos (&\vecR_{\UE,3}\Transpose &\mathbf{u}_{\A,m})]\Transpose,\notag\\
    &\bm{\theta}_{\D,m} &= [&\atan 2(&\vecR_{\BS,2}\Transpose &\mathbf{u}_{\D,m} &, &\vecR_{\BS,1}\Transpose &\mathbf{u}_{\D,m})  &, &\acos (&\vecR_{\BS,3}\Transpose &\mathbf{u}_{\D,m})]\Transpose.\notag
\end{alignat}
\end{subequations}

% \begin{subequations} 
% \begin{align}
%     \theta_{\A,m}^{\el} &= \acos (\vecR_{\UE,3}\Transpose \mathbf{u}_{\A,m}),\\
%     \theta_{\A,m}^{\az} &= \atan 2(\vecR_{\UE,2}\Transpose \mathbf{u}_{\A,m},\vecR_{\UE,1}\Transpose \mathbf{u}_{\A,m}),\\
%     \theta_{\D,m}^{\el} &= \acos (\vecR_{\BS,3}\Transpose \mathbf{u}_{\D,m}),\\
%     \theta_{\D,m}^{\az} &= \atan 2(\vecR_{\BS,2}\Transpose \mathbf{u}_{\D,m},\vecR_{\BS,1}\Transpose \mathbf{u}_{\D,m}).
% \end{align}
% \end{subequations}
\subsection{Derivatives with respect to UE Rotation Matrix}
We make use of \eqref{identitiy: derivative wrt matrix}, \eqref{identity: derivative of acos}, and \eqref{identity: derivative of atan}, 
\begin{subequations}
\begin{align}
    \frac{\partial \theta_{\A,m}^{\el}}{\partial \R_\UE} &= - 
    \frac{\mathbf{u}_{\A,m}\mathbf{u}_3\Transpose}{\sqrt{1-(\vecR_{\UE,3}\Transpose \mathbf{u}_{\A,m})^2}},\\
    %*********************************
    \frac{\partial \theta_{\A,m}^{\az}}{\partial \R_\UE} &= 
    \frac{(\vecR_{\UE,1}\Transpose \mathbf{u}_{\A,m})\mathbf{u}_{\A,m}\mathbf{u}_2\Transpose - (\vecR_{\UE,2}\Transpose \mathbf{u}_{\A,m})\mathbf{u}_{\A,m}\mathbf{u}_1\Transpose}{(\vecR_{\UE,1}\Transpose \mathbf{u}_{\A,m})^2 + (\vecR_{\UE,2}\Transpose \mathbf{u}_{\A,m})^2}.
\end{align}
\end{subequations}
Note that \acp{AoD} and \acp{ToA} have no dependence on $\R_\UE$, leading to partial derivative $\mathbf{0}_{3 \times 3}$. 
% \begin{subequations}
% \begin{align}
%     {\partial \theta_{\D,m}^{\el}}/{\partial \R_\UE} &= \mathbf{0}_{3 \times 3},\\
%     {\partial \theta_{\D,m}^{\az}}/{\partial \R_\UE} &= \mathbf{0}_{3 \times 3},\\
%     {\partial \tau_{m}}/{\partial \R_\UE} &= \mathbf{0}_{3 \times 3}.
% \end{align}
% \end{subequations}
\subsection{Derivatives with respect to \ac{UE} Position}
The derivatives with respect to $\p_\UE$ are obtained using the chain rule. We make use of \eqref{identitiy: derivative wrt vector}, \eqref{identity: derivative of acos} and \eqref{identity: derivative of atan} to obtain 
% \begin{subequations}
% \begin{align}
%     \frac{\partial \theta_{\A,m}^{\el}}{\partial \p_\UE} &= \frac{\partial \mathbf{u}_{\A,m}}{\partial \p_\UE}\frac{\partial \theta_{\A,m}^{\el}}{\partial \mathbf{u}_{\A,m}},\\
%     \frac{\partial \theta_{\A,m}^{\az}}{\partial \p_\UE} &= \frac{\partial \mathbf{u}_{\A,m}}{\partial \p_\UE}\frac{\partial \theta_{\A,m}^{\az}}{\partial \mathbf{u}_{\A,m}},\\
%     \frac{\partial \theta_{\D,m}^{\el}}{\partial \p_\UE} &= \frac{\partial \mathbf{u}_{\D,m}}{\partial \p_\UE}\frac{\partial \theta_{\D,m}^{\el}}{\partial \mathbf{u}_{\D,m}},\\
%     \frac{\partial \theta_{\D,m}^{\az}}{\partial \p_\UE} &= \frac{\partial \mathbf{u}_{\D,m}}{\partial \p_\UE}\frac{\partial \theta_{\D,m}^{\az}}{\partial \mathbf{u}_{\D,m}},
% \end{align}
% \end{subequations}
% for $\forall m$, in which
\begin{subequations}
\begin{align}
    \frac{\partial \theta_{\A,m}^{\el}}{\partial \mathbf{u}_{\A,m}} &= -\frac{\vecR_{\UE,3}}{\sqrt{1-(\vecR_{\UE,3}\Transpose \mathbf{u}_{\A,m})^2}},\\
    %*********************************
    \frac{\partial \theta_{\A,m}^{\az}}{\partial \mathbf{u}_{\A,m}} &= 
    \frac{(\vecR_{\UE,1}\Transpose \mathbf{u}_{\A,m})\vecR_{\UE,2} - (\vecR_{\UE,2}\Transpose \mathbf{u}_{\A,m})\vecR_{\UE,1}}{(\vecR_{\UE,1}\Transpose \mathbf{u}_{\A,m})^2 + (\vecR_{\UE,2}\Transpose \mathbf{u}_{\A,m})^2}, \\
    %*********************************
    \frac{\partial \theta_{\D,m}^{\el}}{\partial \mathbf{u}_{\D,m}} &= -\frac{\vecR_{\BS,3}}{\sqrt{1-(\vecR_{\BS,3}\Transpose \mathbf{u}_{\D,m})^2}},\\
    %*********************************
    \frac{\partial \theta_{\D,m}^{\az}}{\partial \mathbf{u}_{\D,m}} &=
    \frac{(\vecR_{\BS,1}\Transpose \mathbf{u}_{\D,m})\vecR_{\BS,2} - (\vecR_{\BS,2}\Transpose \mathbf{u}_{\D,m})\vecR_{\BS,1}}{(\vecR_{\BS,1}\Transpose \mathbf{u}_{\D,m})^2 + (\vecR_{\BS,2}\Transpose \mathbf{u}_{\D,m})^2}, 
\end{align}
\end{subequations}
and \eqref{identity: derivative of unit norm vectors} to obtain
\begin{subequations}
\begin{align}
    {\partial \mathbf{u}_{\A,m\neq0}}/{\partial \p_\UE} &= \left(\mathbf{u}_{\A,m} \mathbf{u}_{\A,m}\Transpose -\mathbf{I}_3\right)/{\Vert \p_m - \p_\UE \Vert},\\
    {\partial \mathbf{u}_{\A,0}}/{\partial \p_\UE} &= \left(\mathbf{u}_{\A,0} \mathbf{u}_{\A,0}\Transpose - \mathbf{I}_3 \right)/{\Vert \p_\BS - \p_\UE \Vert},\\
    {\partial \mathbf{u}_{\D,0}}/{\partial \p_\UE} &= \left(\mathbf{I}_3 - \mathbf{u}_{\D,0} \mathbf{u}_{\D,0}\Transpose \right)/{\Vert \p_\UE - \p_\BS \Vert},
\end{align}
\end{subequations}
and ${\partial \mathbf{u}_{\D,m\neq0}}/{\partial \p_\UE} = \mathbf{0}_{3\times3}$. Also, considering \eqref{identity: derivative of norm} gives
\begin{align} \label{eq: derivative of ToAs wrt p_UE}
    \frac{\partial \tau_m}{\partial \p_\UE} = 
    \begin{cases}
        (\p_\UE-\p_\BS)/(c~\Vert\p_\UE-\p_\BS\Vert) & m=0\\
        (\p_\UE-\p_m)/(c~\Vert\p_\UE-\p_m\Vert) & m \neq 0
    \end{cases}.
\end{align}

\subsection{Derivatives with respect to Incidence Points Positions}
The derivatives with respect to \ac{IP} positions are also obtained using the chain rule. We note that ${\partial \mathbf{u}_{\A,m}}/{\partial \p_n}=\mathbf{0}_3,~n \neq m$, for $m=0,\ldots,M$ and $n=1,\ldots,M$, with the same case for ${\partial \mathbf{u}_{\A,m}}/{\partial \p_n}$ and ${\partial \tau_m}/{\partial \p_n}$, while for $m \neq 0$
\begin{subequations}
\begin{align}
    {\partial \mathbf{u}_{\A,m}}/{\partial \p_m} &= \left(\mathbf{I}_3 - \mathbf{u}_{\A,m} \mathbf{u}_{\A,m}\Transpose \right)/\Vert \p_m - \p_\UE \Vert, \\
    {\partial \mathbf{u}_{\D,m}}/{\partial \p_m} &= \left(\mathbf{I}_3 - \mathbf{u}_{\D,m} \mathbf{u}_{\D,m}\Transpose \right)/\Vert \p_m - \p_\BS \Vert \\
    {\partial \tau_m}/{\partial \p_m} &= (\p_m-\p_\BS)/(c~\Vert\p_m-\p_\BS\Vert) \nonumber \\ &+ (\p_m-\p_\UE)/(c~ \Vert\p_m-\p_\UE\Vert).
\end{align}
\end{subequations}

\subsection{Derivatives with respect to Clock Bias}
The angles have no dependence on $b$, while ${\partial \tau_m}/{\partial b}= 1$.

%\section{Proof of Proposition \ref{proposition: characterization of rotation matrix}} \label{proof:characterization of rotation matrix}
%Clearly, the unit-norm \ac{LoS} arrival direction $\mathbf{d}_{\A,0}$ is left invariant by the rotation matrix $\mathbf{Q}_{\mathbf{d}_{\A,0}}(\psi),~\forall \psi \in [0,2\pi)$, i.e., 
%\begin{align} \label{eq:LoS arrival direction is left invariant}
 %   \mathbf{d}_{\A,0} = \mathbf{Q}_{\mathbf{d}_{\A,0}}(\psi) \mathbf{d}_{\A,0},~\forall \psi \in [0,2\pi).
%\end{align}
%Let us re-write the equation \eqref{eq:AoD-AoA LoS condition}, using $\widetilde{\R}$ as
%\begin{align} \label{eq:AoD-AoA LoS condition using R_tilde}
 %   \R_\BS\mathbf{d}_{\D,0} = - \widetilde{\R}\mathbf{d}_{\A,0}.
%\end{align}
%Substituting \eqref{eq:LoS arrival direction is left invariant} in \eqref{eq:AoD-AoA LoS condition using R_tilde} yields:
%\begin{align} \label{eq:AoD-AoA LoS condition using R_tilde and Q}
 %   \R_\BS\mathbf{d}_{\D,0} = - \widetilde{\R} \mathbf{Q}_{\mathbf{d}_{\A,0}}(\psi) \mathbf{d}_{\A,0},~\forall \psi \in [0,2\pi].
%\end{align}
%Comparing \eqref{eq:AoD-AoA LoS condition using R_tilde and Q} with \eqref{eq:AoD-AoA LoS condition} characterizes the rotation matrix as
%\begin{align}
 %   \R(\psi) = \widetilde{\R}\mathbf{Q}_{\mathbf{d}_{\A,0}}(\psi).
%\end{align}
\section{Solving the Optimization Problem for the Shortest Distance Between Half-lines} \label{appendix: solving opt for shortest distance}

For the half-lines $\ell_1=\{\p \in \mathbb{R}^3:\p = \p_1+t_1\mathbf{d}_1,t_1\ge0\}$ and $\ell_2=\{\p \in \mathbb{R}^3:\p = \p_2+t_2\mathbf{d}_2,t_2\ge0\}$, the shortest distance $\delta_{\mathrm{min}}$ is obtained from $\delta_{\min}^2 = \min_{\boldsymbol{t}=[t_1,t_2]\Transpose} \Vert (\p_1+t_1\mathbf{d}_1) - (\p_2+t_2\mathbf{d}_2) \Vert^2,~\text{s.t.}~\boldsymbol{t} \ge \mathbf{0}_2$, which is a quadratic convex optimization problem in $\boldsymbol{t}$, and its solution is found by writing the K.K.T.~conditions \cite{boyd2004convex}. We utilize a simpler procedure in which we first obtain the unconstrained optimal solution  
\begin{subequations}\label{eq: optimal solution}
\begin{alignat}{2}
    t_1^* &= -&\mathbf{d}_1\Transpose\left(\mathbf{I} - \mathbf{d}_2 \mathbf{d}_2\Transpose \right)\p_{12}/ \left(1-(\mathbf{d}_1\Transpose\mathbf{d}_2)^2\right),\\
    t_2^* &= &\mathbf{d}_2\Transpose\left(\mathbf{I} - \mathbf{d}_1 \mathbf{d}_1\Transpose \right)\p_{12}/ \left(1-(\mathbf{d}_1\Transpose\mathbf{d}_2)^2\right),
\end{alignat}
\end{subequations}
with $\p_{12}\triangleq\p_1-\p_2$. If $\boldsymbol{t}^* > \mathbf{0}_2$, the solution is $\delta_{\min}= |\mathbf{n}\Transpose\mathbf{p}_{12}|$ where $\mathbf{n} = (\mathbf{d}_1\times\mathbf{d}_2)/( \Vert\mathbf{d}_1\times\mathbf{d}_2\Vert)$.
\begin{proof}
Substituting the optimal solution \eqref{eq: optimal solution} in $\Vert (\p_1+t_1\mathbf{d}_1) - (\p_2+t_2\mathbf{d}_2) \Vert$ gives $\delta_{\min} = \Vert \mathbf{P}_{\perp}(\mathbf{D}) \cdot \mathbf{p}_{12} \Vert$, where $\mathbf{D}\triangleq[\mathbf{d}_1,\mathbf{d}_2]$, and $\mathbf{P}_{\perp}(\mathbf{D}) \triangleq \mathbf{I}-\mathbf{D}(\mathbf{D}\Transpose\mathbf{D})^{-1}\mathbf{D}\Transpose = \mathbf{I} - \big( \mathbf{d}_1 \mathbf{d}_1\Transpose\left(\mathbf{I} - \mathbf{d}_2 \mathbf{d}_2\Transpose \right) + \mathbf{d}_2 \mathbf{d}_2\Transpose\left(\mathbf{I} - \mathbf{d}_1 \mathbf{d}_1\Transpose\right) \big)/\big( {1-(\mathbf{d}_1\Transpose\mathbf{d}_2)^2}\big)$ is the projector onto the subspace orthogonal to the one spanned by $\mathbf{d}_1$ and $\mathbf{d}_2$, which is in turn spanned by the unit-norm vector $\mathbf{n}$ normal to $\mathbf{d}_1$ and $\mathbf{d}_2$, given by $\mathbf{n} = (\mathbf{d}_1\times\mathbf{d}_2)/( \Vert\mathbf{d}_1\times\mathbf{d}_2\Vert)$. Hence, $\delta_{\min} = \Vert(\mathbf{n}\Transpose \mathbf{p}_{12})~\mathbf{n}\Vert = |\mathbf{n}\Transpose\mathbf{p}_{12}|$.
\end{proof}
\noindent Otherwise, we obtain 
$\check{t}_1 =-\mathbf{d}_1\Transpose\p_{12}$, 
$\check{t}_2 = \mathbf{d}_2\Transpose\p_{12}$, 
$\check{\lambda}_1 = \mathbf{d}_1\Transpose\mathbf{P}_{\perp}(\mathbf{d}_2)\p_{12}$, and 
$\check{\lambda}_2 =-\mathbf{d}_2\Transpose\mathbf{P}_{\perp}(\mathbf{d}_1)\p_{12}$,
where $\mathbf{P}_{\perp}(\mathbf{d}) \triangleq \mathbf{I} - \mathbf{d} \mathbf{d}\Transpose$. 
If $[\check{t}_1 ,\check{\lambda}_2]\Transpose>\mathbf{0}_2$, then $\delta_{\min}=\check{\delta}_1$, and if $[\check{t}_2 ,\check{\lambda}_1]\Transpose>\mathbf{0}_2$, then $\delta_{\min}=\check{\delta}_2$, where
$\check{\delta}_1 \triangleq (\mathbf{p}_{12}\Transpose\mathbf{P}_{\perp}(\mathbf{d}_1)\p_{12})^{1/2}$ and
$\check{\delta}_2 \triangleq (\mathbf{p}_{12}\Transpose\mathbf{P}_{\perp}(\mathbf{d}_1)\p_{12})^{1/2}$
are both non-negative, due to the Cauchy–Schwarz inequality. Otherwise, $\delta_{\min}= \Vert\p_{12}\Vert$. Obtaining expressions is straightforward. As the unconstrained solution $\boldsymbol{t}^*$ often satisfies the constraints, especially in high \ac{SNR} regimes, this approach is more efficient.

\section{Proof of the Closed-From Expression for the Closest Point to Skew Lines} \label{proof: nearest point to skew lines}
We determine $\p_0$ to be mutually closest to the half-lines $\ell_1=\{\p \in \mathbb{R}^3:\p = \p_1+t_1\mathbf{d}_1,t_1\ge0\}$ and $\ell_2=\{\p \in \mathbb{R}^3:\p = \p_2+t_2\mathbf{d}_2,t_2\ge0\}$, in a least-squares sense, so that $\mathrm{d}^2(\p_0,\ell_1)+\mathrm{d}^2(\p_0,\ell_2)$, with  $\mathrm{d}(\p_0,\ell_i)$ denoting the distance of $\p_0$ to $\ell_i,~i=1,2$, is minimized. According to Pythagorean theorem, $\mathrm{d}^2(\p_0,\ell_i) = \Vert\p_0-\p_i\Vert^2-\left( (\p_0-\p_i)\Transpose\mathbf{d}_i\right)^2$, where $(\p_0-\p_i)\Transpose\mathbf{d}_i$ is the projection of $(\p_0-\p_i)$ on line $\ell_i$. Taking gradient of $\mathrm{d}^2(\p_0,\ell_1)+\mathrm{d}^2(\p_0,\ell_2)$ with respect to $\p_0$ and setting it to $\mathbf{0}$ results in $\p_0 = \mathbf{A}^{-1} \mathbf{b}$, where $\mathbf{A}=(\mathbf{I}-\mathbf{d}_1\mathbf{d}_1\Transpose)+(\mathbf{I}-\mathbf{d}_2\mathbf{d}_2\Transpose)$ and $\mathbf{b}=(\mathbf{I}-\mathbf{d}_1\mathbf{d}_1\Transpose)\p_1+(\mathbf{I}-\mathbf{d}_2\mathbf{d}_2\Transpose)\p_2$. Setting $\p_1=\p_\BS$, $\p_2=\p_\UE(1)$, $\mathbf{d}_1=\R_\BS\mathbf{d}_{\D,m}$, and $\mathbf{d}_2=\hat{\R}_\UE\mathbf{d}_{\A,m}$ gives \eqref{eq: estimate of nearest points}.

\balance 
\bibliographystyle{IEEEtran}
\bibliography{IEEEabrv,references} 
\end{document}